\definecolor{royalBlue}{HTML}{057DCD}
\definecolor{darkGreen}{HTML}{2E8B57}
\definecolor{mgreen}{RGB}{160, 200, 140}
\def\[#1\]{\begin{align*}#1\end{align*}}
\newcommand{\OFTRL}{\texttt{OFTRL}\xspace}
\newcommand{\FTRL}{\texttt{FTRL}\xspace}
\newcommand{\LRLOFTRL}{\texttt{LRL-OFTRL}\xspace}
\newcommand{\ours}{\texttt{DLRC-OMWU}\xspace}
\newcommand{\kours}{\texttt{KDLRC-OMWU}\xspace}
\newcommand{\oursgeneral}{\texttt{DLRC-OFTRL}\xspace}
\newcommand{\Opthedge}{\texttt{OMWU}\xspace}
\newcommand{\Hedge}{\texttt{MWU}\xspace}
\newcounter{qst}
\crefname{qst}{Question}{Questions}
\numberwithin{equation}{section}
\def\ind{k}
\DeclareMathOperator{\reg}{Reg}
\DeclareMathOperator{\tildereg}{{\tilde R}eg}
\newcommand{\regdep}{\mathfrak{R}}
\DeclareMathOperator{\polylog}{polylog}
\newcommand{\defeq}{\coloneqq}
\renewcommand{\^}[1]{^{(#1)}}
\newcommand{\bbR}{\mathbb{R}}
\newcommand{\bbN}{\mathbb{N}}
\renewcommand{\vec}[1]{\bm{#1}}
\newcommand{\vstack}[2]{\begin{pmatrix} #1 \\ #2 \end{pmatrix}}
\let\hat\widehat
\let\tilde\widetilde
\DeclareMathOperator*{\dprime}{\prime \prime}
\newcommand{\argmin}{\mathop{\mathrm{arg\,min}}}
\newcommand{\argmax}{\mathop{\mathrm{arg\,max}}}
\newcommand{\kl}[2]{\textup{KL}(#1 \| #2)}
\newcommand{\ent}[1]{\textup{H}(#1)}
\newcommand{\expect}{\mathbb{E}}
\newcommand{\ut}{\vec{u}}
\newcommand{\Ut}{\vec{U}}
\newcommand{\nut}{\vec{\nu}}
\newcommand{\mut}{\vec{\mu}}
\newcommand{\at}{\vec {\mathsf{r}}}
\newcommand{\aprimet}{\vec{\mathsf{r}}^\prime}
\newcommand{\Nut}{\vec{ \mathcal{V}}}
\newcommand{\At}{\vec {\mathcal{R}}}
\newcommand{\vebar}{\overline{\vec{e}}}
\newcommand{\vb}{\vec{b}}
\newcommand{\vv}{\vec{v}}
\newcommand{\vchi}{\vec{\chi}}
\newcommand{\vx}{\vec{x}}
\newcommand{\vy}{\vec{y}}
\newcommand{\vz}{\vec{z}}
\newcommand{\vtheta}{\vec{\theta}}
\newcommand{\summ}[1]{\Lambda (#1)}
\newtheorem{theorem}{Theorem}
\newtheorem{lemma}[theorem]{Lemma}
\newtheorem{proposition}[theorem]{Proposition}
\newtheorem{corollary}[theorem]{Corollary}
\newenvironment{restate}[1]{%
  \addtocounter{theorem}{-1}%
  \begin{theorem}
    }{%
  \end{theorem}
  \addtocounter{theorem}{1}%
}
\newenvironment{restatelemma}[1]{%
  \addtocounter{lemma}{-1}%
  \begin{lemma}
    }{%
  \end{lemma}
  \addtocounter{lemma}{1}%
}
\newenvironment{restateproposition}[1]{%
  \addtocounter{proposition}{-1}%
  \begin{proposition}
    }{%
  \end{proposition}
  \addtocounter{proposition}{1}%
}
\theoremstyle{definition}
\newtheorem{definition}[theorem]{Definition}
\newtheorem{assumption}[theorem]{Assumption}
\newtheorem{observation}[theorem]{Observation}
\newtheorem{remark}[theorem]{Remark}
\numberwithin{theorem}{section}
\numberwithin{observation}{section}
\numberwithin{lemma}{section}
\numberwithin{proposition}{section}
\numberwithin{corollary}{section}
\numberwithin{remark}{section}
\NewDocumentCommand{\numberthis}{om}{%
  \IfNoValueTF{#1}{%
    \refstepcounter{equation}\tag{\theequation}%
  }{%
    \tag{#1}%
  }%
  \label{#2}%
}
\definecolor{darkgrey}{gray}{0.3}
\definecolor{commentcolor}{gray}{0.5}
\crefname{algocf}{Algorithm}{Algorithms}
\title{Faster Rates for No-Regret Learning in General Games\\ via Cautious Optimism}
\author[1]{Ashkan Soleymani}
\author[2]{Georgios Piliouras}
\author[1]{Gabriele Farina}
\affil[1]{MIT EECS, \texttt{\{ashkanso,gfarina\}@mit.edu}}
\affil[2]{Google DeepMind, \texttt{gpil@google.com}}
\date{}
\begin{document}

\pagenumbering{gobble} 

\maketitle

\begin{abstract}
    We establish the first uncoupled learning algorithm that attains $O(n \log^2 d \log T)$ per-player regret in multi-player general-sum games, where $n$ is the number of players, $d$ is the number of actions available to each player, and $T$ is the number of repetitions of the game. Our results exponentially improve the dependence on $d$ compared to the $O(n\, d \log T)$ regret attainable by Log-Regularized Lifted Optimistic FTRL~\citep{farina2022near}, and also reduce the dependence on the number of iterations $T$ from $\log^4 T$ to $\log T$ compared to Optimistic Hedge, the previously well-studied algorithm with $O(n \log d \log^4 T)$ regret~\citep{daskalakis2021near}. Our algorithm is obtained by combining the classic Optimistic Multiplicative Weights Update (OMWU) with an adaptive, non-monotonic learning rate that paces the learning process of the players, making them more cautious when their regret becomes too negative.
\end{abstract}

\vspace{1cm}
\tableofcontents
\newpage
\pagenumbering{arabic} 
\section{Introduction}

The study of how multiple interacting agents learn to adapt their strategies is a well-established problem with significant foundations in game theory, online optimization, control theory, economics, and behavioral sciences~\citep{Cesa-Bianchi06:Prediction,Nisan:2007:AGT:1296179,marden2015game,gintis2000game,camerer2011behavioral}. This area has gained increased importance with the advent of machine learning, where multi-agent games are integral to many fundamental architectures and applications~\citep{goodfellow2014generative,silver2017mastering,moravvcik2017deepstack,brown2019superhuman,bighashdel2024policy}. Despite the intuitive appeal of simple uncoupled learning algorithms that globally converge to Nash equilibria in all games, strong impossibility results demonstrate that such algorithms do not exist—even in the classic and relatively constrained context of normal-form games~\citep{Hart03:Uncoupled,milionis2023impossibility}. From the perspective of centralized algorithms, \citet{daskalakis2006complexity} took a step further by linking the computational complexity of finding a Nash equilibrium in games to solving Brouwer's fixed-point problem, showing that computing a Nash equilibrium in polynomial time is impossible unless PPAD = P, suggesting the problem is inherently computationally difficult.

These roadblocks have inspired the pursuit of alternative solution concepts, with the notion of \textit{regret minimization} being arguably the most influential and well-studied among them~\citep{Shalev-Shwartz12:Online,hazan2016introduction}. Originally defined within the context of single-agent optimization, regret measures the difference between the accumulated performance of an algorithm and that of the best fixed action in hindsight, under uncertain and possibly adversarial realizations of payoffs. Regret minimization in general games is sufficient to establish the time-average convergence of the empirical distribution of play to the set of Coarse Correlated Equilibria (CCE). This set is a relaxation of the Nash equilibrium concept—motivated by the intractability of Nash equilibrium—and possesses several desirable properties, including approximate optimality of the resulting social welfare, connections to Nash and correlated equilibria, broad applicability across a diverse range of games, and flexibility in strategy coordination~\citep{Blum06:Routing,nadav2010no,Roughgarden15:Intrinsic,cai2016zero,monnot2017limits,roughgarden2017price}. Beyond convergence to CCE in game settings, regret minimization is recognized as a fundamental concept with wide-ranging applications, including learning and generalization~\citep{littlestone1988learning,blum1990learning,lugosi2023online}, von Neumann’s minimax theorem and Blackwell approachability~\citep{freund1996game,abernethy2011blackwell}, boosting~\citep{freund1995desicion,freund1996game}, combinatorial optimization~\citep{plotkin1995fast,arora2007combinatorial}, complexity theory~\citep{klivans1999boosting,barak2009uniform}, differential privacy~\citep{hardt2010multiplicative,hsu2013differential}, prediction markets~\citep{chen2010new,abernethy2013efficient}, evolutionary dynamics~\citep{chastain2013multiplicative}, and more.

In the adversarial case, it is classically known that simple algorithms such as Hedge/Multiplicative Weights Update (MWU) suffice to establish the optimal regret rate of $O(T^{1/2})$~\citep{Freund97:decision}. Nevertheless, an adversarial framework is often overly pessimistic for applications in game theory, as it may yield suboptimal results in more favorable and predictable environments. This is particularly evident in the context of learning in games, where player interactions are nonstationary but typically evolve slowly. Consequently, determining the tightest possible bounds for no-regret learning in general games remains a fundamental and unresolved problem. We enumerate major previous attempts in this line of work in \Cref{sec:prev_me}.

By the \emph{Optimism} framework of \citet{Syrgkanis15:Fast}, there exist algorithms for which the sum of players' regrets in self-play remains constant over time. Unfortunately, controlling individual regrets—necessary for convergence to Coarse Correlated Equilibria (CCE)—is \emph{much harder}, and despite nearly a decade of noteworthy progress, the optimal regret rate remains elusive. Intuitively, we seek to avoid agents with runaway negative regret. Our main high-level idea is to adapt optimism into a form of \emph{Cautious Optimism}, where agents decrease their learning rates when their regret becomes too negative—that is, when they significantly outperform all fixed actions. Surprisingly, we show that it is possible to carefully apply this germ of an idea to achieve new state-of-the-art regret bounds.

\subsection{Overview of the Main Result and Techniques} \label{sec:cont}

We establish the first \emph{uncoupled} no-regret learning algorithm that attains $O(n \log^2 d \log T)$ regret in multi-player general-sum games. Our algorithm is best understood as an Optimistic Multiplicative Weights Update (\Opthedge) paired with a \textit{dynamic} learning rate that is adjusted based on the regret accumulated by the learner. For this reason, we coin our method \emph{Dynamic Learning Rate Control OMWU (\ours)}. A key characteristic that sets our approach apart from standard adaptive learning-rate techniques is that our dynamic control does not produce monotonically decreasing learning rates. Instead, the goal of our dynamic learning rate is to properly pace the learner—\emph{slowing it down when it is performing too well}—that is, when its maximum regret becomes too negative. We achieve this by solving an optimization problem at each iteration to determine the learning rate, based on the player's current regret vector.

The idea of agents differentiating their behavior depending on whether they feel content or discontent is both simple and intuitive, and has inspired other game dynamics that provably concentrate around pure Nash equilibria~\citep{young2009learning}, as well as adjusted replicator dynamics in evolutionary game theory~\citep{weibull1997evolutionary}. In the context of regret minimization, this idea can be traced back to the work of \citet{bowling2002multiagent}, who introduced the Win or Learn Fast (WoLF) principle. This principle increases the learning rate of agents when they are losing, thereby resulting in quicker adaptation to the environment and to the strategies of other agents. \citet{bowling2002multiagent} demonstrated the convergence of gradient ascent-descent with WoLF to a Nash equilibrium in the restricted case of two players with two actions. \citet{bowling2004convergence} extended WoLF to multiplayer settings, showing it achieves no-regret dynamics with $O(d \sqrt{T})$ regret. Despite the success of WoLF and related heuristics in small-scale games~\citep{abdallah2008multiagent,banerjee2003adaptive,bloembergen2015evolutionary,kaisers2009evolutionary,leonardos2022exploration}, their theoretical guarantees in general games remain unclear.

On the negative side, it has recently been shown that the multiplicative weights update algorithm, even when equipped with a continuous learning rate (rather than the original fast and slow rates of WoLF~\citep{bowling2002multiagent}), exhibits chaotic behavior in nonatomic congestion games~\citep{vlatakis2023chaos}. While our motivation for adaptive learning rates and our methodology differ from the WoLF principle, to our knowledge, \ours is the \emph{first algorithm to demonstrate theoretical benefits of such ideas in regret minimization for general games}.

The regret guarantees of \ours exponentially improve the dependence on $d$ in the $O(n d \log T)$ regret attained by Log-Regularized Lifted Optimistic FTRL~\citep{farina2022near}. Compared to the regret analysis of \Opthedge in \citet{daskalakis2021near}, we improve the regret bound in several aspects. Not only do we reduce the dependence on the time horizon $T$ from $\log^4 T$ to $\log T$, but our guarantees also hold across all regimes of $n$, $d$, and $T$, whereas theirs require $T > C n \log^4 T$ for some constant $C$ (see \Cref{sec:regret_analysis}, and Lemmas 4.2 and C.4 in~\citep{daskalakis2021near}). Additionally, as a minor note, the regret analysis of \Opthedge~\citep{daskalakis2021near} hides extremely large constants in the asymptotic notation, especially in comparison to our analysis.

\paragraph{Technical Contributions.} 
The technical novelties of this work, which lead to fast no-regret learning rates, are multifaceted and can be briefly summarized as follows.

First, as discussed previously, we conceptualize the idea of learning rate control for no-regret learning (see \Cref{sec:loga,sec:design_dynamic_learning}) and formalize it to design optimization algorithms for learning rate control, resulting in \ours, a computationally efficient algorithm (see \Cref{sec:loga}). The concept of dynamic learning rates has the potential to be beneficial in other areas involving regret minimization, particularly in multi-agent settings.

\begin{table}[t]%
     \newcommand{\ldarrow}{\raisebox{-.7mm}{\tikz \draw[->] (0,0) -- (.25,0) -- +(0, -.2);}}%
    \scalebox{0.9}{%
    \begin{tabular}{@{}>{\arraybackslash}m{4.0cm} >{\arraybackslash}m{4.0cm} >{\arraybackslash}m{5.3cm} c@{}} %
        \bf Method                                            & \bf Games' Regret     & \bf Cost per Iteration  & \bf Adversarial Regret  \\
        \toprule
        OFTRL / OMD\newline\citep{Syrgkanis15:Fast}            & $O(\sqrt{n}\, \regdep T^{1/4})$   & Regularizer- \& oracle-dependent &  $\Tilde{O}(\sqrt{T \log d})$ \\
        \midrule
        OMWU\newline\citep{Chen20:Hedging}\!                             & $O(n \log^{5/6} d \: T^{1/6})\: \dagger $                     & $O(d) $  &    $\Tilde{O}(\sqrt{T \log d})$ \\ 
        \midrule
        OMWU\newline\citep{daskalakis2021near}\!                             & $O(n \log d \log^4 T)$                     & $O(d)$   &   $\Tilde{O}(\sqrt{T \log d})$      \\
        \midrule
        Clairvoyant MWU\newline\citep{Piliouras22:Optimal}              & $O(n \log d)$\newline for a subsequence only~$\ddagger$      & $O(d)$   & No guarantees                                                               \\
        \midrule
          LRL-OFTRL\newline\citep{farina2022near}              &  $O(n \hspace{0.5 mm} d  \log T)$     & $O(d \log \log T)$  &   $\Tilde{O}(\sqrt{T \log d})$                                                                            \\
        \midrule
          \ours\newline\textbf{[This paper]}             &  $O(n \log^2 d \log T)$                  &       $O(d \log \log T)$   &   $\Tilde{O}(\sqrt{T \log d})$                                                             \\
       \bottomrule
    \end{tabular}
    }
    \caption{Comparison of prior results on minimizing external regret in general games. For simplicity, we omit dependencies on the smoothness and range of the utility functions. We use $n$ to denote the number of players, $T$ the number of repetitions of the game, and $d$ the number of actions. $\regdep$ denotes the maximum value attained by the regularizer. $\dagger$ Limited to two-player games only ($n = 2$). $\ddagger$ Unlike all other algorithms, the full sequence of iterates produced by Clairvoyant MWU (CMWU) is not known to achieve sublinear regret. Instead, after running CMWU for $T$ iterations, only a smaller subsequence of length $\Theta(T / \log T)$ is known to attain the regret stated in the table.}
    \label{table:results}
\end{table}

Secondly, using analysis techniques based on the properties of self-concordant functions, we establish strong \emph{multiplicative stability} results for the resulting learning rate, even when the actions themselves are not known to be multiplicatively stable (see \Cref{sec:learning_rate_control_problem,sec:proof_temp_adjust}). This approach enables predictability of the dynamics beyond the constant learning rate setting. Previously, it was unclear how to design learning dynamics that evolve smoothly while simultaneously adapting to changes induced by the learning processes of other agents.

Thirdly, we demonstrate that combining the Optimistic Multiplicative Weights Update algorithm with our learning rate control can be viewed as an instantiation of optimistic Follow-the-Regularized-Leader (FTRL) with a novel regularizer $\psi$. We further explore equivalent viewpoints of \ours, revealing connections to the Lifted Optimistic FTRL algorithm proposed by \citet{farina2022near}, albeit with a different regularizer (see \Cref{sec:design_dynamic_learning,sec:equivalent_viewpoints}), where the FTRL dynamics are executed over the lifted space.

Fourthly, we introduce a novel regularizer, $\psi$, whose spectral properties lie between those of the logarithmic and entropic regularizers. We establish \emph{strong spectral properties} for this regularizer, including strong convexity and high curvature (see \Cref{sec:proof_sketch,sec:spectral_psi}). This approach allows us to depart from the traditional methodology of using intrinsic norms induced by the Hessian matrix, which is prevalent in the online learning literature. We note that the regularizer $\psi$ may be of independent interest in future studies.

Lastly, it is important to mention that, in \Cref{sec:kernel}, inspired by Kernelized \Opthedge~\citep{farina2022kernelized}, we introduce a kernelized version of \ours, (denoted as \kours), extended to convex $0$/$1$-polyhedral games such as extensive-form games or flows on directed graphs. This way, we demonstrate that \ours inherits the fundamental and intriguing properties of \Opthedge.

\smallskip

While we adopt the intriguing nonnegative RVU property idea from Log-Regularized Lifted \OFTRL (\LRLOFTRL)~\citep{farina2022near} (see \Cref{sec:regret_analysis,sec:pos_regret}), our construction and proof technique differ significantly. The analysis in \LRLOFTRL relies heavily on the intrinsic norm of the logarithmic regularizer to ensure multiplicative stability of the iterates. In contrast, our approach does not depend on multiplicative stability or intrinsic norms. Instead, it involves a more nuanced analysis, focusing on the dynamic learning rate control optimization problem and the strong spectral properties of our specifically chosen regularizer.

\subsection{Prior Work on Regularized Learning in General Games} \label{sec:prev_me}

Starting with the seminal paper of \citet{Syrgkanis15:Fast}, which established the first $o(T^{1/2})$ regret bound for self-play in general games, a race was initiated to provide the tightest possible bounds in this setting. In particular, \citet{Syrgkanis15:Fast} identified the \emph{RVU property}, an adversarial regret bound applicable to a broad class of so-called \emph{Optimistic} variants of standard no-regret learning algorithms, such as Follow-the-Regularized-Leader and Mirror Descent. Using this property, they demonstrated that the individual regret of each player grows as $O(T^{1/4})$.

In a more recent breakthrough, \citet{daskalakis2021near} exponentially improved the regret guarantees for Optimistic MWU (OMWU), achieving the first polylogarithmic in $T$ regret bound of $O(n \log d \log^4 T)$, where $n$ is the number of players and $d$ is the number of actions per player. Their analysis relies heavily on discrete-time Fourier transforms and the smoothness of higher-order discrete differentials of the learning dynamics in the frequency domain.

Clairvoyant MWU (CMWU) advanced the concept of prediction even further~\citep{Piliouras22:Optimal}, albeit at the cost of creating an algorithm that does not minimize regret in adversarial settings. The key intuition is that, given a perfect prediction of future payoff sequences, a bounded total payoff can be easily achieved by applying a Be-the-Leader type of algorithm. Although such approaches are not feasible in standard online learning, agents can implement such a sequence of play in game settings via uncoupled learning algorithms. The full sequence of iterates produced by CMWU does not achieve sublinear regret. Instead, after running CMWU for $T$ iterations, only a smaller subsequence of length $\Theta(T / \log T)$ maintains a bounded total regret of $O(n \log d)$. Combining these results, the effective convergence rate toward CCE is $O\left(\frac{n \log d \log T}{T}\right)$. For a more detailed discussion of how CMWU differs from other algorithms in this line of work, see \Cref{sec:related_work}.

In the last major step prior to our work, \citet{farina2022near} achieved logarithmic dependence on $T$ over the entire history of play via the Log-Regularized Lifted Optimistic FTRL algorithm. Their learning dynamics are based on an instantiation of Optimistic Follow-the-Regularized-Leader over an appropriately \emph{lifted} space using a logarithmic regularizer. Interestingly, their approach generalizes beyond normal-form games to encompass general convex games. However, this analysis comes at the cost of exponentially worse dependence on the number of actions $d$, with an overall regret guarantee of $O(n d \log T)$.

\Cref{table:results} provides a summary of prior work aimed at establishing optimal regret bounds for no-regret learners in games.

\subsection{Additional Related Work} \label{sec:related_work}

The evolution of regret minimization in games has seen significant breakthroughs, beginning with the pioneering work of \citet{Daskalakis11:Near} on zero-sum games. They developed \emph{strongly uncoupled} learning dynamics that achieve a regret growth rate of $O(\log T)$. This milestone was further refined by \citet{Rakhlin13:Optimization}, who introduced \emph{Optimistic Mirror Descent} (OMD), simplifying the implementation while maintaining robust performance. The practical benefits of recency bias in OMD have also been substantiated in behavioral economics~\citep{Fudenberg14:Recency}.

Beyond their near-optimal performance, an additional advantage of optimistic learning algorithms is their incorporation of a straightforward and intuitive recency bias, which aligns well with real-world behavioral biases~\citep{nevo2012surprise,erev2013learning}. Each agent optimistically assumes that all other agents will play tomorrow exactly as they did today, and then applies their online learning algorithm, incorporating this extrapolation step into the decision-making process. This alignment with behavioral data has two key benefits. First, it suggests that the resulting theoretical analysis may offer insights into human behavior in everyday interactions. Second, it highlights the potential for behavioral science to inspire algorithmic modifications with even stronger performance guarantees.

\citet{farina2022kernelized} extended these theoretical advancements beyond normal-form games, generalizing the $O(\polylog(T))$ regret bounds to polyhedral games, which encompass extensive-form games. The introduction of \emph{no-swap-regret} dynamics has also led to fast convergence to correlated equilibria in normal-form games~\citep{Chen20:Hedging,Anagnostides21:Near,Anagnostides22:Uncoupled}. \citet{wei_more_2018} further advanced the field by leveraging optimism to achieve adaptive regret bounds in bandit settings.

The focus on \emph{last-iterate} convergence has brought renewed interest, with significant contributions highlighting the performance of Optimistic Mirror Descent (OMD) algorithms~\citep{daskalakis_last-iterate_2019,mertikopoulos2019optimistic,golowich_tight_2020,lei_last_2021,hsieh_adaptive_2021,wei_linear_2021,azizian_last-iterate_2021}. These works examine the conditions under which OMD algorithms achieve favorable iterate convergence properties.

Parallel to these developments, extensive research has examined the dynamics of learning algorithms in games. This includes studies that do not incorporate optimism and often reveal complex behaviors such as divergence, recurrence, or chaos~\citep{hart_uncoupled_2003,daskalakis2010learning,kleinberg_beyond_2011,balcan_weighted_2012,Entropy18,bailey_multiplicative_2018,mertikopoulos_cycles_2018,bailey_fast_2019,cheung_vortices_2019,vlatakis_no-regret_2020,bailey2020finite,cheung2020chaos}. These findings underscore the intricate and often unstable nature of learning dynamics in strategic environments, highlighting both the challenges and opportunities that remain in this vibrant area of research.

\paragraph{On Clairvoyant MWU.} As a specific uncoupled learning algorithm, Clairvoyant MWU (CMWU) was introduced by \citet{Piliouras22:Optimal} to compute the Coarse Correlated Equilibrium (CCE) of a normal-form game. It is known that, given a perfect prediction of future payoff sequences, a bounded total regret of $O(n \log d)$ can be easily achieved by applying a Be-the-Leader type algorithm. Although such approaches are not feasible in online learning, in self-play settings—under certain predetermined agreements among the players—it is possible to implement a sequence of play that is informed by hindsight of the game’s next action. 

\citet{farina2022clairvoyant} showed that CMWU is equivalent to Nemirovski’s Conceptual Proximal Method~\citep{nemirovski2004prox}. Given the proximal operator of the game (in this setting, the online mirror descent optimization step of all players combined), if the players play the fixed point of this operator, they are guaranteed to achieve constant regret. In the setting of normal-form games, this operator is proven to be monotone for a sufficiently small choice of learning rate~\citep{Piliouras22:Optimal,farina2022clairvoyant}. Thus, its fixed point can be computed—either in a centralized or decentralized manner.

The key idea in \citet{Piliouras22:Optimal} is to use self-play communication as a broadcast channel, allowing players to collaboratively find this fixed point by broadcasting their updates through in-game actions. In this setup, for $T - O(\log T)$ iterations, players are merely communicating indirectly. Only once every predetermined $O(\log T)$ rounds do they play the approximate fixed point they have collectively computed. Consequently, only $O(\log T)$ actual gameplay iterations occur, and the empirical distribution over these predetermined rounds converges to the CCE of the game, resulting in an effective convergence rate of $O\left(\frac{n \log d \log T}{T}\right)$.

While this approach yields faster rates, it comes with two caveats:  
(i) it does not guarantee sublinear regret in adversarial settings or even if a single player deviates slightly;  
(ii) players must have prior agreements on how to self-play and must strictly adhere to the protocol. For instance, if one player begins a round early or late, the communication process fails, as the true gameplay during the $O(\log T)$ periods becomes mixed with the fixed-point computation steps of other players. Due to these issues, this algorithm is not applicable to large-scale games in real-world applications.

\section{Preliminaries}
\label{section:background}

For any $d \in \bbN$, we denote the set $\{1, 2, \dots, d\}$ by $[d]$. We use bold letters to denote vectors, for example, $\vx \in \bbR^d$. We let $\vx[r]$ denote the $r$-th coordinate of the vector $\vx$, for any $r \in [d]$. In our notation, players are generally indicated by subscripts, which we omit whenever the results involve a generic player and are clear from context, to avoid overloading the notation. The time index, represented by the variable $t$, is indicated using superscripts. For example, $\vx_i\^t$ denotes the action played by player $i$ at time $t$. We denote the set of distributions supported on a finite set $\mathcal{A}$ of size $d = |\mathcal{A}|$ by $\Delta^d \defeq \Delta(\mathcal{A})$. For given vectors $\vx$ and $\ut$, we denote their inner product by $\langle \vx, \ut \rangle$. For any vector $\vx \in \bbR^d$, we write $\summ{\vx} \defeq \sum_{\ind = 1}^{d} \vx[\ind]$ for the sum of its elements. We also define the negative entropy by $\ent{\vx} = \sum_{\ind = 1}^{d} \vx[\ind] \log \vx[\ind]$ and the Kullback–Leibler (KL) divergence by $\kl{\vx}{\vx'} = \sum_{\ind = 1}^{d} \vx[\ind] \log \frac{\vx[\ind]}{\vx'[\ind]}$. Lastly, we use the notation $\vec{1}_d$ to denote the vector $[1, 1, \dots, 1] \in \bbR^d$.

\paragraph{General-sum Games.} We consider $n$-player general-sum games and denote the set of players by $[n]$. In a normal-form game, each player $i \in [n]$ has a finite and nonempty set of deterministic strategies $\mathcal{A}_i$. The set of mixed strategies for player~$i$ is given by the probability simplex over $\mathcal{A}_i$, denoted $\mathcal{X}_i = \Delta(\mathcal{A}_i)$. The joint action space of the players is then $\bigtimes_{j=1}^n \mathcal{A}_j$. Each player~$i$ is associated with a utility function $\mathcal{U}_i : \bigtimes_{j=1}^n \mathcal{A}_j \to \bbR$ defined over joint deterministic strategies. Let the expected utility of player~$i$ under a joint mixed strategy profile $\vx = (\vx_1, \vx_2, \dots, \vx_n) \in \bigtimes_{j=1}^n \Delta(\mathcal{A}_j)$ be denoted by the multilinear extension:
\[
\nut_i(\vx) \defeq \mathbb{E}_{\vec{s} \sim \vx}[\mathcal{U}_i(\vec{s})] = \langle \vx_i, \nabla_{\vx_i} \nut_i(\vx) \rangle.
\]

Let $d \defeq \max_{i \in [n]} |\mathcal{A}_i|$ be the maximum number of actions available to any player. Without loss of generality, we assume that $|\mathcal{A}_i| \geq 2$ for each player~$i$, since players with only one action can be removed from the game without loss of generality. For simplicity of notation, we further assume that all players have exactly $d$ actions. Finally, we analyze the game under the following standard assumption regarding the boundedness and smoothness of the utility functions.

\begin{assumption} \label{assumption:bound}
    The utility function $\nut_j$ of the game (for each player $j \in [n]$) satisfies, for any two strategy profiles $\vx, \vec x^\prime \in \bigtimes_{j = 1}^{n}  \Delta (\mathcal{A}_j)$,
    \begin{itemize}
        \item (Bounded Utilities)
            $ \max_{s \in {\bigtimes_{i=1}^n \mathcal{A}_i}} |\mathcal{U}_j(s)| \leq 1 $, for each player $j \in [n]$.   
        \item ($L$-smoothness) there exists a positive number $L > 0$ such that for each player $j \in [n]$, 
            $$ \| \nabla_{\vx_j} \nut_j (\vx) - \nabla_{\vx_j} \nut_j (\vec x^\prime) \|_{\infty} \leq L \sum\limits_{i \in [n]} \|\vx_i - {\vec x^\prime_i} \|_1. $$
    \end{itemize}
\end{assumption}

This assumption is general and not restrictive, as any bounded utility function can be rescaled—without loss of generality—to satisfy the boundedness condition. Moreover, it is straightforward to show that under this assumption, the game is $L$-smooth with $L = 1$. However, depending on the structure of the game, the smoothness parameter $L$ may be substantially smaller than 1. For example, in games with vanishing sensitivity $\epsilon_n$, it can be observed that $L = \epsilon_n \ll 1$~\citep{anagnostides2024interplay}. Here, we distinguish $L$ from the boundedness assumption in order to account for additional structural properties that may lead to faster convergence guarantees.

\paragraph{No-regret Learning.}
In the \emph{online learning framework}, a learning agent is required to select a strategy $\vx\^t \in \mathcal{X} = \Delta^d$ at each time $t \in \bbN$. We consider the \emph{full-information} model, in which the environment provides feedback in the form of a \emph{linear} utility function $\vx \mapsto \langle \vx, \nut\^t \rangle$, where $\nut\^t \in \bbR^d$. The principal measure of the agent's performance is \emph{external regret} (also referred to as \emph{regret}), defined over a time horizon $T \in \bbN$ as:
\begin{equation}
\label{eq:linreg}
\reg\^T \defeq \max_{\vx^* \in \Delta^d} \left\{ \sum_{t=1}^T \langle \nut\^t, \vx^* \rangle \right\} - \sum_{t=1}^T \langle \nut\^t, \vx\^t \rangle.
\end{equation}
This metric evaluates the agent's performance against the best possible \emph{fixed} strategy chosen in hindsight. The goal is to ensure that regret grows as slowly as possible with the time horizon $T$. It is well known that standard algorithms such as Online Mirror Descent (OMD) and Follow-the-Regularized-Leader (FTRL) achieve the optimal regret rate of $\reg\^T = \Theta(\sqrt{T})$ in adversarial settings~\citep{hazan2016introduction,orabona2019modern}. However, in \emph{game self-play} settings, faster rates are achievable, since the dynamics of each player's online learning process follow the same learning algorithm, in contrast to the adversarial case. Specifically, each player $i \in [n]$ at time step $t$ receives a utility vector $\nut\^t = \nabla_{\vx_i} \nut_i(\vx\^t)$, where $\vx\^t = (\vx_1\^t, \vx_2\^t, \dots, \vx_n\^t)$ denotes the joint play at time $t$. It is important to note that a player's regret may be negative.

An important class of no-regret learning algorithms that achieve faster convergence guarantees in games are those that satisfy the \emph{Regret bounded by Variation in Utilities (RVU)} property—a concept introduced by \citet{Syrgkanis15:Fast}, which we formalize below.

\begin{definition}[RVU property~\citep{Syrgkanis15:Fast}]
    A no-regret learning algorithm satisfies Regret bounded by Variation in Utilities (RVU) property with parameters $a, b, c$ and a pair of dual norms $(\| .\|, \| . \|_{*})$, if the regret on any sequence of utilites $\{\nut\^t\}_{t = 1}^T$ is upper bounded by
     \[
        \reg\^T \leq a + b \sum_{t=1}^{T} \| \nut\^{t} - \nut\^{t-1} \|_{*}^2 - c \sum_{t=1}^{T} \| \vx\^{t+1} - \vx\^{t}  \|^2,
    \]
    where $a \geq 0$ and $0 < b \leq c$.
\end{definition}

It has been shown that optimistic variants of OMD and FTRL algorithms—originally introduced by \citet{rakhlin2013online}—satisfy the RVU property~\citep{Syrgkanis15:Fast}, and achieve faster rates in the self-play setting compared to the adversarial one. These algorithms also attain the optimal convergence rate for the \emph{sum of regrets}; however, it remains an open question whether the RVU property alone suffices to guarantee optimal individual no-regret learning rates in games.

\paragraph{Coarse Correlated Equilibrium.} A probability distribution $\sigma \in \Delta\left(\bigtimes_{j=1}^n \mathcal{A}_j\right)$ over the joint action space $\bigtimes_{j=1}^n \mathcal{A}_j$ is called an $\epsilon$-\emph{approximate coarse correlated equilibrium (CCE)} if, for every player $i \in [n]$ and every deterministic strategy $s_i' \in \mathcal{A}_i$, unilateral deviations do not increase the expected utility by more than $\epsilon$, that is:
\[
\mathbb{E}_{\mathbf{s} \sim \sigma}\left[\mathcal{U}_i(\mathbf{s})\right] \geq \mathbb{E}_{\mathbf{s} \sim \sigma}\left[\mathcal{U}_i(s_i', \mathbf{s}_{-i})\right] - \epsilon,
\]
where $\mathbf{s} = (s_1, s_2, \ldots, s_n)$ denotes the joint action drawn from $\sigma$, and $\mathbf{s}_{-i}$ represents the actions of all players other than $i$.

A folklore result in game theory establishes a connection between no-regret learning algorithms and coarse correlated equilibria (CCE) of a game~\citep{Cesa-Bianchi06:Prediction}. When the players follow no-regret learning algorithms with regrets $\reg_i\^T$ for all $i \in [n]$, the empirical play of the game, defined as $\sigma = \frac{1}{T} \sum_{t=1}^{T} \vx\^t$, constitutes an $\left(\frac{1}{T} \max_{i \in [n]} \reg_i\^T\right)$-approximate CCE of the underlying game. Hence, faster regret rates in self-play settings lead directly to faster convergence to the game's coarse correlated equilibrium.

\section{Dynamic Learning Rate Control} \label{sec:loga}

We restate that our algorithm is a variant of the Optimistic Multiplicative Weights Update algorithm (\Opthedge), in which the learning rate is selected adaptively based on the regret accumulated up to any point in time. For this reason, we refer to our method as \emph{Dynamic Learning Rate Control \Opthedge (\ours)}.

In its standard version, \Opthedge picks the next distribution of play proportionally to the exponential of the optimistic regret accumulated on each action, that is, according to the formula
\[
    \vx\^{t}[k] \defeq\frac{\exp\{\lambda\^t \at\^t[k]\}}{\sum_{k' \in \mathcal{A}} \exp\{\lambda\^t \at\^t[k']\}} \qquad \forall k \in \mathcal{A},
    \numberthis{eq:softmax}
\]
where $\lambda\^t > 0$ is a learning rate and $\at\^t$ is the vector of optimistically-corrected regrets accumulated by each action up to time $t$,
\[
    \at\^t[k] \defeq (\nut\^{t-1}[k] - \langle \nut\^{t-1}, \vx\^{t-1}\rangle) + \sum_{\tau=1}^{t-1} \left[\nut\^\tau[k] - \langle \nut\^\tau, \vx\^\tau\rangle\right] \qquad \forall k \in \mathcal{A}.
\]
For simplicity, define the corrected reward signal $\ut\^t \defeq \nut\^t - \langle \nut\^t, \vx\^t\rangle \vec{1}_d$ and the accumulated signal $\Ut\^t \defeq \sum_{\tau=1}^{t-1} \left[\nut\^\tau - \langle \nut\^\tau, \vx\^\tau\rangle \vec{1}_d\right]$.

It was recently discovered that when all players in an $n$-player game learn using \Opthedge with a suitable \emph{constant} learning rate $\lambda\^t = \eta$, the maximum regret accumulated by the players grows at most as $O(n \log d \log^4 T)$ as a function of the time horizon, for sufficiently large $T$~\citep{daskalakis2021near}.

In this paper, we show that the previous result can be improved to a $\log T$ dependence by using a different approach based on a novel technique that we term \emph{dynamic learning rate control}. Unlike most of the literature on adaptive learning rate schedules in online learning and optimization, our dynamic control does not produce monotonically decreasing learning rates and conceptually, it is \emph{not} designed as a means of circumventing uncertainty about the problem’s conditioning (e.g., the Lipschitz constants). Rather, our dynamic learning rate control aims to pace the learner—\emph{slowing it down when it is performing too well}—that is, \emph{when its maximum regret is too negative}.

Such a goal might appear backwards---after all, if a learner is doing so well, why pace them down? The contradiction is resolved when considering the learning system as a whole. Intuitively, if one player is doing too well, the other players might be unable to catch up. Instead, when all players' regrets remain nonnegative, one is able to show desirable overall properties of the learning process. These include not only small swap regret~\citep{Anagnostides22:Uncoupled}, but also iterate convergence to equilibrium~\citep{Anagnostides22:Last-Iterate}, and discovery of strongly incentive-compatible equilibria~\citep{Anagnostides22:Optimistic}. Alternatively, there is another way to look at this idea. As discussed in \Cref{section:background}, no-regret dynamics lead to convergence to coarse correlated equilibria at a rate determined by the \emph{worst-performing} player---the player with the maximum regret. Therefore, it seems that for faster convergence, enforcing some form of performance balance among the players of the game is natural.

We present our algorithm, \emph{Dynamic Learning Rate Control Optimistic MWU (\ours)}, in \Cref{algo:ours_lambdaview}. We prove that when the players of the game follow \ours (i.e., in the self-play setting), each player experiences $O(n \log^2 d \log T)$ regret, while in the adversarial setting, each player suffers at most $\tilde{O}(\sqrt{T \log d})$ regret. Therefore, \ours is robust to adversarial behavior. The dynamics of \ours are simple and mirror those of \Opthedge, except that when the maximum regret becomes too negative (Line 6 in \Cref{algo:ours_lambdaview}), the learning rate in the Multiplicative Weights Update step is adjusted dynamically (Line 7 in \Cref{algo:ours_lambdaview}). Our theoretical guarantees in \Cref{theorem:informal_regret_bound} hold across all regimes of $n$, $d$, and $T$, as a constant choice of $\eta$ is sufficient for the result.

\begin{theorem}[Informal; see \Cref{theorem:regret_bound} for the detailed version] \label{theorem:informal_regret_bound}
Suppose that $n$ players self-play a general-sum multiplayer game with a finite set of $d$ deterministic strategies per player over $T$ rounds. Further, suppose that each player follows \ours to choose their action based on the history so far. Then, each player incurs $O(n \log^2 d \log T)$ regret. Moreover, when faced with adversarial utilities, each player playing \ours is guaranteed to experience $\tilde{O}(\sqrt{T \log d})$ regret.

\end{theorem}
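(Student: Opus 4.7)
The plan is to view \ours as an instantiation of Optimistic FTRL on an appropriately lifted strategy space, using the novel regularizer $\psi$ whose spectral properties interpolate between the logarithmic and entropic regularizers. Under this viewpoint, \ours fits into the conceptual framework of \LRLOFTRL~\citep{farina2022near}; crucially, the lifting ensures that the per-player regret remains \emph{nonnegative} at every round. The proof then splits into two parts: a self-play bound that exploits the coupled learning of all players, and an adversarial bound that uses only the single-learner analysis.

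For the self-play guarantee, I would first establish an RVU-style inequality of the form
\[ \reg_i\^T \le a + b \sum_{t=1}^T \|\nut_i\^{t} - \nut_i\^{t-1}\|_{\infty}^2 - c \sum_{t=1}^T \|\vx_i\^{t+1} - \vx_i\^{t}\|_1^2, \]
together with the guarantee $\reg_i\^T \ge 0$. Summing over $i \in [n]$ and using the $L$-smoothness from \Cref{assumption:bound}, each variation term $\|\nut_i\^{t} - \nut_i\^{t-1}\|_{\infty}^2$ is upper bounded by $O(n L^2)\sum_j \|\vx_j\^{t} - \vx_j\^{t-1}\|_1^2$. With the parameters chosen so that $b$ and $c$ satisfy an appropriate inequality involving $nL$, the variation term is absorbed by the stability term, leaving $\sum_i \reg_i\^T \le O(na)$. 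Per-player nonnegativity then yields $\reg_i\^T \le O(na)$ for every player.

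The main obstacle is producing the constant $a$ when the learning rate is \emph{time-varying}. Because $\lambda\^t$ is defined implicitly through an optimization problem on the current regret vector, we must establish \emph{multiplicative stability} of the sequence $\{\lambda\^t\}$, i.e., $\lambda\^{t+1}/\lambda\^t = 1 + O(1)$, \emph{without} relying on multiplicative stability of the iterates $\vx\^t$ themselves; this is the key conceptual departure from the \LRLOFTRL analysis. This is where the self-concordance of $\psi$ is essential: the first-order optimality conditions defining $\lambda\^t$ interact favorably with the self-concordant calculus of $\psi$, and a Dikin-ball argument delivers the required stability whenever per-step utility changes are bounded, which holds under \Cref{assumption:bound}. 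Combined with the strong convexity of $\psi$ relative to the $\ell_1$-norm, which loses only a $\log d$ factor (unlike the $d$ factor incurred by the log barrier underlying \LRLOFTRL), the analysis yields $a = O(\log^2 d \cdot \log T)$, where the $\log T$ captures the cumulative drift of $\lambda\^t$ across the horizon. Plugging back delivers the self-play bound $O(n \log^2 d \log T)$.

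For the adversarial guarantee, we drop the coupling and analyze a single copy of \ours against an arbitrary sequence $\{\nut\^t\}$. Because the dynamic rule only decreases $\lambda\^t$ when accumulated regret becomes too negative and otherwise maintains the base rate, the sequence $\{\lambda\^t\}$ remains sandwiched between two positive constants, so we may invoke the classical adversarial regret bound for \OFTRL with a time-varying step size using $\psi$ as the regularizer. Standard arguments, namely tuning the base rate or applying an anytime doubling trick, then deliver the $\tilde{O}(\sqrt{T \log d})$ bound. The technical subtlety is ensuring that the dynamic rule does not destroy the adaptive-step-size RVU inequality, which follows from the same self-concordant one-dimensional analysis used for multiplicative stability.
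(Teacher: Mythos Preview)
Your self-play outline is essentially the paper's argument: recast \ours as OFTRL on the lifted cone with the regularizer $\psi$, use the lifting to obtain $\tildereg\^T=\max\{0,\reg\^T\}\ge 0$, prove an RVU bound with $a=O((\alpha\log T+\log d)/\eta)$, sum over players, and use $L$-smoothness to absorb the variation term into the stability term. The paper obtains the multiplicative stability of $\lambda\^t$ exactly via the self-concordance of the one-dimensional learning-rate objective $f(\lambda;\at)$ (\Cref{lemma:convexity_concordance_step}) and a Newton-step/Dikin-type bound, so your description is accurate. One minor imprecision: the $\log T$ in $a$ does not come from ``cumulative drift of $\lambda\^t$''; it arises from the range of $\psi$ at a perturbed comparator $\vy'=\tfrac{T-1}{T}\vy+\tfrac{1}{T}\vy\^{1}$, specifically the term $-\alpha\log\Lambda(\vy')$ when $\Lambda(\vy')$ can be as small as $1/T$. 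This is why the $\log T$ is multiplied by $\alpha=\Theta(\log^2 d)$.

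The adversarial argument, however, has a genuine gap. First, the claim that $\{\lambda\^t\}$ ``remains sandwiched between two positive constants'' is false: under adversarial utilities the cumulative regret on every action can be driven arbitrarily negative, in which case $\lambda\^t\approx(\alpha-1)/(-\max_r\at\^t[r])\to 0$. Second, and more importantly, the self-play analysis fixes $\eta=\Theta(1/(Ln))$ as a \emph{constant}; with that $\eta$, the RVU bound against an adversary yields $O(\eta T)=O(T)$, not $O(\sqrt{T})$. Tuning $\eta=\Theta(1/\sqrt{T})$ would rescue the adversarial rate but destroy the self-play bound, so ``tuning the base rate'' cannot deliver both guarantees with a single choice. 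The paper's actual mechanism is different and simpler: each player monitors whether the observed utility variation respects the inequality $\sum_{\tau\le t}\|\nut\^{\tau+1}-\nut\^{\tau}\|_\infty^2\le 144L^2n^2\eta+48L^2n^2(\alpha\log t+\log d)$, which must hold in self-play by \Cref{theorem:bound_on_path_length}; if this test ever fails, the player switches to ordinary \Hedge and incurs $O(\sqrt{T\log d})$ thereafter. Your proposal needs either this switching test or an equivalent adaptive mechanism on $\eta$ tied to observed variation; the one-dimensional self-concordant analysis of $\lambda\^t$ alone does not supply it.
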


An immediate consequence of \Cref{theorem:informal_regret_bound} is that the empirical distribution of play constitutes an $O\left(\frac{n \log^2 d \log T}{T}\right)$-approximate coarse correlated equilibrium (CCE) of the game. This corollary follows from the fact that, in a general-sum multiplayer game, if each player's regret is at most $\epsilon(T)$, then the empirical distribution of their joint strategies converges to a CCE at a rate of $O(\epsilon(T)/T)$.

\begin{corollary}
    If $n$ players employ the uncoupled learning dynamics of \ours for $T$ rounds in a general-sum multiplayer game with a finite set of $d$ deterministic strategies per player, then the empirical distribution of play forms an $O\left(\frac{n \log^2 d \log T}{T}\right)$-approximate coarse correlated equilibrium (CCE) of the game.
\end{corollary}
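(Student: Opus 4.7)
The plan is to reduce the corollary to the standard folklore connection between no-regret dynamics and coarse correlated equilibria, and then plug in the individual regret bound of $O(n \log^2 d \log T)$ from \Cref{theorem:informal_regret_bound}. Let $\sigma \defeq \frac{1}{T}\sum_{t=1}^T \vx\^t \in \Delta\left(\bigtimes_{j=1}^n \mathcal{A}_j\right)$ denote the empirical distribution of play, interpreting each $\vx\^t = \vx_1\^t \otimes \cdots \otimes \vx_n\^t$ as a product distribution over the joint action space. Fix any player $i \in [n]$ and any deterministic deviation $s_i' \in \mathcal{A}_i$. By multilinearity of the utility function, the expected utility under $\sigma$ equals the time-average of $\nut_i(\vx\^t)$, and similarly for the deviation. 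Hence
\[
    \mathbb{E}_{\vec s \sim \sigma}[\mathcal{U}_i(s_i', \vec s_{-i})] - \mathbb{E}_{\vec s \sim \sigma}[\mathcal{U}_i(\vec s)]
    = \frac{1}{T}\sum_{t=1}^T \left( \langle \vec e_{s_i'}, \nabla_{\vx_i} \nut_i(\vx\^t) \rangle - \langle \vx_i\^t, \nabla_{\vx_i} \nut_i(\vx\^t) \rangle\right).
\]

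Next, recognize the right-hand side as the per-round regret of player $i$ against the deterministic comparator $s_i'$, under the utility vectors $\nut_i\^t \defeq \nabla_{\vx_i} \nut_i(\vx\^t)$ that player $i$ observes in the full-information model. Therefore, the right-hand side is at most $\frac{1}{T}\reg_i\^T$, and a fortiori at most $\frac{1}{T}\max_{j \in [n]} \reg_j\^T$. Applying \Cref{theorem:informal_regret_bound}, which gives $\reg_i\^T = O(n \log^2 d \log T)$ for each player $i$ under the self-play dynamics of \ours, the deviation gain is bounded by $O\!\left(\frac{n \log^2 d \log T}{T}\right)$. Since this bound holds uniformly over all players $i \in [n]$ and all deviations $s_i' \in \mathcal{A}_i$, the distribution $\sigma$ satisfies the definition of an $\epsilon$-approximate CCE with $\epsilon = O\!\left(\frac{n \log^2 d \log T}{T}\right)$, concluding the proof.

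There is no real obstacle here beyond bookkeeping: the corollary is a direct instantiation of the folklore regret-to-CCE reduction, and the only point that requires a modicum of care is the multilinearity step, where one uses that each $\vx\^t$ is a product distribution so that evaluating $\nut_i$ reduces to the inner product with the gradient $\nabla_{\vx_i}\nut_i(\vx\^t)$, matching precisely the linear utility vector fed to the online learner.
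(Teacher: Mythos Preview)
Your proposal is correct and follows exactly the approach the paper sketches: the paper simply invokes the folklore regret-to-CCE reduction (stated in the paragraph preceding the corollary) and plugs in the per-player regret bound from \Cref{theorem:informal_regret_bound}, without spelling out the multilinearity step you include. There is nothing to add.
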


\begin{algorithm}[htp]
    \SetNoFillComment
    \caption{Dynamic Learning Rate Control - Optimistic MWU (\ours)
    }\label{algo:ours_lambdaview}
    \DontPrintSemicolon
    \KwData{Learning rate $\eta$, parameters $\alpha$ and $\beta$}\vspace{2mm}
    Set $ {\Ut}\^1, \ut^{(0)} \gets \vec{0} \in \bbR^{d}$\;
    \For{$t=1,2,\dots, T$}{
    Set $\at\^t \gets {\Ut}\^t + {\vec\ut}\^{t-1}$ \Comment*{\color{commentcolor}Optimism]\!\!\!\!\!}
    \uIf{$\max_{\ind \in [d]} \{ \at[\ind]\} \geq - \beta \log^2 d$}{
    Set  $\displaystyle \lambda\^t \gets \eta$ \;
    }
    \Else{
    \tcc{\color{commentcolor}\texttt{Dynamic Learning Rate Control}}
     Set  $\displaystyle \lambda\^t \gets \argmax_{\lambda \in (0, \eta]} \left\{ \log \Big( {\sum_{\ind=1}^d e^{\lambda \at\^t[\ind]}} \Big) + (\alpha - 1) \log \lambda  \right\}$ \medskip \label{line:oftrl0} 
    }
    Play strategy $\displaystyle\vx\^t[\ind] \defeq \frac{\exp\{\lambda\^t \at\^t[\ind]\}}{\sum_{\ind' = 1}^{d} \exp\{\lambda\^t \at\^t[\ind']\}} $\Comment*{\color{commentcolor} \Opthedge{}]\!\!\!\!\!} \label{line:norm0}
    Observe $\vec \nu \^t \in \bbR^d$\;
    Set $\displaystyle {\vec\ut}\^t \gets \vec \nu \^t -\langle \nut\^t, \vx\^t\rangle \vec1_d $ \;
    \label{line:lift0}
    Set $ {\Ut}\^{t+1} \gets  {\Ut}\^t +  {\vec\ut}\^t$
    \Comment*{\color{commentcolor}Empirical cumulated regrets]\!\!\!\!\!}
    }
\end{algorithm}

\subsection{The Learning Rate Control Problem} \label{sec:learning_rate_control_problem}

At every time $t$, our learning algorithm outputs strategies according to \eqref{eq:softmax}, where the learning rate is carefully chosen as the solution to the following optimization problem:
\[
    \lambda\^t \defeq \argmax_{\lambda \in (0, \eta]} \left\{f(\lambda; \at\^t) \defeq (\alpha - 1) \log \lambda + \log \left( \sum_{\ind=1}^d e^{\lambda \at\^t[\ind]} \right) \right\}, \numberthis{eq:opt_problem_lambda}
\]
where $\eta > 0$ is a constant that caps the maximum learning rate, and $\alpha$ is a key parameter of the algorithm. We set $\alpha$ to be on the order of $\Theta(\log^2 d)$ to achieve the guarantees mentioned in the introduction.

Under normal operating conditions, where the maximum regret $\max_k \{\at\^t[k]\}$ accumulated on the actions is not too negative, it is immediate to observe that the optimal solution is $\lambda\^t = \eta$,\footnote{Please refer to \Cref{lemma:lambda_is_one} for a concrete proof.} thus recovering the usual operating regime of a constant learning rate. However, when the maximum regret becomes sufficiently negative, the optimal value of $\lambda\^t$ starts decreasing towards $0$, causing the corresponding player to degrade performance by disregarding the history of the game. In the extreme case where $\lambda\^t \rightarrow 0$, the player acts uniformly among its actions. \Cref{fig:lambda} illustrates the value of the learning rate $\lambda\^t$ as a function of the optimistic cumulative regrets in a simple two-action case. It can be observed that $\lambda\^t$ is a monotonically non-increasing function of $\{\at\^t[1]\}$ and $\{\at\^t[2]\}$.

\begin{figure}[htp]
\scalebox{.9}{\input{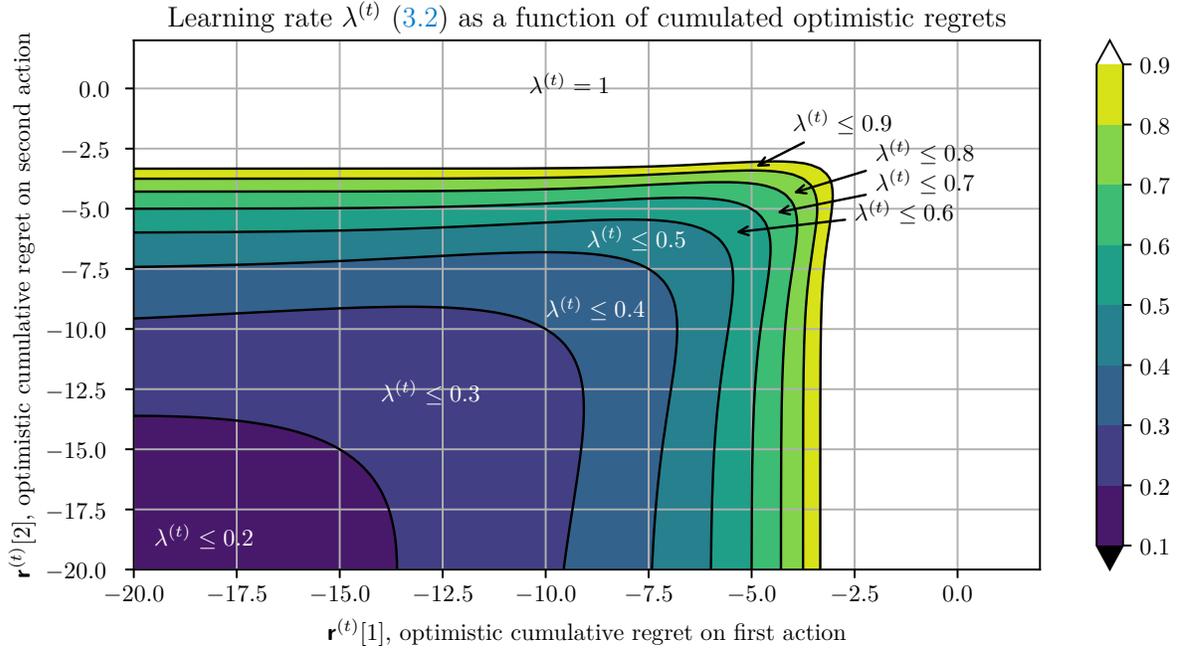}}
\caption{\textbf{Learning rate landscape:} Dependence of 
$\lambda\^t$, as defined in \eqref{eq:opt_problem_lambda}, on the optimistic regrets cumulated in $2$-action simplex. For the plot, the values $\eta=1, \alpha=4$ were chosen.}
\label{fig:lambda}
\end{figure}

\subsection{Properties of the Learning Rate Control Objective}

At first glance, it may not be immediately apparent that the maximization objective is tractable, or even concave in $\lambda$. However, the following result demonstrates that when $\alpha$ is chosen on the order of $\Omega(\log^2 d)$, $f(\lambda; \at)$ becomes self-concordant (and thus strongly concave) in $\lambda$. A concrete proof is provided in \Cref{sec:proof_temp_adjust}.

\begin{theorem}\label{lemma:convexity_concordance_step}
    For any $\at \in \bbR^d$, the rate control objective $f(\lambda; \at)$ defined in \eqref{eq:opt_problem_lambda} satisfies the following properties:
    \begin{itemize}
        \item Strong concavity: $f''(\lambda; \at) \leq -(\alpha - \log^2 d - 1)/\lambda^2$ for all $\lambda \in (0,\infty)$.
        \item Self-concordance: $(f'''(\lambda; \at))^2 \leq -4 f''(\lambda; \at)^{3}$,
    \end{itemize}
    where all derivatives are with respect to $\lambda$.
\end{theorem}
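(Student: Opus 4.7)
The plan is to differentiate $f$ explicitly, reduce the strong-concavity bound to a variance-of-log-probabilities inequality on the simplex, and then derive self-concordance through an algebraic identity combined with a tail bound on the softmax distribution.

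\smallskip
\noindent\textbf{Setup and derivatives.} First I would write $f(\lambda;\at) = (\alpha-1)\log\lambda + g(\lambda)$ with $g(\lambda)\defeq\log\sum_k e^{\lambda\at[k]}$ and $p^\ast_k \defeq e^{\lambda\at[k]}/Z(\lambda)$. Standard differentiation of the log-sum-exp yields $g'(\lambda)=\mu$, $g''(\lambda)=\sigma^2$, $g'''(\lambda)=m_3$, the mean, variance and third central moment of $\at$ under $p^\ast$. Hence
\[
f''(\lambda) = -\tfrac{\alpha-1}{\lambda^2} + \sigma^2, \qquad f'''(\lambda) = \tfrac{2(\alpha-1)}{\lambda^3} + m_3.
\]

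\smallskip
\noindent\textbf{Strong concavity.} It suffices to show $\lambda^2\sigma^2 \leq \log^2 d$. Because variance is translation-invariant, I would shift $\at$ so that $\max_k \at[k]=0$; then $\lambda\at[k] = \log p^\ast_k + \log Z$, and therefore $\lambda^2\sigma^2 = \mathrm{Var}_{p^\ast}(\lambda\at) = \mathrm{Var}_{p^\ast}(\log p^\ast)$. The claim is thus equivalent to the distribution-free inequality $\mathrm{Var}_q(\log q) \leq \log^2 d$ for every $q\in\Delta^d$. I would prove this by a Lagrange multiplier argument: the first-order condition in $q_k$ is a quadratic in $\log q_k$, so the extremal $q$ takes at most two distinct values. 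Parameterising with one "heavy" mass $1-\alpha$ and $d-1$ equal "light" masses $\alpha/(d-1)$, a closed-form maximization gives the (strictly stronger) bound $\tfrac14\log^2(d-1)$.

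\smallskip
\noindent\textbf{Self-concordance.} Expanding and completing the square in $m_3$ yields the identity
\[
(f''')^2 + 4(f'')^3 = \bigl(m_3 + \tfrac{2(\alpha-1)}{\lambda^3}\bigr)^2 - 4\bigl(\tfrac{\alpha-1}{\lambda^2} - \sigma^2\bigr)^3,
\]
so by the triangle inequality it suffices to prove $2(\alpha-1) + \lambda^3|m_3| \leq 2(\alpha-1-\lambda^2\sigma^2)^{3/2}$. The only nontrivial piece is to bound $\lambda^3|m_3|$. Introduce $V_k\defeq\lambda(\at[k]-\mu)$, so $V$ is centered under $p^\ast$ and, after the shift, $V_k = \log p^\ast_k + \ent{p^\ast}$. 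Thus $V_k \leq \log d$ pointwise, while the lower tail obeys $\Pr_{p^\ast}(V\leq -t) \leq d e^{-t}$ by counting $\{k: p^\ast_k \leq e^{-t-\ent{p^\ast}}\}$. Integrating this tail bound yields $\mathbb{E}_{p^\ast}[|V|^3] = O(\log^3 d)$, so $\lambda^3|m_3| \leq \mathbb{E}[|V|^3] = O(\log^3 d)$. Combined with $\lambda^2\sigma^2 \leq \log^2 d$ from Step 2, the RHS grows as $\Theta((\alpha-1)^{3/2}) = \Theta(\log^3 d)$ under the paper's choice $\alpha = \Theta(\log^2 d)$, dominating the LHS once the leading constant on $\alpha$ is taken large enough.

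\smallskip
\noindent\textbf{Main obstacle.} Controlling the third cumulant $m_3$ is the crux: classical self-concordance results for the multivariate log-sum-exp do not transfer to the univariate slice $\lambda\mapsto g(\lambda)$, since the skewness of a softmax can be of order $1$ rather than shrinking in $\lambda$. The sub-exponential-tail argument above suffices qualitatively, but the absolute constant hidden in $\mathbb{E}_{p^\ast}[|V|^3] = O(\log^3 d)$ feeds directly into the multiplicative constant required on $\alpha$; a slicker proof may sidestep this by exploiting self-concordance closure rules on the Gamma-like factorisation $\lambda^{\alpha-1}e^{\lambda\at[k]}$ of each summand of $e^{f}$.
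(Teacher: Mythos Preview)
Your approach is correct and coincides with the paper's: both recognize $g',g'',g'''$ as the first three cumulants of $\at$ under the softmax $p^\ast$, use the substitution $\lambda\at[k]=\log p^\ast_k+\mathrm{const}$ to rewrite $\lambda^2\sigma^2$ and $\lambda^3 m_3$ as the variance and third central moment of $\log p^\ast$, bound these by $\log^2 d$ and $O(\log^3 d)$ respectively, and then check the resulting algebraic inequality for the given $\alpha=\Theta(\log^2 d)$.

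The only notable difference is in how the $O(\log^3 d)$ bound on the third central moment is obtained. The paper expands $\kappa_3=\sum_k \vx[k]\log^3\vx[k]-3\bigl(\sum_k\vx[k]\log^2\vx[k]\bigr)\bigl(\sum_k\vx[k]\log\vx[k]\bigr)+2\bigl(\sum_k\vx[k]\log\vx[k]\bigr)^3$ and bounds each summand crudely by a multiple of $\log^3 d$, which is shorter than your sub-exponential tail integration for $\mathbb{E}_{p^\ast}[|V|^3]$; both give what is needed. One small slip: your closed-form maximum $\tfrac14\log^2(d-1)$ for $\mathrm{Var}_q(\log q)$ cannot be right (it vanishes at $d=2$ while the variance does not), but your two-point Lagrange reduction is valid and already yields the weaker bound $\log^2 d$ that the paper uses and that suffices here.
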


An immediate byproduct of \Cref{lemma:convexity_concordance_step} is that the learning rate control optimization problem is well-defined and admits a unique solution for a sufficiently large choice of the hyperparameter $\alpha$.

\begin{remark}\label{remark:unique_solution}
    The learning rate control objective $f(\lambda; \at)$ defined in \eqref{eq:opt_problem_lambda} admits a unique solution $\lambda\^t \in (0, \eta]$ for any given $\eta \geq 0$ and all $t \in \bbN$, whenever $\alpha$ is sufficiently large.
\end{remark}

A key technical property used in the analysis of the regret accumulated by \ours is the sensitivity of the solution to \eqref{eq:opt_problem_lambda} with respect to small perturbations in the regret vector $\at\^t$. This property is crucial, as it ensures that the dynamic learning rate $\lambda\^t$ for each player changes smoothly over time $t$, thereby allowing the self-play process to evolve in a stable and smooth manner.

\begin{theorem}(Sensitivity of learning rates on regrets) \label{theorem:stability_lambda}
    There exists a universal constant $\beta$,\footnote{For concrete values, choosing any $\beta \geq 70$ suffices.} such that for $\alpha \ge 2 + 2 \log d + \beta \log^2 d$, the following property holds.
    Let $\at, \aprimet \in \bbR^d$ be such that $\| \at - \aprimet \|_{\infty} \leq 2$, and let $\hat{\lambda}, \hat{\lambda'}$ be the corresponding learning rates, that is,
    \[
        \hat{\lambda} = \argmax_{t \in (0, \eta]} f(t; \at), \qquad
        \hat{\lambda'} = \argmax_{t \in (0,\eta]} f(t; \at').
    \]
    Then, $\hat{\lambda}$ and $\hat{\lambda'}$ are multiplicatively stable; specifically,
    \[
        \frac{7}{10} \leq  \frac{\hat{\lambda}}{\hat{\lambda}^\prime} \leq \frac{7}{5},
    \]
\end{theorem}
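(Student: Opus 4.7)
The plan is to track $\phi(s) \defeq \argmax_{\lambda \in (0,\eta]} f(\lambda; \at_s)$ along the linear interpolation $\at_s \defeq (1-s)\at' + s\at$, $s \in [0,1]$, derive a pointwise bound on $|(1/\phi)'(s)|$, and integrate to obtain multiplicative stability. By \Cref{remark:unique_solution} the argmax is unique for all $s$, and continuity plus Lipschitz regularity of $\phi$ in $s$ will follow from standard argmax-stability arguments applied to the strongly concave family $\{f(\cdot; \at_s)\}_s$.

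On any open subinterval where $\phi(s) < \eta$ (interior regime), first-order optimality forces $f'(\phi(s); \at_s) = 0$, and implicit differentiation gives $\phi'(s) = -\bigl[\nabla_\at f'(\phi(s); \at_s) \cdot (\at - \at')\bigr] / f''(\phi(s); \at_s)$. I would bound the denominator directly using \Cref{lemma:convexity_concordance_step}, which yields $|f''(\phi; \at_s)| \ge (\alpha - \log^2 d - 1)/\phi^2$. For the numerator, a direct calculation gives $\partial_{\at[k]} f'(\lambda; \at) = x_k(1 + \lambda(\at[k] - \mathbb{E}_\vx[\at]))$ with $\vx = \vx(\lambda, \at)$ the softmax. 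Pairing with $\vec z \defeq \at - \at'$, using $\|\vec z\|_\infty \le 2$ and Jensen's inequality $\mathbb{E}_\vx |\at_s[k] - \mathbb{E}_\vx[\at_s]| \le \sqrt{\text{Var}_\vx(\at_s)}$, together with the variance bound $\text{Var}_\vx(\at_s) \le \log^2 d / \phi^2$ (which also follows from \Cref{lemma:convexity_concordance_step} via the identity $f''(\phi; \at_s) = -(\alpha-1)/\phi^2 + \text{Var}_\vx(\at_s)$), I obtain $|\nabla_\at f'(\phi; \at_s) \cdot \vec z| \le 2 + 2\log d$. Combining, $|\phi'(s)| \le c\,\phi(s)^2$ with $c \defeq (2 + 2\log d)/(\alpha - \log^2 d - 1)$, i.e., $|(1/\phi)'(s)| \le c$ almost everywhere.

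On the complementary set where $\phi(s) = \eta$ (boundary regime), $1/\phi \equiv 1/\eta$ is locally constant, so $(1/\phi)'= 0$. Integrating the pointwise bound over $[0,1]$ then yields $|1/\hat\lambda - 1/\hat\lambda'| \le c$. Assuming WLOG $\hat\lambda \le \hat\lambda'$, this rearranges to $\hat\lambda/\hat\lambda' \ge 1 - c\hat\lambda \ge 1 - c\eta$. Under $\alpha \ge 2 + 2\log d + \beta\log^2 d$ with $\beta \ge 70$, direct arithmetic shows $c\eta \le 3/10$ (for the paper's fixed bound on $\eta$), giving $\hat\lambda/\hat\lambda' \ge 7/10$; the symmetric argument (swapping $\at \leftrightarrow \at'$) yields $\hat\lambda/\hat\lambda' \le 10/7 \le 7/5$.

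The hard part will be the cross-derivative bound: a naive $L^1$--$L^\infty$ estimate $|\nabla_\at f' \cdot \vec z| \le \|\vec z\|_\infty \|\nabla_\at f'\|_1$ loses an uncontrolled factor of $\lambda \max_k |\at_s[k]|$, so it is crucial to invoke \Cref{lemma:convexity_concordance_step} twice---once for the $\phi^{-2}$ lower bound on $|f''|$ in the denominator, and once for the variance estimate that tames the numerator---in order for the Lipschitz constant $c$ to absorb the factor of $\eta$ under the hypothesis on $\alpha$. A secondary technical point is the patching between interior and boundary regimes of $\phi$, handled above by the $1$-Lipschitz ``min with $\eta$'' structure and standard measure-theoretic integration.
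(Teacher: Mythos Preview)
Your approach is correct and takes a genuinely different route from the paper. The paper exploits the \emph{self-concordance} half of \Cref{lemma:convexity_concordance_step}: it fixes the analytic guess $\lambda_0 = (\alpha-1)/(-\max_k \at[k])$, shows the Newton decrement of $-f$ at $\lambda_0$ is at most $1/\sqrt{\beta-1}$, invokes a standard self-concordance bound (Renegar, Theorem~2.2.5) to conclude $|\hat\lambda/\lambda_0 - 1|\le 4/(\beta-1)$, repeats for $\hat\lambda'$, and then patches together a three-case analysis according to whether $\max_k\at[k]$ and $\max_k\at'[k]$ exceed $-\beta\log^2 d$. Your homotopy-plus-implicit-differentiation argument uses only the \emph{strong-concavity} half of \Cref{lemma:convexity_concordance_step} (twice, exactly as you flag), sidesteps self-concordance and the case split entirely, and produces a sharper constant since in practice $c\eta$ is tiny. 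The trade-off is that your final step makes the dependence on $\eta$ explicit---you need $c\eta$ small---whereas the paper's argument hides this behind an implicit $\eta=1$ normalization; the theorem as literally stated in fact fails for unbounded $\eta$, so your caveat ``for the paper's fixed bound on $\eta$'' is both necessary and appropriate.

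One arithmetic slip: $10/7 > 7/5$, so $c\eta\le 3/10$ secures only the lower bound $\hat\lambda/\hat\lambda'\ge 7/10$. For the upper bound you need $1/(1-c\eta)\le 7/5$, i.e.\ $c\eta\le 2/7$. Since your own estimate gives $c\lesssim 0.1$ for $d\ge 2$ and $\beta\ge 70$, and the paper takes $\eta\le 1/50$, this is harmless---just tighten the threshold.
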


The proof of \cref{theorem:stability_lambda} is derived by combining an accurate analytical estimate, $\lambda_0$, for the value of $\widehat{\lambda}$ with techniques from the analysis of self-concordant functions. Specifically, by demonstrating that the intrinsic norm of the second-order ascent direction of $f$ at $\lambda_0$ is small, we conclude that the solution $\widehat{\lambda}$ lies within a small radius of $\lambda_0$ in the intrinsic norm. Moreover, using the bound on $f''(\lambda_0,\, \cdot\,)$ provided in \cref{lemma:convexity_concordance_step}, we establish the multiplicative proximity between $\lambda_0$ and $\widehat{\lambda}$. In particular, we show that the specific choice $\lambda_0 = (\alpha - 1)/(- \max_{r \in [d]} \{\at[r]\})$ serves as a reasonable analytical estimate, as the \emph{LogSumExp} function $\log \left( \sum_{\ind=1}^d e^{\lambda \at\^t[\ind]} \right)$ in the dynamic learning rate control problem~\ref{eq:opt_problem_lambda} behaves approximately like a softmax function over the regret vector $\at\^t$. Applying a similar procedure to the regret vector $\at^\prime$, we establish the multiplicative proximity between $\lambda_0^\prime$ and $\widehat{\lambda^\prime}$.

Combining these results, we conclude the multiplicative closeness of $\widehat{\lambda}$ and $\widehat{\lambda^\prime}$ in terms of $\at$, $\at^\prime$, and $\beta$. This process is formalized in the proof of the Multiplicative Stability Lemma~\ref{lemma:stability} in \Cref{sec:proof_temp_adjust}. Finally, by considering different cases for the values of $\max_{r \in [d]} \{\at[r]\}$ and $\max_{r \in [d]} \{\at^\prime[r]\}$ relative to $- \beta \log^2 d$, we establish the concrete multiplicative stability of $\widehat{\lambda}$ and $\widehat{\lambda^\prime}$, in light of \Cref{lemma:lambda_is_one}. Full details are provided in \Cref{sec:proof_temp_adjust}.

Since the analytic guess $\lambda_0$ can be computed efficiently and guarantees a small norm of the Newton step, the standard analysis of Newton's method for self-concordant functions immediately implies the following.
\begin{corollary} \label{coro:newton}
    Given any $\at \in \bbR^d$ and a desired \emph{relative} accuracy $\epsilon > 0$, $O(\log \log 1/\epsilon)$ iterations of Newton's method, starting from the initialization point $\lambda_0$, are sufficient to compute a point $\lambda$ that approximates $\lambda^* \defeq \argmin_{\lambda \in (0, \eta]} f(\lambda; \at)$ with relative error at most $\epsilon$, meaning that $(1 - \epsilon)\lambda^* < \lambda < (1 + \epsilon)\lambda^*$.
\end{corollary}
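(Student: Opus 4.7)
The plan is to invoke the standard quadratic-convergence theory for Newton's method on self-concordant functions, using the analytic estimate
\[
    \lambda_0 \defeq \frac{\alpha - 1}{-\max_{r \in [d]} \{\at[r]\}}
\]
introduced in the discussion preceding \Cref{theorem:stability_lambda} as the initialization. By \Cref{lemma:convexity_concordance_step}, the objective $f(\,\cdot\,; \at)$ is concave and self-concordant on $(0, \eta]$, with $-f''(\lambda; \at) \ge (\alpha - \log^2 d - 1)/\lambda^2$. The whole analysis proceeds through the (concave) Newton decrement
\[
    \delta(\lambda) \defeq \frac{|f'(\lambda; \at)|}{\sqrt{-f''(\lambda; \at)}}.
\]

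The first step is to show that $\lambda_0$ already lies inside the quadratic-convergence basin, i.e., $\delta(\lambda_0) \le 1/4$. This is essentially the same observation used in the proof of \Cref{theorem:stability_lambda}: the sum $\sum_{k} e^{\lambda_0 \at[k]}$ is sharply dominated by the coordinate attaining $\max_r \{\at[r]\}$, so the stationarity residual $|f'(\lambda_0; \at)|$ is on the order of $1/\lambda_0$, while $\sqrt{-f''(\lambda_0; \at)} = \Theta(\sqrt{\alpha - \log^2 d}/\lambda_0)$. Under the hypothesis $\alpha \ge 2 + 2\log d + \beta\log^2 d$ already in force in \Cref{theorem:stability_lambda}, the ratio is a small constant, giving $\delta(\lambda_0) \le 1/4$.

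Once $\lambda_0$ is in the quadratic basin, the classical self-concordant Newton analysis yields $\delta(\lambda_{k+1}) \le 2\,\delta(\lambda_k)^2$, so that $\delta(\lambda_k) \le 2^{-2^{k}}$ after $k$ iterations and $k = O(\log\log(1/\epsilon))$ steps suffice to drive $\delta(\lambda_k)$ below any prescribed threshold $\epsilon' > 0$. To convert this into \emph{relative} error in $\lambda$, I would combine the decrement bound with the strong-concavity bound of \Cref{lemma:convexity_concordance_step}: the inequality $-f''(\lambda; \at) \ge (\alpha - \log^2 d - 1)/\lambda^2$ together with $\delta(\lambda_k) \le \epsilon'$ yields $|\lambda_k - \lambda^\star| \le C\,\lambda^\star\,\epsilon'$ for a universal constant $C$, after observing that $\lambda_k$ and $\lambda^\star$ are within a constant multiplicative factor (which follows already after the first Newton step by the same reasoning as in \Cref{theorem:stability_lambda}). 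Choosing $\epsilon' \asymp \epsilon$ produces the required $(1-\epsilon)\lambda^\star < \lambda_k < (1+\epsilon)\lambda^\star$.

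The interval constraint $\lambda \in (0, \eta]$ is handled by case analysis. Whenever $\max_r\{\at[r]\} \ge -\beta\log^2 d$, by the regime captured in Line~6 of \Cref{algo:ours_lambdaview} the unconstrained maximizer of $f$ exceeds $\eta$ and one simply sets $\lambda^\star = \eta$ without running Newton at all; this case is detected in $O(1)$ time by checking the sign of $f'(\eta;\at)$. In the complementary regime $\lambda^\star$ lies in the interior and self-concordance guarantees the Newton iterates remain in the interior of $(0,\eta]$. The principal technical step is therefore the quantitative bound $\delta(\lambda_0) \le 1/4$, which reduces to a LogSumExp tail estimate at $\lambda_0$ --- precisely the same soft-max concentration argument that motivates the choice of $\lambda_0$ in the proof of \Cref{theorem:stability_lambda} --- and it is here that the hypothesis $\alpha = \Omega(\log^2 d)$ is essential.
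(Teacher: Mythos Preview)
Your proposal is correct and follows the same approach as the paper. The paper itself does not give a detailed proof of this corollary, merely remarking that ``since the analytic guess $\lambda_0$ can be computed efficiently and guarantees a small norm of the Newton step, the standard analysis of Newton's method for self-concordant functions immediately implies'' the result; your write-up fills in exactly these standard details, relying on the same Newton-decrement bound at $\lambda_0$ that is computed in the proof of \Cref{lemma:stability} (namely $\|n(\lambda_0)\|^2_{f''(\lambda_0;\at)} \le 1/(\beta-1) \le 1/16$), the same curvature lower bound from \Cref{lemma:convexity_concordance_step} to pass from intrinsic to relative error, and the same case split on $\max_r\{\at[r]\} \gtrless -\beta\log^2 d$ that underlies \Cref{lemma:lambda_is_one} and Lines~4--7 of \Cref{algo:ours_lambdaview}.
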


A byproduct of this analysis is that solving the learning rate control optimization problem~\ref{eq:opt_problem_lambda} up to a sufficient accuracy requires $O(d \log \log T)$ computational cost per iteration, which is negligible given that reading a reward vector already has complexity $O(d)$.

\begin{algorithm}[htp]
    \SetNoFillComment
    \caption{Kernelized Dynamic Learning Rate Control - Optimistic MWU (\kours)
    }\label{algo:ours_lambdaview_kernelized}
    \DontPrintSemicolon
    \KwData{Learning rate $\eta$, parameters $\alpha$ and $\beta$}\vspace{2mm}
    Set $ {\mut}\^1, \nut\^{0}, \vx\^{0} \gets \vec{0} \in \bbR^{d}, {\sigma}\^1, \lambda\^0 \gets 0 \in \bbR$\;
    \For{$t=1,2,\dots, T$}{
    Set $\mut\^t \gets {\mut}\^t + {\nut}\^{t-1}$ \Comment*{\color{commentcolor}Optimism for utility]\!\!\!\!\!}
    Set $\sigma\^t \gets {\sigma}\^t - \langle \nut\^{t-1}, \vx\^{t-1} \rangle$ \Comment*{\color{commentcolor}Optimism for correction]\!\!\!\!\!}
    
    \tcc{\color{commentcolor}\texttt{Dynamic Learning Rate Control via Kernelized Newton}}

    Set  $\displaystyle \lambda \leftarrow \lambda\^{t - 1} $ \Comment*{\color{commentcolor} Warm-start initialization for Newton]\!\!\!\!\!}
    
    \Repeat{\textup{Convergence of $\lambda$}}{
    \For{$r=1,2,\dots, d$}{
    Set $\displaystyle\vb[r] \leftarrow \exp\{\lambda \mut\^t[r]\}$ \Comment*{\color{commentcolor} See (\ref{eq:b_lambda})]\!\!\!\!\!} 
    }

    \For{$i=1,2,\dots, d$}{
    Set $\displaystyle \expect\mleft[\vv \mright]_i \leftarrow 1 - \frac{K_\Omega(\vb, \vebar_i)}{K_\Omega(\vb, \vec{1}_d)} $ \Comment*{\color{commentcolor} See (\ref{eq:moment1})]\!\!\!\!\!} 
    }

    \For{$i, j=1,2,\dots, d$}{
    \uIf{$ i \neq j$}{
    Set $\displaystyle \expect\mleft[\vv \vv^\top \mright]_{ij}  \leftarrow 1 + \frac{K_\Omega(\vb, \vebar_i)}{K_\Omega(\vb, \vec{1}_d)} + \frac{K_\Omega(\vb, \vebar_j)}{K_\Omega(\vb, \vec{1}_d)} - \frac{K_\Omega(\vb, \vebar_{ij})}{K_\Omega(\vb, \vec{1}_d)} $ \Comment*{\color{commentcolor} See (\ref{eq:moment2})]\!\!\!\!\!}
    }
    
    \Else{
    Set $\displaystyle \expect\mleft[\vv \vv^\top \mright]_{ii}  \leftarrow 1 + \frac{K_\Omega(\vb, \vebar_i)}{K_\Omega(\vb, \vec{1}_d)}  $ \Comment*{\color{commentcolor} See (\ref{eq:moment2.2})]\!\!\!\!\!}
    }
     
    }

    Set $\displaystyle f^\prime(\lambda; \At) \leftarrow (\mut\^t)^\top \expect\mleft[\vv \mright] + \sigma^t + \frac{\alpha - 1}{\lambda}$

    Set $\displaystyle f^{\prime\prime}(\lambda; \At) \leftarrow (\mut\^t)^\top \expect\mleft[\vv \vv^\top \mright] \mut\^t - ((\mut\^t)^\top \expect\mleft[\vv\mright])^2 - \frac{\alpha - 1}{\lambda^2}$

    \tcc{\color{commentcolor}\texttt{Newton Update}}

    Set $\displaystyle \lambda \leftarrow \lambda + \frac{f^\prime(\lambda; \At)}{f^{\prime\prime}(\lambda; \At)}$
    
    }

    Set  $\displaystyle \lambda\^{t} \leftarrow \lambda $ \medskip  
    
    \tcc{\color{commentcolor}\texttt{Kernelized \Opthedge}}
    \For{$r=1,2,\dots, d$}{
    $\displaystyle\vb\^t[r] \leftarrow \exp\{\lambda\^t \mut\^t[r]\}$ \Comment*{\color{commentcolor} See (\ref{eq:b_t})]\!\!\!\!\!} 
    }
    \For{$r=1,2,\dots, d$}{
    $\displaystyle\vx\^t[r] \leftarrow 1 - \frac{K_\Omega(\vb\^t, \vebar_r)}{K_\Omega(\vb\^t, \vec{1}_d)} $ \Comment*{\color{commentcolor} See (\ref{eq:komw_action})]\!\!\!\!\!} 
    }
    Play strategy $\displaystyle\vx\^t$ and observe $\vec \nu \^t \in \bbR^d$\;

    Set $\displaystyle \mut\^t \gets {\mut}\^t + {\nut}\^{t} - {\nut}\^{t-1}$ \Comment*{\color{commentcolor}Empirical cumulated corrections]\!\!\!\!\!}
    Set $\displaystyle \sigma\^t \gets {\sigma}\^t - \langle \nut\^t, \vx\^t \rangle + \langle \nut\^{t-1}, \vx\^{t-1} \rangle $ \Comment*{\color{commentcolor}Empirical cumulated utilities]\!\!\!\!\!}
    }
\end{algorithm}

\section{Extension to 0/1-Polyhedral Games via Kernels} \label{sec:kernel}

Before delving into the analysis of \ours, we remark that---similarly to the \Opthedge algorithm---our algorithm can be applied efficiently to certain classes of polyhedral convex games with $0$/$1$-integral vertices.

Consider a convex $0/1$-polyhedral set $\Omega \subseteq \mathbb{R}^d$, that is, a polytope whose set of vertices $\mathcal{V}_\Omega$ is a subset of $\{0,1\}^d$.
The \Opthedge algorithm can be directly applied on $\Omega$ by keeping track of a distribution $\vchi\^t$ over the vertices and updating the distribution multiplicatively depending on the utility $\Nut[\vv] = \langle \nut, \vv \rangle$ scored by each vertex $\vv \in \mathcal{V}_\Omega$. While this process takes time proportional to the number of vertices $|\mathcal{V}_\Omega|$ if implemented naively, \citet{farina2022kernelized} prove that the process can sometimes be simulated in polynomial time per iteration even when $|\mathcal{V}_{\Omega}|$ is large. Specifically, they show that the update of $\vchi\^t$ and computation of the expectation $\sum_{\vv \in \mathcal{V}_\Omega} \vchi\^t[\vv]\vv$ can be carried out using only \( d + 1 \) evaluations of a \emph{$0/1$-polyhedral kernel}, which they demonstrate can be evaluated efficiently in extensive-form games and various other convex 0/1-polyhedral settings, including $m$-sets, unit cubes, and flows on directed acyclic graphs by building on a prior idea of \citet{takimoto2003path}.

Although the dynamics of \ours are quite similar to those of \Opthedge, extending \ours to a kernelized version (\kours) is not an immediate consequence of \citet{farina2022kernelized}. This extension requires additional considerations, particularly due to the need to solve the dynamic learning rate control problem in \Cref{eq:opt_problem_lambda} at each time step $t$. We will show that this optimization problem can also be addressed using the kernel trick, with novel modifications. Let us begin with definitions of the $0/1$-polyhedral feature mapping, the associated kernel, and the key results from \citet{farina2022kernelized}.

\begin{definition}[0/1-polyhedral feature map and kernel \citep{farina2022kernelized}]
Associated with a convex $0/1$-polyhedral set $\Omega \subseteq \mathbb{R}^d$, define the \emph{0/1-polyhedral feature map} $\phi_\Omega : \mathbb{R}^d \rightarrow \mathbb{R}^{\mathcal{V}_\Omega}$,
\[
\phi_\Omega(\vx)[\vv] \defeq \prod_{\ind: \vv[\ind] = 1} 
\vx[\ind] \quad \forall \, \vx \in \mathbb{R}^d, \, \vv \in \mathcal{V}_\Omega,
\]
and the corresponding \emph{0/1-polyhedral kernel} $K_\Omega : \mathbb{R}^d \times \mathbb{R}^d \rightarrow \mathbb{R}$,
\[
K_\Omega(\vx_1, \vx_2) \defeq \langle \phi_\Omega(\vx_1), \phi_\Omega(\vx_2) \rangle = \sum_{\vv \in \mathcal{V}_\Omega}\prod_{\ind: \vv[\ind] = 1} \vx_1[\ind] \vx_2[\ind] \quad \forall \, \vx_1, \vx_2 \in \mathbb{R}^d.
\]
\end{definition}

We also define auxiliary indicator vectors $\vebar_i, \vebar_{ij} \in \mathbb{R}^d$ for all $i, j \in [d]$, such that
\[
\vebar_i [k] \defeq \mathbb{I}_{k \neq i} = 
\begin{cases}
    0 & \text{if } k = i, \\
    1 & \text{if } k \neq i,
\end{cases} \quad \text{and} \quad \vebar_{ij} [k] \defeq \mathbb{I}_{k \neq i \land k \neq j} = 
\begin{cases}
    0 & \text{if } k = i \text{ or } k = j, \\
    1 & \text{if } k \neq i \text{ and } k \neq j.
\end{cases}.
\]
The embedding of these vectors is useful in kernel computations.

The following proposition ensures that Step~(\ref{eq:softmax}),
\[
    \vchi\^t[\vv] \defeq \frac{\exp\{\lambda\^t \At\^t[\vv]\}}{\sum_{\vv^\prime \in \mathcal{V}_{\Omega}} \exp\{\lambda\^t \At\^t[\vv']\}} \qquad \forall \vv \in \mathcal{V}_{\Omega},
    \numberthis{eq:softmax_vertices}
\]
can be simulated using only $d + 1$ kernel evaluations, assuming that the dynamic learning rate $\lambda\^t$ is available. The key idea is that by embedding a carefully constructed vector $\vb\^t$ into the feature mapping $\phi_\Omega$, the algorithm’s updates—represented as the distribution over the vertices $\vchi\^t \in \Delta(\mathcal{V}_{\Omega})$—and thus the actions $\vx\^t \in \Omega$, become computable via the kernel $K_\Omega$.

\begin{proposition}[Theorem 4.1 and 4.2 of \citep{farina2022kernelized}] \label{prop:kernel_paper}
For all time steps $t \in T$, 
let $\mut\^t \defeq \nut\^t + \sum_{\tau=1}^t  \nut\^\tau$ be the optimistic sum of the utility vectors $\nut\^t$, and define the vector $\vb\^t \in \mathbb{R}^d$ as
\[
    \vb\^t[k] \defeq \exp\{\lambda\^t \mut\^t[k]\} \quad \forall \, k \in [d]. \numberthis{eq:b_t}
\]
Then, the distributions $\vchi\^t$ are proportional to $\phi_\Omega(\vb\^t)$,
\[
    \vchi\^t = \frac{\phi_\Omega(\vb\^t)}{K_\Omega(\vb\^t, \vec{1}_d)},
\]
and the iterates $\vx\^t$ produced by \ours are computed as
\[
    \vx\^t = \sum_{\vv \in \mathcal{V}_{\Omega}} \vchi\^t [\vv] \vv = \mleft[1 - \frac{K_\Omega(\vb\^t, \vebar_1)}{K_\Omega(\vb\^t, \vec{1}_d)}, 1 - \frac{K_\Omega(\vb\^t, \vebar_2)}{K_\Omega(\vb\^t, \vec{1}_d)}, \cdots, 1 - \frac{K_\Omega(\vb\^t, \vebar_d)}{K_\Omega(\vb\^t, \vec{1}_d)} \mright]. \numberthis{eq:komw_action}
\]
\end{proposition}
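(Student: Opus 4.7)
The plan is to split the proposition into its two components — the identity $\vchi\^t = \phi_\Omega(\vb\^t)/K_\Omega(\vb\^t,\vec{1}_d)$, and the closed-form expression for the expected iterate $\vx\^t$ — and to prove each by exploiting the $0/1$-integrality of the vertices $\vv \in \mathcal{V}_{\Omega}$. The single recurring observation is that for any $\vv \in \{0,1\}^d$ and any $\vec{y}\in\bbR^d$, the linear form $\langle \vv,\vec{y}\rangle$ collapses to $\sum_{k:\vv[k]=1}\vec{y}[k]$, so that $\exp\{\langle \vv,\vec{y}\rangle\}$ factorizes as $\prod_{k:\vv[k]=1}\exp\{\vec{y}[k]\}$. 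This is precisely the feature map $\phi_\Omega$ evaluated at the componentwise exponential of $\vec{y}$.

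For the first claim I would apply this observation with $\vec{y}=\lambda\^t\mut\^t$. Since \Opthedge over the vertex set $\mathcal{V}_{\Omega}$ uses $\At\^t[\vv]=\langle \vv,\mut\^t\rangle$ as the cumulative optimistic utility of vertex $\vv$, one obtains $\exp\{\lambda\^t\At\^t[\vv]\}=\prod_{k:\vv[k]=1}\vb\^t[k]=\phi_\Omega(\vb\^t)[\vv]$ by the definition of $\vb\^t$ in \eqref{eq:b_t}. The softmax normalizer in \eqref{eq:softmax_vertices} thus becomes $\sum_{\vv'\in\mathcal{V}_{\Omega}}\phi_\Omega(\vb\^t)[\vv']$, and since $\phi_\Omega(\vec{1}_d)[\vv']=\prod_{k:\vv'[k]=1}1=1$ for every vertex, this sum equals $\langle \phi_\Omega(\vb\^t),\phi_\Omega(\vec{1}_d)\rangle=K_\Omega(\vb\^t,\vec{1}_d)$. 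Combining the two displays yields the claimed representation of $\vchi\^t$.

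For the action formula I would expand $\vx\^t[i]=\sum_{\vv}\vchi\^t[\vv]\vv[i]=1-\sum_{\vv:\vv[i]=0}\vchi\^t[\vv]$, using $\vv[i]\in\{0,1\}$ to convert the expectation into a probability mass over vertices with $\vv[i]=0$. The crux is then identifying the indicator $\mathbb{I}_{\vv[i]=0}$ with $\phi_\Omega(\vebar_i)[\vv]$: if $\vv[i]=1$, the product $\prod_{k:\vv[k]=1}\vebar_i[k]$ contains the zero factor $\vebar_i[i]$, while if $\vv[i]=0$ the product ranges only over $k\neq i$, where $\vebar_i[k]=1$, so the product equals $1$. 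Plugging this indicator into the restricted sum and invoking the representation of $\vchi\^t$ from the previous step collapses it to $K_\Omega(\vb\^t,\vebar_i)/K_\Omega(\vb\^t,\vec{1}_d)$, which is exactly \eqref{eq:komw_action}. The total cost is one evaluation of $K_\Omega(\vb\^t,\vec{1}_d)$ plus $d$ evaluations of $K_\Omega(\vb\^t,\vebar_i)$, i.e., $d+1$ kernel calls.

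There is no genuine technical obstacle here — the argument is essentially bookkeeping and was already established in \citet{farina2022kernelized}; our sole responsibility is to verify that the dynamic learning rate $\lambda\^t$, treated here as an external input, does not disturb the kernel factorization. It does not, because $\lambda\^t$ enters the exponent linearly and is absorbed uniformly into the coordinates of $\vb\^t$, so the two identities above remain verbatim what they were for fixed $\eta$. The only mild point of care is that the statement invokes the \emph{optimistic} cumulative vector $\mut\^t$ (rather than the plain cumulative utility), but this only affects what scalar is plugged into the exponential and is orthogonal to the kernel manipulation.
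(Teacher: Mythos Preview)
Your proposal is correct and follows the standard argument from \citet{farina2022kernelized}; the paper itself does not supply a proof of this proposition, as it is quoted directly as Theorems~4.1 and~4.2 of that reference. The only minor omission is that strictly speaking $\At\^t[\vv]=\langle\vv,\mut\^t\rangle+\sigma\^t$ rather than just $\langle\vv,\mut\^t\rangle$, but since the correction $\sigma\^t$ is vertex-independent it cancels in the softmax, so your simplification is harmless.
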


It remains to show that, for each time step $t$, the dynamic learning rate $\lambda\^t$ can be determined via kernelization. By \Cref{coro:newton}, we need to verify that the Newton steps of the optimization problem in \Cref{eq:opt_problem_lambda} can be simulated by the kernel $K_\Omega$. The Newton algorithm requires the calculation of $f^\prime(\lambda; \At)$ and $f^{\prime\prime}(\lambda; \At)$,\footnote{To simplify notation, we omit the superscript $t$ from this point onward in this section whenever it is clear from the context.}
\[
 f^\prime(\lambda; \At) & = \frac{\sum_{\vv \in \mathcal{V}_{\Omega}} \At[\vv] e^{\lambda \At[\vv]}}{\sum_{\vv \in \mathcal{V}_{\Omega}} e^{\lambda \At[\vv]}} + \frac{\alpha - 1}{\lambda} \\
 f^{\prime\prime}(\lambda; \At) & = \frac{\sum_{\vv \in \mathcal{V}_{\Omega}} \At[\vv]^2 e^{\lambda \At[\vv]} }{\sum_{\vv \in \mathcal{V}_{\Omega}} e^{\lambda \At[\vv]}} - \mleft( \frac{\sum_{\vv \in \mathcal{V}_{\Omega}} \At[\vv] e^{\lambda \At[\vv]}}{\sum_{\vv \in \mathcal{V}_{\Omega}} e^{\lambda \At[\vv]}}\mright)^2 - \frac{\alpha - 1}{\lambda^2},
\]
where the vector $\At$ is potentially of exponential size. We recall that by \Cref{eq:softmax_vertices}, $\vchi$ can be seen as a discrete random variable. We revisit the representation of $f^\prime(\lambda; \At)$ and $f^{\prime\prime}(\lambda; \At)$. Interestingly, they can be rewritten as
\[
f^\prime(\lambda; \At) & = \expect\mleft[\At[\vv] \mright] + \frac{\alpha - 1}{\lambda} \\
f^{\prime\prime}(\lambda; \At) & = \expect\mleft[ \At[\vv]^2\mright] - \expect\mleft[\At[\vv]  \mright]^2 - \frac{\alpha - 1}{\lambda^2},
\]
where the expectations are taken with respect to the distribution $\vv \sim \vchi$. Consequently, it suffices to verify that the first and second moments of $\At$ with respect to the distribution $\vchi$ can be computed via the kernelization approach. Given that
$\At[\vv] = \langle \mut , \vv \rangle + \sigma$, where
\[
\mut \defeq \nut\^t + \sum_{\tau=1}^t  \nut\^\tau \qquad\text{and}\qquad \sigma \defeq - (\langle \nut\^t, \vx\^t \rangle + \sum_{\tau=1}^t \langle \nut\^\tau, \vx\^\tau \rangle),
\]
we infer that
\[
    \expect\mleft[\At[\vv] \mright] = \expect\mleft[\langle \mut, \vv \rangle \mright] + \sigma = \langle  \mut, \expect\mleft[\vv \mright] \rangle + \sigma. 
\]
And,
\[
\quad \expect\mleft[\At[\vv]^2 \mright] = \expect\mleft[(\langle \mut, \vv \rangle + \sigma) ^2 \mright] = \expect\mleft[\langle \mut, \vv \rangle^2 \mright] + 2 \sigma \expect\mleft[\langle \mut, \vv \rangle \mright] + \sigma^2 = \mut^\top \expect\mleft[\vv \vv^\top \mright] \mut + 2 \sigma \langle \mut, \expect\mleft[\vv \mright] \rangle + \sigma^2,
\]
by multilinearity. We can simplify the $f^{\prime\prime}(\lambda; \At)$ term further,
\[
f^{\prime\prime}(\lambda; \At) & = \mut^\top \expect\mleft[\vv \vv^\top \mright] \mut + 2 \sigma \mut^\top \expect\mleft[\vv \mright] + \sigma^2 - (\mut^\top \expect\mleft[\vv \mright] + \sigma)^2 - \frac{\alpha - 1}{\lambda^2} \\
& = \mut^\top \expect\mleft[\vv \vv^\top \mright] \mut - (\mut^\top \expect\mleft[\vv\mright])^2 - \frac{\alpha - 1}{\lambda^2}
\]
Hence, it is adequate to calculate $\expect\mleft[\vv \mright]$ and $\expect\mleft[\vv \vv^\top \mright]$, which we formalize in the following Proposition.

\begin{proposition} \label{prop:kernel_order_two}
    Define the vector $\vb \in \mathbb{R}^d$ as
    \[
        \vb[k] \defeq \exp\{\lambda \mut[k]\} \quad \forall \, k \in [d]. \numberthis{eq:b_lambda}
    \]
    Then,
    \[
    \expect\mleft[\vv \mright] = \mleft[1 - \frac{K_\Omega(\vb, \vebar_1)}{K_\Omega(\vb, \vec{1}_d)}, 1 - \frac{K_\Omega(\vb, \vebar_2)}{K_\Omega(\vb, \vec{1}_d)}, \cdots, 1 - \frac{K_\Omega(\vb, \vebar_d)}{K_\Omega(\vb, \vec{1}_d)} \mright], \numberthis{eq:moment1}
    \]
    And,
    \[
    \expect\mleft[\vv \vv^\top \mright]_{ij} = 1 + \frac{K_\Omega(\vb, \vebar_i)}{K_\Omega(\vb, \vec{1}_d)} + \frac{K_\Omega(\vb, \vebar_j)}{K_\Omega(\vb, \vec{1}_d)} - \frac{K_\Omega(\vb, \vebar_{ij})}{K_\Omega(\vb, \vec{1}_d)} \numberthis{eq:moment2},
    \]
    for all $i, j \in [d]$, where $i \neq j$, and
    \[
        \expect\mleft[\vv \vv^\top \mright]_{ii} = 1 + \frac{K_\Omega(\vb, \vebar_i)}{K_\Omega(\vb, \vec{1}_d)} \numberthis{eq:moment2.2},
    \]
    for all $i \in [d]$.
\end{proposition}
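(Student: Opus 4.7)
}

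The key observation driving the proof is that the Boltzmann weights $\exp\{\lambda \At[\vv]\}$ factor through the polyhedral feature map $\phi_\Omega$. Specifically, since $\vv \in \mathcal{V}_\Omega \subseteq \{0,1\}^d$, we have
\[
\exp\{\lambda \At[\vv]\} \;=\; e^{\lambda \sigma}\, \exp\{\lambda \langle \mut, \vv\rangle\} \;=\; e^{\lambda \sigma} \prod_{k=1}^d \vb[k]^{\vv[k]} \;=\; e^{\lambda \sigma}\, \phi_\Omega(\vb)[\vv],
\]
using that $\vb[k]^{\vv[k]} = \vb[k]$ if $\vv[k]=1$ and $1$ otherwise. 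Summing over $\vv \in \mathcal{V}_\Omega$ and dividing shows that the normalizing $e^{\lambda\sigma}$ cancels, so the distribution $\vchi$ in \eqref{eq:softmax_vertices} is exactly $\vchi[\vv] = \phi_\Omega(\vb)[\vv] / K_\Omega(\vb, \vec{1}_d)$, with $K_\Omega(\vb,\vec{1}_d) = \sum_{\vv \in \mathcal{V}_\Omega}\phi_\Omega(\vb)[\vv]$.

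The next step is to translate indicator events on coordinates of $\vv$ into kernel evaluations. Because $\vebar_i[k] = \mathbb{I}[k \neq i]$, the product $\phi_\Omega(\vebar_i)[\vv] = \prod_{k:\vv[k]=1}\vebar_i[k]$ equals $1$ exactly when $\vv[i]=0$ and $0$ otherwise; an analogous statement holds for $\vebar_{ij}$, with $\phi_\Omega(\vebar_{ij})[\vv] = \mathbb{I}[\vv[i]=0 \wedge \vv[j]=0]$. Consequently,
\[
K_\Omega(\vb, \vebar_i) \;=\; \sum_{\vv:\,\vv[i]=0} \phi_\Omega(\vb)[\vv], \qquad K_\Omega(\vb, \vebar_{ij}) \;=\; \sum_{\vv:\,\vv[i]=0,\vv[j]=0} \phi_\Omega(\vb)[\vv].
\]
Dividing these identities by $K_\Omega(\vb, \vec{1}_d)$ expresses $\Pr_{\vv \sim \vchi}[\vv[i]=0]$ and $\Pr_{\vv \sim \vchi}[\vv[i]=0,\vv[j]=0]$ as ratios of kernel evaluations.

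With these building blocks, the moment formulas follow by elementary manipulations. For the first moment, $\vv[i] = 1 - \mathbb{I}[\vv[i]=0]$ yields \eqref{eq:moment1} directly. For the diagonal second moment, $\vv[i] \in \{0,1\}$ gives $\vv[i]^2 = \vv[i]$, so \eqref{eq:moment2.2} is simply a restatement of \eqref{eq:moment1}. For the off-diagonal entries, I would apply inclusion–exclusion,
\[
\vv[i]\,\vv[j] \;=\; 1 - \mathbb{I}[\vv[i]=0] - \mathbb{I}[\vv[j]=0] + \mathbb{I}[\vv[i]=0,\,\vv[j]=0],
\]
take expectations under $\vchi$, and substitute the kernel expressions above to obtain \eqref{eq:moment2} (modulo a check of the signs as stated).

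The only nontrivial step is the initial factorization that reduces the exponential-size sum defining $\vchi$ to an evaluation of $\phi_\Omega$; everything else is combinatorial bookkeeping. The main obstacle, if any, is less mathematical than notational: carefully tracking the correspondence between the $\{0,1\}$-integrality of vertices, the multiplicative structure of $\phi_\Omega$, and the additive linearity of $\langle \mut, \vv\rangle$ under exponentiation. Once this correspondence is made precise, no further ingredients from \Cref{prop:kernel_paper} or the dynamic learning rate analysis are required.
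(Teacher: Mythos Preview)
Your approach is correct and essentially identical to the paper's: both establish the indicator identities $\phi_\Omega(\vebar_i)[\vv] = \mathbb{I}[\vv[i]=0]$ and $\phi_\Omega(\vebar_{ij})[\vv] = \mathbb{I}[\vv[i]=\vv[j]=0]$, then apply inclusion--exclusion to $\vv[i]\vv[j]$ and read off the kernel ratios. Your parenthetical about signs is well-placed---the derivation via $\vv[i]^2=\vv[i]$ and inclusion--exclusion actually yields $1 - K_\Omega(\vb,\vebar_i)/K_\Omega(\vb,\vec{1}_d)$ on the diagonal and the opposite signs off-diagonal, so the discrepancy with the stated formulas is a typo in the paper (its own proof line carries the same slip), not a flaw in your argument.
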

\begin{proof}
    We start by $\expect\mleft[\vv \mright]$. 
    \[
    \expect\mleft[\vv \mright] = \sum_{\vv \in \mathcal{V}_{\Omega}} \vchi [\vv] \vv = \vx,
    \]
    the rest follows similar to \Cref{prop:kernel_paper}. Next, we analyze $\expect\mleft[\vv \vv^\top \mright]$. First we show that, for every $i. j \in [d]$,
    \[
    \phi_\Omega (\vebar_i) [\vv] = \prod_{k: \vv[k] = 1} \vebar_i[k] = \prod_{k: \vv[k] = 1} \mathbb{I}_{k \neq i} =  \mathbb{I}_{k \notin \vv}
    \]
    and by the fact that $\phi_\Omega (\vec 1_d) = \vec 1 $,
    \[
    \phi_\Omega (\vec 1_d)[\vv] - \phi_\Omega (\vebar_i) [\vv] =  \mathbb{I}_{i \in \vv}.
    \]
    Similarly,
    \[
    \phi_\Omega (\vebar_{ij}) [\vv] = \prod_{k: \vv[k] = 1} \vebar_{ij}[k] = \prod_{k: \vv[k] = 1} \mathbb{I}_{k \neq i \text{ and } k \neq j} =  \mathbb{I}_{i, j \notin \vv}
    \]
    For every $i. j \in [d]$,
    \[
    \expect\mleft[\vv_i \vv_j \mright] & = \sum_{\vv \in \mathcal{V}_{\Omega}} \vchi [\vv] \vv_i \vv_j \\
    & = \sum_{\vv \in \mathcal{V}_{\Omega}} \vchi [\vv].\mathbb{I}_{i, j \in \vv} \\
    & = \sum_{\vv \in \mathcal{V}_{\Omega}} \vchi [\vv] . ( 1 - \mathbb{I}_{i \notin \vy} - \mathbb{I}_{j \notin \vy} + \mathbb{I}_{i, j \notin \vv}) \numberthis{eq:inc_exc_princ} \\
    & = 1 - \sum_{\vv \in \mathcal{V}_{\Omega}} \vchi [\vv] . ( - \phi_\Omega (\vebar_i) [\vv] - \phi_\Omega (\vebar_j) [\vv] + \phi_\Omega (\vebar_{ij}) [\vv] ) \\
    & = 1 + \frac{K_\Omega(\vb, \vebar_i)}{K_\Omega(\vb, \vec{1}_d)} + \frac{K_\Omega(\vb, \vebar_j)}{K_\Omega(\vb, \vec{1}_d)} - \frac{K_\Omega(\vb, \vebar_{ij})}{K_\Omega(\vb, \vec{1}_d)}
    \] 
    where line (\ref{eq:inc_exc_princ}) follows from the inclusion–exclusion principle, and in the last line, we applied \Cref{prop:kernel_paper} multiple times. The case with $i = j$ is quite similar.
\end{proof}

An immediate byproduct of \Cref{prop:kernel_order_two} is that $d^2 + 1$ evaluations of Kernel $K_\Omega$ are sufficient for each iteration of the Newton optimization algorithm for Problem~(\ref{eq:opt_problem_lambda}).

\begin{corollary}
    \kours requires $(d + 1) + (d^2 + 1) \, O(\log \log T)$ kernel evaluations at each time step $t$, where the first $d + 1$ evaluations are used for Step~(\ref{eq:softmax_vertices}), and the remaining $(d^2 + 1) \, O(\log \log T)$ evaluations are for computing the dynamic learning rate $\lambda\^t$ via the Newton algorithm. \kours achieves $O(n \log^2 |\mathcal{V}_\Omega| \log T)$ regret in the self-play setting, and $\widetilde{O}(\sqrt{T \log |\mathcal{V}_\Omega|})$ regret in adversarial settings.
\end{corollary}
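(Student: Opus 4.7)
The proof decomposes into two independent claims: a per-iteration kernel complexity bound and a regret transfer from the lifted softmax on vertices to the polyhedral iterates in $\Omega$.

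For the kernel complexity, I would separate the work done at each time step $t$ into (a) computing the action $\vx\^t$ via \eqref{eq:komw_action} once $\lambda\^t$ is known, and (b) computing $\lambda\^t$ via the Newton subroutine. Part (a) is immediate from Proposition~\ref{prop:kernel_paper}: evaluating $K_\Omega(\vb\^t, \vec{1}_d)$ and $K_\Omega(\vb\^t, \vebar_r)$ for every $r \in [d]$ costs exactly $d+1$ evaluations. For part (b), Corollary~\ref{coro:newton} guarantees that $O(\log\log T)$ Newton iterations, warm-started at the analytical estimate $\lambda_0$ of Theorem~\ref{theorem:stability_lambda}, drive the relative error below any $1/\poly(T)$ threshold; and by Proposition~\ref{prop:kernel_order_two}, each Newton step can be implemented by assembling $\expect[\vv]$ via $d+1$ kernel calls plus the symmetric matrix $\expect[\vv\vv^\top]$ via $O(d^2)$ further calls, yielding the claimed $d^2+1$ evaluations per Newton step.

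For the regret bounds, I would argue by \emph{reduction}: \kours is operationally identical to \ours run over the simplex $\Delta(\mathcal{V}_\Omega)$ of size $|\mathcal{V}_\Omega|$, with utility vector $\At\^t$ indexed by vertices. Both the softmax update \eqref{eq:softmax_vertices} and the learning-rate optimization \eqref{eq:opt_problem_lambda} are defined identically on this simplex; the kernelization only provides a polynomial-time implementation. By linearity of the utility $\langle \nut\^t, \cdot\rangle$ and the representation $\vx\^t = \sum_\vv \vchi\^t[\vv]\,\vv$, for any comparator $\vx^* \in \Omega$ we have
\[
\sum_{t=1}^T \langle \nut\^t, \vx^* - \vx\^t\rangle \;\le\; \max_{\vv \in \mathcal{V}_\Omega}\sum_{t=1}^T \langle \nut\^t, \vv\rangle - \sum_{t=1}^T \sum_{\vv \in \mathcal{V}_\Omega} \vchi\^t[\vv]\, \langle \nut\^t, \vv\rangle,
\]
since the maximum of a linear function over a polytope is attained at a vertex. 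The right-hand side is exactly the regret of \ours on $\Delta(\mathcal{V}_\Omega)$, so invoking Theorem~\ref{theorem:informal_regret_bound} with $d$ replaced by $|\mathcal{V}_\Omega|$ yields $O(n \log^2 |\mathcal{V}_\Omega| \log T)$ in self-play and $\tilde{O}(\sqrt{T \log |\mathcal{V}_\Omega|})$ in the adversarial case.

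The main obstacle I anticipate is reconciling the inexact Newton-based $\lambda\^t$ with the analysis of \ours, which assumes exact maximizers of \eqref{eq:opt_problem_lambda}. The multiplicative stability granted by Theorem~\ref{theorem:stability_lambda} is the right hammer here: picking the Newton tolerance as a small enough polynomial in $1/T$ keeps the computed $\lambda\^t$ within a $(1\pm o(1))$ factor of the exact optimum, which is absorbed into the constants of the regret bound without affecting its asymptotic form. Because the required accuracy enters only logarithmically through the doubly-exponential convergence of Newton on self-concordant objectives, $O(\log\log T)$ iterations still suffice, justifying the stated per-iteration cost.
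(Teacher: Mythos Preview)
Your regret-transfer argument is correct and matches the intended reasoning: \kours is precisely \ours on $\Delta(\mathcal{V}_\Omega)$, the polytope regret is dominated by the vertex-simplex regret via linearity, and Theorem~\ref{theorem:informal_regret_bound} applies with $d \leftarrow |\mathcal{V}_\Omega|$. The per-Newton-step kernel count of $d^2+1$ from Proposition~\ref{prop:kernel_order_two} and the $d+1$ count for the action step from Proposition~\ref{prop:kernel_paper} are also right.

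The gap is in how you justify that $O(\log\log T)$ Newton iterations suffice. You invoke Corollary~\ref{coro:newton}, but that corollary assumes Newton is initialized at the analytical estimate $\lambda_0 = (\alpha-1)/(-\max_r \at[r])$. In the kernelized setting this becomes $\lambda_0 = (\alpha-1)/(-\max_{\vv \in \mathcal{V}_\Omega} \At\^t[\vv])$, and computing that maximum over a potentially exponential vertex set is \emph{not} a kernel evaluation and, as the paper explicitly notes just after the corollary, ``not necessarily efficient.'' This is exactly why Algorithm~\ref{algo:ours_lambdaview_kernelized} warm-starts Newton from $\lambda\^{t-1}$ rather than from the analytic guess, and why the paper supplies Observation~\ref{observation:newton_kernel} as the ``final piece of the puzzle'': the multiplicative stability of Theorem~\ref{theorem:stability_lambda} ensures $\lambda\^{t-1}$ is already close enough to $\lambda\^t$ that the Newton decrement at $\lambda\^{t-1}$ is small, placing the iterate in the quadratic-convergence region from the start. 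Your last paragraph invokes Theorem~\ref{theorem:stability_lambda} only to handle inexactness of the \emph{output} of Newton, not to justify the \emph{initialization}; you need it for both. Without this argument, the claimed $O(\log\log T)$ iteration count---and hence the stated kernel complexity---is not established for \kours as actually defined.
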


We conclude this section with the final piece of the puzzle for \Cref{algo:ours_lambdaview_kernelized} (\kours). Since computing the initial guess $\lambda_0 = (\alpha - 1)/(- \max_{\vv \in \mathcal{V}_{\Omega}} \{\At\^t[\vv]\})$—the initialization point for the Newton algorithm—is not necessarily efficient, we warm-start the Newton algorithm at each iteration $t$ by using the previous solution $\lambda_0 = \lambda\^{t - 1}$ from time $t - 1$. It is straightforward to show that, due to the multiplicative stability property of the learning rates (\Cref{theorem:stability_lambda}), setting $\lambda_0 = \lambda\^{t - 1}$ provides a sufficiently accurate analytical estimate for initializing the Newton algorithm. We formalize this observation below.

\begin{observation}\label{observation:newton_kernel}
    Given the dynamic learning rate control optimization problem~\cref{eq:opt_problem_lambda} for the regret vector $\At\^t$ of the game over $\Delta(\mathcal{V}_\Omega)$,
    \[
        \lambda\^t = \argmax_{\lambda \in (0, \eta]} \left\{ f(\lambda; \At\^t) \defeq (\alpha - 1) \log \lambda + \log \left( \sum_{\vv \in \mathcal{V}_\Omega} e^{\lambda \At\^t[\vv]} \right) \right\},
    \]
    Newton's method, warm-started from $\lambda_0 = \lambda\^{t-1}$, converges in $O(\log \log 1/\epsilon)$ iterations to the optimal solution $\lambda\^t$ with a relative error of at most $\epsilon$.
\end{observation}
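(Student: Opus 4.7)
The plan is to use the multiplicative stability of the learning rates established by \Cref{theorem:stability_lambda} to argue that the warm-start $\lambda_0 = \lambda\^{t-1}$ already lies inside the quadratic-convergence region of Newton's method for the strongly-concave, self-concordant objective $f(\,\cdot\,;\At\^t)$. Once this is shown, the $O(\log\log 1/\epsilon)$ iteration count follows from the classical convergence theory for Newton's method on self-concordant functions, exactly as in \Cref{coro:newton}.

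The first step is to bound the round-to-round change of the regret vector. Under \Cref{assumption:bound}, we have $|\mathcal{U}_j(s)|\le 1$ for every joint deterministic action, so for every vertex $\vv\in\mathcal{V}_\Omega\subseteq\{0,1\}^d$ the scalar payoff $\langle\nut\^\tau,\vv\rangle$ and the played-action payoff $\langle\nut\^\tau,\vx\^\tau\rangle$ are uniformly bounded by $1$. Tracing the update rules for $\mut$ and $\sigma$ in \Cref{algo:ours_lambdaview_kernelized} and using $\At\^t[\vv]=\langle\mut^{\text{opt at }t},\vv\rangle+\sigma^{\text{opt at }t}$, one obtains $\|\At\^t-\At\^{t-1}\|_\infty\le C$ for a small absolute constant $C$ (the bound is a finite sum of optimistic and non-optimistic corrections, each of magnitude at most $1$).

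Next, I would invoke \Cref{theorem:stability_lambda} --- or, if needed, a trivial rescaling absorbing the constant $C$ in place of $2$ into the choice of $\beta$ --- to conclude that the warm-start $\lambda\^{t-1}$ and the target $\lambda\^t$ are multiplicatively close, say $\lambda\^{t-1}/\lambda\^t\in[c_1,c_2]$ for absolute constants $0<c_1<c_2<\infty$. Combining this with the strong-concavity bound $|f''(\lambda;\At\^t)|\ge(\alpha-\log^2 d-1)/\lambda^2$ from \Cref{lemma:convexity_concordance_step}, the multiplicative proximity between $\lambda\^{t-1}$ and $\lambda\^t$ translates directly into a bound on the self-concordant (Dikin-ball) distance between the two points: integrating $\sqrt{|f''|}\,d\lambda$ along the interval $[\min\{\lambda\^{t-1},\lambda\^t\},\max\{\lambda\^{t-1},\lambda\^t\}]$ yields a value of the order $\sqrt{\alpha}\,\log(c_2/c_1)$, which --- after dividing by the self-concordance-implied curvature scale --- puts $\lambda\^{t-1}$ inside a Dikin ball of constant radius around $\lambda\^t$. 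Choosing the constants in $\alpha$ and $\beta$ appropriately, this radius can be made smaller than the classical $1/4$ threshold of Nesterov, so that the initial Newton decrement is below the quadratic-convergence threshold.

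Finally, I would invoke the standard quadratic-convergence theorem for damped/pure Newton's method on self-concordant functions (Nesterov--Nemirovski): once inside the basin of attraction, the Newton decrement squares at every iteration, and the sub-optimality in $\lambda$ is controlled by a constant multiple of the decrement; hence $O(\log\log 1/\epsilon)$ iterations suffice to achieve relative error $\epsilon$ on $\lambda$. The main technical obstacle will be the quantitative bookkeeping in the second step: showing that the constants produced by \Cref{theorem:stability_lambda} applied under $\|\At\^t-\At\^{t-1}\|_\infty\le C$ (rather than $\le 2$) still yield a Dikin radius below the quadratic-convergence threshold for Newton on self-concordant functions; this is routine but requires tracking the universal constants $\beta$, $c_1$, $c_2$ with some care.
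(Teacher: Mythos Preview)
Your high-level plan coincides with the paper's: show that the warm start $\lambda\^{t-1}$ lies in the quadratic-convergence region of Newton's method for the self-concordant objective $f(\,\cdot\,;\At\^t)$, then cite the standard $O(\log\log 1/\epsilon)$ rate. Your first step (bounding $\|\At\^t-\At\^{t-1}\|_\infty$) and your last step are correct. The gap is in your middle step, where you convert the multiplicative ratio $\lambda\^{t-1}/\lambda\^t\in[c_1,c_2]$ from \Cref{theorem:stability_lambda} into a small Dikin radius. Since $|f''(\lambda;\At\^t)|\asymp\alpha/\lambda^2$ with $\alpha=\Theta(\log^2|\mathcal V_\Omega|)$, the local-norm distance is $\sqrt{|f''|}\,|\lambda\^{t-1}-\lambda\^t|\asymp\sqrt{\alpha}\,\bigl|\lambda\^{t-1}/\lambda\^t-1\bigr|$. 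The constants $c_1,c_2$ in \Cref{theorem:stability_lambda} are absolute (e.g.\ $7/10$ and $7/5$), so this distance is $\Theta(\log|\mathcal V_\Omega|)$, not below any fixed threshold. There is no ``self-concordance-implied curvature scale'' to divide by: the function is standard self-concordant, and that is the Dikin distance. Pushing $\beta$ larger does not rescue the argument either, since the multiplicative deviation in the proof of \Cref{theorem:stability_lambda} improves only like $O(1/\beta)$, leaving a Dikin distance of order $\log|\mathcal V_\Omega|/\sqrt\beta$; making this $\le 1/4$ would force $\beta$ to grow with $|\mathcal V_\Omega|$ and spoil the regret bound.

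What the paper actually does (this is the content of ``following similar reasoning to the proof of \Cref{theorem:stability_lambda}'') is bound the Newton decrement $|f'(\lambda\^{t-1};\At\^t)|\big/\sqrt{|f''(\lambda\^{t-1};\At\^t)|}$ \emph{directly}, rather than the distance to the optimum. The point is that $\lambda\^{t-1}$ is the exact maximizer of the previous objective, so $f'(\lambda\^{t-1};\At\^{t-1})=0$ in the interior case; then one controls $|f'(\lambda\^{t-1};\At\^t)-f'(\lambda\^{t-1};\At\^{t-1})|$ by the same entropy and log-sum-exp estimates used in \Cref{lemma:stability}, obtaining a bound of order $\log|\mathcal V_\Omega|$. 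Combined with $|f''|\gtrsim\log^2|\mathcal V_\Omega|/(\lambda\^{t-1})^2$ and $\lambda\^{t-1}\le\eta$, the Newton decrement is $O(\eta)$, comfortably under the threshold. So your outline needs only this one replacement: bound the decrement, not the Dikin distance.
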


\begin{proof}
    By the analysis of Newton’s method for self-concordant functions~\citep{renegar2001mathematical}, it is sufficient to show that the size of the Newton step at initialization $\lambda_0 = \lambda\^{t - 1}$ is small in the local norm, i.e.,
    \[
        \| n(\lambda_0) \|_{f^{\prime\prime}(\lambda_0; \At\^t)} = \frac{f^{\prime}(\lambda_0; \At\^t)^2}{|f^{\prime\prime}(\lambda_0; \At\^t)|} \leq 1,
    \]
    where $n(\lambda_0)$ denotes the Newton step. By the multiplicative stability of $\lambda\^t$ and our choice of initialization $\lambda_0 = \lambda\^{t - 1}$—as implied by the sensitivity of the learning rate to the regret vector in \Cref{theorem:stability_lambda}, and following similar reasoning to the proof of \Cref{theorem:stability_lambda}—this bound on the Newton step size can be readily established. We omit the details here in the interest of space.
\end{proof}

\section{Analysis of the Dynamic Learning Rate Control}
\label{sec:regret_analysis}

In this section, we present our results and analysis of the regret for \ours. A cornerstone of this analysis is the following regret bound (\Cref{theorem:rvu}), which follows the style of the \emph{nonnegative RVU property}, originally introduced by \citet{farina2022near}. This property differs from the original RVU property discussed in \Cref{section:background}, and is stronger in the sense that the nonnegative RVU property directly implies the RVU property.

\begin{theorem}[Nonnegative RVU bound of \ours] \label{theorem:rvu}
    Consider the cumulative regret $\tildereg\^T$ accrued by \ours algorithm up to time $T$. Assuming that $\|  {\nut}\^{t} \|_\infty \leq 1$ is satisfied for all $t \in [T]$, it follows that for any time $T \in \bbN$ and any learning rate $\eta \leq \frac{1}{50}$ and $\beta$ high enough ($\beta \geq 70$),
    \[
        \max\{0, \reg\^T\} \leq 3 + \frac{\alpha \log T + \log d}{\eta} + 6 \eta \sum_{t=1}^{T-1} \| \nut\^{t} - \nut\^{t-1} \|_{\infty}^2 - \frac{1}{24\eta} \sum_{t=1}^{T-1} \| \vx\^{t+1} - \vx\^{t}  \|_{1}^2.
    \]
\end{theorem}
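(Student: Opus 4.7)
The plan is to interpret \ours as an instantiation of optimistic FTRL (OFTRL) on a lifted strategy space with a novel regularizer $\psi$, and then derive the nonnegative RVU bound by combining the standard OFTRL per-round inequality with the multiplicative stability of the dynamic learning rate from \Cref{theorem:stability_lambda} and the strong concavity of the rate-control objective from \Cref{lemma:convexity_concordance_step}. First, I would rewrite the two updates in Lines~\ref{line:oftrl0}--\ref{line:norm0} of \Cref{algo:ours_lambdaview} as the single variational problem
\[
(\vx\^t,\lambda\^t)\;=\;\argmax_{\vx\in\Delta^d,\,\lambda\in(0,\eta]}\Bigl\{\langle\vx,\lambda\at\^t\rangle-\ent{\vx}+(\alpha-1)\log\lambda\Bigr\},
\]
i.e.\ a proximal step with implicit regularizer $\psi(\vx,\lambda)=\ent{\vx}-(\alpha-1)\log\lambda$. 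Via the change of variable $\vz=\lambda\vx$, this becomes OFTRL on the cone $\{\vz\ge\vec 0:\sum_{\ind}\vz[\ind]\in(0,\eta]\}$, whose apex $\vz=\vec0$ serves as an artificial ``null'' action with zero regret. That null coordinate is what upgrades the final bound from $\reg\^T$ to $\max\{0,\reg\^T\}$ and is responsible for the small additive constant $3$ in the statement.

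Second, I would run the standard OFTRL stability argument for this lifted $\psi$, yielding a three-term decomposition of the form
\[
\max\{0,\reg\^T\}\;\le\;\frac{\psi_{\max}-\psi_{\min}}{\eta}\;+\;\text{prediction term}\;-\;\text{stability term}.
\]
The first term is bounded by $\alpha\log T+\log d+O(1)$: the $\log d$ comes from the entropy piece and the $\alpha\log T$ from the $-(\alpha-1)\log\lambda$ piece, once one uses the coarse lower bound $\lambda\^t=\Omega(1/T)$ along the trajectory. This lower bound itself follows from the analytic estimate $\lambda_0\asymp(\alpha-1)/|\max_\ind\at\^t[\ind]|$ together with the self-concordance in \Cref{lemma:convexity_concordance_step}, exactly in the spirit of the proof of \Cref{theorem:stability_lambda}. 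The explicit constants $6$, $\tfrac{1}{24}$ and the hypothesis $\eta\le\tfrac{1}{50}$ then arise from translating the local geometry of $\psi$ (strong convexity at $\vx\^t$ with respect to $\|\cdot\|_1$, at a rate proportional to $\lambda\^t$) into the stated prediction and stability terms.

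The delicate point, and the main obstacle, is that the effective step size $\lambda\^t$ changes round to round, whereas the usual OFTRL telescoping presupposes a constant rate. This is resolved by \Cref{theorem:stability_lambda}: since $\|\at\^{t+1}-\at\^t\|_\infty\le\|\nut\^t\|_\infty+|\langle\nut\^t,\vx\^t\rangle|\le 2$ under \Cref{assumption:bound}, and the hypothesis $\beta\ge 70$ is precisely what is required there, one obtains $\lambda\^{t+1}/\lambda\^t\in[\tfrac{7}{10},\tfrac{7}{5}]$. This constant ratio is absorbed into the constants $6$ and $\tfrac{1}{24}$, leaves the negative stability term $-\tfrac{1}{24\eta}\|\vx\^{t+1}-\vx\^{t}\|_1^2$ uncancelled, and allows the per-round differences to telescope to the claimed inequality. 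Without multiplicative stability, a collapsing learning rate at round $t+1$ could mimic iterate movement and thereby destroy the nonnegative RVU property.
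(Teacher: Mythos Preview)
Your high-level plan matches the paper's approach closely: rewrite \ours as OFTRL on the lifted variable $\vy=\lambda\vx\in(0,1]\Delta^d$ with the regularizer $\psi$ of~\eqref{eq:regularizer}, use the null comparator $\vy=\vec 0$ to upgrade to $\max\{0,\reg\^T\}$ (this is \Cref{prop:reg+}), and then apply the standard three-term OFTRL decomposition (\Cref{lemma:basi_OFTRL}). Two of the ingredients you describe, however, are not how the proof actually goes through, and one of them is a genuine gap.

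First, the $\alpha\log T$ in term (I) does \emph{not} come from a trajectory bound $\lambda\^t=\Omega(1/T)$. The regularizer-range term is $\psi(\vy^\star)-\psi(\vy\^1)$, where $\vy^\star$ is the \emph{comparator}, and the problem is that $\psi$ blows up as $\summ{\vy^\star}\to 0$. The paper handles this by the standard mixing trick: replace $\vy^\star$ by $\vy'=\tfrac{T-1}{T}\vy^\star+\tfrac{1}{T}\vy\^1$, pay an additive $2$ from H\"older, and then $\summ{\vy'}\ge 1/T$ gives $-\alpha\log\summ{\vy'}\le\alpha\log T$. Your self-concordance estimate for $\lambda\^t$ is correct in spirit but irrelevant here.

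Second, and more seriously, your stability step does not close. You write that $\psi$ is strongly convex in $\vx$ ``at a rate proportional to $\lambda\^t$''; if you go through $\|\vy-\vz\|_1^2$ and pull back via $\vy=\lambda\vx$, the factor you inherit is $(\lambda\^t)^2$, which can be arbitrarily small, so the term $-\tfrac{1}{24\eta}\|\vx\^{t+1}-\vx\^t\|_1^2$ would degrade to $-\tfrac{(\lambda\^t)^2}{\eta}\|\vx\^{t+1}-\vx\^t\|_1^2$ and the RVU bound fails. The paper's resolution is a direct lower bound on $D_\psi$ in the \emph{simplex} variable (\Cref{prop:el1_simplex}): using the exact representation $\eta D_\psi(\vz\|\vy)=(\alpha-1)D_{\log}(\summ{\vz}\|\summ{\vy})+\omega\,\kl{\vtheta}{\vx}+(1-\omega)\bigl(\ent{\vtheta}-\ent{\vx}\bigr)$ with $\omega=\summ{\vz}/\summ{\vy}$, one controls the cross term $|\ent{\vtheta}-\ent{\vx}|\le(\log d)\sqrt{2\kl{\vtheta}{\vx}}$, completes the square against the $(\alpha-1)D_{\log}$ piece (this is precisely where $\alpha=\Theta(\log^2 d)$ is spent), and is left with $\tfrac{\omega}{2}\kl{\vtheta}{\vx}\ge\tfrac{1-\epsilon}{4}\|\vtheta-\vx\|_1^2$ by Pinsker. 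Multiplicative stability (\Cref{theorem:stability_lambda}) enters here only to guarantee $\omega\in[1-\epsilon,1+\epsilon]$, not to ``allow telescoping'' as you suggest. You also omit the conversion $\|\ut\^t-\ut\^{t-1}\|_\infty^2\le 6\|\nut\^t-\nut\^{t-1}\|_\infty^2+4\|\vx\^t-\vx\^{t-1}\|_1^2$ (\Cref{lemma:u_correction}), whose second summand must be absorbed by the stability term and is what forces $\eta\le\tfrac{1}{50}$.
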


The proof sketch of our nonnegative RVU bound is provided in \Cref{sec:proof_sketch}. The use of positive regret, $\max\{0, \reg\^T\}$, is pivotal in our analysis, as any upper bound on the sum $\sum_{i=1}^{n} \max\{0, \reg\^T_i\}$ directly implies the same upper bound on the maximum regret, $\max_{i \in [n]} \reg\^T_i$, among the $n$ players due to nonnegativity. This, in turn, implies that the empirical distribution of joint strategies converges to an approximate CCE of the game.

With \Cref{theorem:rvu} at hand, the path forward becomes straightforward. The general plan is to use the nonnegativity of the sum $\sum_{i=1}^{n} \max\{0, \reg\^T_i\}$ and to set a sufficiently small learning rate $\eta$ in order to infer that the total path length of the play, i.e., $\sum_{i = 1}^{n} \sum_{t = 1}^{T - 1} \|\vx\^{t+1}_i - \vx\^{t}_i \|_1^2$, is bounded by $O(n \log^2 d \log T)$. This procedure is formalized in \Cref{theorem:bound_on_path_length}.

\begin{theorem}[Bound on total path length] \label{theorem:bound_on_path_length}
    Under \Cref{assumption:bound}, if all the players follow \ours algorithm with learning rate $\eta \leq \min\{\frac{1}{50}, \frac{1}{12 \sqrt{2} L n} \}$, then
    \[
    \sum_{i = 1}^{n} \sum_{t = 1}^{T - 1} \|\vx\^{t+1}_i - \vx\^{t}_i \|_1^2 \leq 144 n \eta + 48 n (\alpha \log T + \log d).
    \]
\end{theorem}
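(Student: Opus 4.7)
}

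The strategy is to invoke the nonnegative RVU bound of \Cref{theorem:rvu} for each player, sum over all $n$ players, exploit nonnegativity of the left-hand side, and then use the smoothness of the utilities to turn the utility-variation term into a multiple of the path length itself---so that it can be absorbed into the negative quadratic term provided $\eta$ is small enough.

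Concretely, let $P \defeq \sum_{i=1}^{n}\sum_{t=1}^{T-1}\|\vx_i\^{t+1}-\vx_i\^t\|_1^2$ denote the total path length, and write $V_i \defeq \sum_{t=1}^{T-1}\|\nut_i\^t-\nut_i\^{t-1}\|_\infty^2$ for player $i$'s utility variation. Applying \Cref{theorem:rvu} to each player $i\in[n]$ and summing, since $\max\{0,\reg_i\^T\}\ge 0$, the right-hand side must satisfy
\[
0 \;\le\; 3n + \frac{n(\alpha\log T + \log d)}{\eta} + 6\eta\sum_{i=1}^{n} V_i - \frac{1}{24\eta}\, P.
\]
The first step is therefore to rearrange this into a bound $P\le\cdots+48\eta\cdot 6\eta\sum_i V_i$, leaving only the task of controlling $\sum_i V_i$ in terms of $P$.

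Next I would invoke the $L$-smoothness part of \Cref{assumption:bound}, which yields $\|\nut_i\^t-\nut_i\^{t-1}\|_\infty \le L\sum_{j=1}^n \|\vx_j\^t-\vx_j\^{t-1}\|_1$ for every $i$ and every $t\ge 2$. Squaring and applying Cauchy--Schwarz (or power-mean) to pull the sum over $j$ inside gives
\[
\|\nut_i\^t-\nut_i\^{t-1}\|_\infty^2 \;\le\; L^2 n \sum_{j=1}^n \|\vx_j\^t-\vx_j\^{t-1}\|_1^2,
\]
so that $\sum_{i=1}^n V_i \le n\cdot\bigl(\text{boundary term at }t=1\bigr) + L^2 n^2 \cdot P$, where the $t=1$ contribution is controlled by $\|\nut_i\^1\|_\infty^2 \le 1$ from the boundedness hypothesis (using the convention $\nut_i\^0=\vec 0$ from \Cref{algo:ours_lambdaview}), giving $\sum_i V_i \le n + L^2 n^2 P$ after a harmless reindexing ($t\mapsto t+1$).

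Substituting this back produces
\[
\Bigl(\tfrac{1}{24\eta} - 6\eta L^2 n^2\Bigr)P \;\le\; 3n + 6\eta n + \frac{n(\alpha\log T+\log d)}{\eta}.
\]
The hypothesis $\eta\le\tfrac{1}{12\sqrt{2}Ln}$ forces $6\eta L^2 n^2 \le \tfrac{1}{48\eta}$, so the bracketed coefficient is at least $\tfrac{1}{48\eta}>0$. Multiplying through by $48\eta$ and using $\eta\le\tfrac{1}{50}$ to absorb the $288n\eta^2$ remainder into the $144n\eta$ term delivers
\[
P \;\le\; 144n\eta + 48 n(\alpha\log T + \log d),
\]
as claimed. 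The only genuinely delicate step is the smoothness-to-path-length conversion in the second paragraph: one must be careful to (i) handle the $t=1$ boundary term using boundedness rather than smoothness, and (ii) pay the correct factor of $n$ in Cauchy--Schwarz so that the threshold for $\eta$ comes out exactly at $\tfrac{1}{12\sqrt{2}Ln}$, matching the statement.
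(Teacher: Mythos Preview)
Your proposal is correct and follows essentially the same route as the paper: apply the nonnegative RVU bound of \Cref{theorem:rvu} per player, convert the utility-variation term into path length via $L$-smoothness and Cauchy--Schwarz, sum over players, invoke nonnegativity, and use $\eta\le\tfrac{1}{12\sqrt{2}Ln}$ (equivalently $\eta^2\le\tfrac{1}{288L^2n^2}$) so that the coefficient in front of $P$ stays at least $\tfrac{1}{48\eta}$. One minor caveat: the paper silently reindexes the variation sum and never isolates the $t=1$ boundary contribution, so its constants come out to exactly $144n\eta$; your explicit treatment of the boundary produces an additional $288n\eta^2$, which does \emph{not} literally absorb into $144n\eta$ as you claim (with $\eta\le 1/50$ it only gives about $150n\eta$), but this is a negligible constant discrepancy and not a genuine gap.
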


Given the RVU bound of \Cref{theorem:rvu} and the bound on the total path length from \Cref{theorem:bound_on_path_length}, we can adhere to the standard machinery of RVU bounds~\citep{Syrgkanis15:Fast} and prove that the regret $\reg_i\^T$ of each player $i \in [n]$ is bounded by $O(n \log^2 d \log T)$, as stated in \Cref{theorem:regret_bound}. The idea is that, under \Cref{assumption:bound}, the variation in the utilities observed over time, $\sum_{t=1}^{T-1} \| \nut\^{t} - \nut\^{t-1} \|_{\infty}^2$, can be bounded by the total path length, thereby yielding the result.

\begin{theorem}[Regret bound of \ours (Formal version of \Cref{theorem:informal_regret_bound})] \label{theorem:regret_bound}
    Under \Cref{assumption:bound}, if all the players $i \in [n]$ follow \ours with a learning rate $\eta = \min\left\{\frac{1}{50}, \frac{1}{12 \sqrt{2} L n} \right\}$, then the regret $\reg_i\^T$ of each player $i \in [n]$ is bounded by
    \[
        \reg_i\^T \leq 6 + \max\left\{50 + 12 \sqrt{2} L n, 24 \sqrt{2} L n\right\} (\alpha \log T + \log d) = O(n \log^2 d \log T).
    \]
    Additionally, the \ours algorithm for each player $i \in [n]$ is adaptive to adversarial utilities, meaning that the regret incurred in the face of adversarial utilities is $\reg_i\^T = \Tilde{O}(\sqrt{T \log d})$.
\end{theorem}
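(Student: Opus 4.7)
}

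The plan is to combine the nonnegative RVU bound of \Cref{theorem:rvu} with the total path length bound of \Cref{theorem:bound_on_path_length}, bridging the two via the $L$-smoothness assumption on the utilities. First, I would apply \Cref{theorem:rvu} to each player $i \in [n]$ individually, and \emph{drop} the negative path length term (which, crucially, is needed for \Cref{theorem:bound_on_path_length} but not here). This gives, for every player $i \in [n]$,
\[
    \max\{0, \reg_i\^T\} \leq 3 + \frac{\alpha \log T + \log d}{\eta} + 6\eta \sum_{t=1}^{T-1} \| \nut_i\^{t} - \nut_i\^{t-1} \|_{\infty}^2.
\]

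The next step is to control the utility variation using the $L$-smoothness part of \Cref{assumption:bound}:
\[
    \| \nut_i\^{t} - \nut_i\^{t-1} \|_{\infty} \leq L \sum_{j=1}^n \|\vx_j\^t - \vx_j\^{t-1}\|_1,
\]
and then, by Cauchy--Schwarz, $\| \nut_i\^{t} - \nut_i\^{t-1} \|_{\infty}^2 \leq L^2 n \sum_{j=1}^n \|\vx_j\^t - \vx_j\^{t-1}\|_1^2$. Summing over $t$ and invoking \Cref{theorem:bound_on_path_length} gives
\[
    \sum_{t=1}^{T-1} \| \nut_i\^{t} - \nut_i\^{t-1} \|_{\infty}^2 \leq L^2 n \cdot \bigl(144 n \eta + 48 n (\alpha \log T + \log d)\bigr).
\]

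Plugging this estimate back and using the choice $\eta \leq 1/(12\sqrt 2 \, L n)$, the term $864 L^2 n^2 \eta^2$ is at most $3$, and the term $288 L^2 n^2 \eta (\alpha \log T + \log d)$ is at most $12\sqrt{2}\, L n (\alpha \log T + \log d)$. Collecting,
\[
    \max\{0,\reg_i\^T\} \leq 6 + \left(\frac{1}{\eta} + 12\sqrt{2}\, L n\right)(\alpha \log T + \log d),
\]
from which the two cases in the max (depending on whether $50 \geq 12\sqrt 2 L n$ or not) follow by substituting $1/\eta = \max\{50, 12\sqrt 2 L n\}$. This yields the claimed $O(n \log^2 d \log T)$ bound once $\alpha = \Theta(\log^2 d)$.

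For the adversarial part, I would again start from \Cref{theorem:rvu}, but this time use only the boundedness $\|\nut\^t\|_\infty \leq 1$ to conclude that $\|\nut\^t - \nut\^{t-1}\|_\infty^2 \leq 4$, so the RVU bound degenerates to $O(1/\eta + \eta T)$. Balancing $1/\eta$ against $\eta T$ via the standard doubling-trick wrapper (run \ours in epochs of geometrically increasing length, restarting with a halved learning rate) produces the $\tilde O(\sqrt{T \log d})$ adversarial rate; the log factors in $\tilde O(\cdot)$ absorb the doubling overhead and the $\alpha \log T$ factor from the RVU bound. The main obstacle I expect is keeping the constants consistent between the two parts: the self-play argument requires $\eta$ to be small enough (relative to $L n$) so that the bootstrap $\sum \|\nut_i\^t - \nut_i\^{t-1}\|_\infty^2 \lesssim L^2 n \cdot \text{path length}$ does not overwhelm the positive part of the RVU inequality, while at the same time the negative term in \Cref{theorem:rvu} must have been correctly absorbed in the derivation of \Cref{theorem:bound_on_path_length}; verifying that the specific threshold $\eta = 1/(12\sqrt 2 L n)$ makes both inequalities close simultaneously is the only delicate bookkeeping required.
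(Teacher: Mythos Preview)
Your self-play argument is essentially identical to the paper's: apply \Cref{theorem:rvu}, drop the negative path-length term, convert utility variation into joint path length via $L$-smoothness and Cauchy--Schwarz, invoke \Cref{theorem:bound_on_path_length}, and substitute $\eta$; the arithmetic and the resulting constants match line for line.

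The adversarial part takes a different route and, as written, has a gap. The paper does \emph{not} use a doubling-trick wrapper. Instead, each player runs \ours at the fixed $\eta$ while monitoring whether the self-play certificate
\[
    \sum_{\tau=1}^{t-1} \|\nut_i\^{\tau+1} - \nut_i\^{\tau}\|_\infty^2 \;\leq\; 144 L^2 n^2 \eta + 48 L^2 n^2 (\alpha \log t + \log d)
\]
continues to hold, and switches to vanilla \Hedge the moment it is violated. The crucial feature is that under genuine self-play this certificate is exactly what \Cref{theorem:bound_on_path_length} guarantees, so the switch never fires and the algorithm is literally \ours at the stated $\eta$; the self-play bound is therefore preserved verbatim. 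Your doubling wrapper, by contrast, halves $\eta$ across epochs unconditionally. Then the $(\alpha\log T_k+\log d)/\eta_k$ term in the RVU bound---which is present regardless of how small the utility variation is---summed over $\Theta(\log T)$ epochs with $\eta_k\asymp 1/\sqrt{T_k}$, grows like $\sqrt{T}$ rather than $\log T$, destroying the self-play guarantee. A single algorithm achieving both parts of the theorem must distinguish (at least implicitly) between the two regimes; the monitor-and-switch does this, the plain doubling trick does not.
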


For the detailed proofs, please refer to \Cref{sec:main_results_proofs}. We conclude this section by noting that our regret guarantees in \Cref{theorem:regret_bound} hold \emph{uniformly for all $T \in \mathbb{N}^+$ simultaneously}, since the algorithm \ours and the choice of learning rate $\eta$ do not depend on the time horizon $T$. As a result, even when the horizon of self-play $T$ is unknown, or miscoordination occurs among the participants, \ours still enjoys low regret guarantees in both self-play and adversarial settings—without the need for additional techniques such as the doubling trick.

This is in contrast to the analysis of \citet{daskalakis2021near}—and indeed our \emph{anytime convergence} provides a strictly stronger guarantee—where the choice of learning rate $\eta$ depends on the time horizon $T$. Consequently, when $T$ is unknown, additional techniques such as the doubling trick appear to be unavoidable. Furthermore, since the learning rate $\eta$ in \citet{daskalakis2021near} is restricted to the interval $1/T \leq \eta \leq 1/(C n \log^4 T)$ for some constant $C$ (see Lemmas 4.2 and C.4 in~\citep{daskalakis2021near}), their guarantees fail to apply in the regime where $T \leq C n \log^4 T$, which frequently arises in games with many players or in short-horizon scenarios.

\subsection{Design of the \ours Algorithm and Equivalent Viewpoints}
\label{sec:design_dynamic_learning}

Before delving into the proof sketch in \Cref{sec:proof_sketch}, we first discuss how to mathematically formalize the concept of dynamic learning rate control, leading to \Cref{algo:ours_lambdaview} (\ours). In this context, we also present alternative perspectives on \ours and highlight how they contribute to both its regret analysis and computational properties.

We begin with the standard dynamics of Optimistic Follow-the-Regularized-Leader (\OFTRL) algorithms with a potentially dynamic learning rate $\lambda\^t$,
\[
\vx\^t \gets \argmax_{\vx \in \mathcal{X}} \left\{ \lambda\^t \langle \at\^t, \vx \rangle + \phi(\vx) \right\}, \numberthis{eq:FTRL_naive}
\]
where $\lambda\^t \in (0, \eta]$ is chosen according to a separate dynamic that we will design later, and $\phi$ is a regularizer over the space $\mathcal{X}$. In the case of the negative entropy regularizer, $\phi(\vx) = \sum_{\ind = 1}^{d} \vx[\ind] \log \vx[\ind]$, Formulation~(\ref{eq:FTRL_naive}) recovers the celebrated Optimistic MWU algorithm with learning rate $\lambda\^t$.

The fundamental idea here is to also integrate the dynamic change of $\lambda\^t$ into \Cref{eq:FTRL_naive}. Naturally, in online learning problems, we aim to incentivize selecting actions with the best rewards. On the other hand, for the purpose of self-play, as discussed comprehensively in \Cref{sec:loga}, we seek to additionally \emph{pace down} the learner when it is \emph{performing too well}. Thus, a seamless extension of Optimization Step~\ref{eq:FTRL_naive}, equipped with an automatic dynamic adjustment of $\lambda\^t$, takes the form:
\[
\vstack{\lambda\^t}{\vx\^t} \gets \argmax_{\lambda \in (0, \eta], \vx \in \mathcal{X}} \left\{ \lambda \langle \at\^t, \vx \rangle + \phi(\vx) \right\}. \numberthis{eq:FTRL_naive1}
\]

However, in this formulation, $\lambda\^t$ exhibits behavior akin to that of a step function. When $\langle \at\^t, \vx\^t \rangle > 0$, this formulation naively reduces to $\lambda\^t = \eta$, which corresponds to the original \OFTRL with a constant learning rate; otherwise, it trivially sets $\lambda\^t = 0$ and $\vx\^t =  \argmax_{\vx \in \mathcal{X}} \phi(\vx)$. Instead, to improve the predictability of the behavior of the dynamics $\vx\^t$ in the self-play setting, we need a smoother transition for the dynamic learning rate $\lambda\^t$. To achieve this, we can incorporate an additional regularizer $\rho(\lambda)$ for $\lambda \in (0, \eta]$ that defines a smooth thresholding procedure for what it means to ``\emph{perform too well}'' relative to the regret vector $\at\^t$ into the dynamics of \Cref{eq:FTRL_naive1}, which leads to
\[
\vstack{\lambda\^t}{\vx\^t} \gets \argmax_{\lambda \in (0, \eta], \vx \in \mathcal{X}} \left\{ \lambda \langle \at\^t, \vx \rangle + \rho(\lambda) + \phi(\vx) \right\}. \numberthis{eq:DRLC_FTRL}
\]
In other words, the term $\lambda \langle \at\^t, \vx \rangle$ is motivating \emph{regret minimization} for the player, while the regularizer $\rho(\lambda)$ is prohibiting the player from having a \emph{very low regret}, thereby creating a \emph{balance among the performance of the players} of the game.

We note that the dynamics of Formulation~(\ref{eq:DRLC_FTRL}) are not jointly concave in $(\lambda, \vx)$ for a general choice of $\rho$ and $\phi$; therefore, its \emph{computational} aspects and \emph{regret analysis} remain unclear. Notably, for the special choice of $\rho(\lambda) \defeq (\alpha - 1) \log \lambda$, $\phi(\vx) = \sum_{\ind = 1}^{d} \vx[\ind] \log \vx[\ind]$, and $\mathcal{X} = \Delta^d$, we will demonstrate that Formulation~(\ref{eq:DRLC_FTRL}) is equivalent to \ours. Hence, it can be computed at each iteration by solving the 1-dimensional optimization problem of dynamic learning rate control (\ref{eq:opt_problem_lambda}) and the play of \Opthedge, as outlined in \Cref{algo:ours_lambdaview} and \Cref{sec:learning_rate_control_problem}.

To analyze the regret of \ours, we show that \Cref{algo:ours_lambdaview} can be reformulated as a specific instance of \OFTRL, beyond its original form in Formulation (\ref{eq:DRLC_FTRL}). A key technical step in this analysis is to express the iterates $\vx\^t$ produced by the algorithm through the \OFTRL perspective. By defining $\vy\^t := \lambda\^t \vx\^t \in (0,1]\Delta^d = \mleft\{ \vy \in \mathbb{R}_{\geq 0}^d \: \big| \; \sum_{\ind = 1}^d \vy[\ind] \leq 1 \mright\}$, we will demonstrate that the iterates $\vy\^t$ satisfy the \OFTRL rule,
\[
    \vy\^{t} \defeq \argmax_{\vy \in (0,1]\Delta^d} \left\{\langle \at\^t, \vy\rangle + \psi(\vy) \right\}, \numberthis{eq:oftrl_psi}
\]
where $\psi$ is a special regularizer with strong spectral properties (to be discussed in \Cref{sec:proof_sketch}),
\[
    \psi(\vy) = - \frac{1}{\eta} \alpha \log \left(\sum_{\ind=1}^d \vy[\ind]\right) + \frac{1}{\eta}\frac{1}{\sum_{\ind=1}^d \vy[\ind]} \sum_{\ind=1}^d \vy[\ind] \log \vy[\ind]. \numberthis{eq:regularizer}
\]
Interestingly, this formulation looks similar to the lifting idea of \citet{farina2022near}, where instead of applying \OFTRL on the original space $\mathcal{X} = \Delta^n$, \OFTRL is designed over the lifted space $\widetilde{\mathcal{X}} = \{(\gamma, \vy) : \gamma \in [0, 1], \vy \in \gamma \mathcal{X}\} = [0, 1] \Delta^n$, but without the need to incorporate the lifting parameter $\gamma$ into the dynamics of the algorithm (neither in the utility term nor in the regularizer). We note that despite this similar appearance, as will be seen in the next sections, the analysis of \ours is substantially different from that of Log-Regularized Lifted \OFTRL (\LRLOFTRL) \citep{farina2022near}, where the log regularizer is chosen over the lifted space $\widetilde{\mathcal{X}}$, i.e., $\psi(\gamma, \vy) = - \frac{1}{\eta} \log \gamma -  \frac{1}{\eta} \sum_{\ind = 1}^{d} \log \vy[\ind]$, to ensure elementwise multiplicative stability of actions due to the high curvature of the log regularizers and the structure of the induced intrinsic norms.

The proof sketch of the regret analysis using Formulation~(\ref{eq:oftrl_psi}) is discussed in \Cref{sec:proof_sketch}. For the moment, we formalize these three alternative formulations of \ours in the following theorem and defer the equivalence proofs to \Cref{sec:equivalent_viewpoints}.

\begin{theorem} \label{theorem:equivalence_viewpoints}
    \Cref{algo:ours_lambdaview} (\ours) can alternatively be viewed as \oursgeneral (Formulation \ref{eq:DRLC_FTRL}) with the choice of regularizers $\rho(\lambda) = (\alpha - 1) \log \lambda$ and $\phi(\vx) = \sum_{\ind=1}^d \vx[\ind] \log \vx[\ind]$, i.e.,
    \[
        \vstack{\lambda\^t}{\vx\^t} \gets \argmax_{\lambda \in (0, \eta], \vx \in \Delta^d}\mleft\{ {\lambda} \mleft\langle  \at\^t, \vx \mright\rangle + (\alpha - 1) \log \lambda - \sum_{\ind=1}^d \vx[\ind] \log \vx[\ind]\mright\} \numberthis{eq:drlc_ftrl_special}.
    \]
    or additionally as \OFTRL with the regularizer $\psi$ defined in \Cref{eq:regularizer}, i.e.,
    \[
        \vy\^t \gets \argmax_{\vy \in (0,1]\Delta^d }\mleft\{ {\eta} \mleft\langle  \at\^t, \vy \mright\rangle + \alpha \log \left(\sum_{\ind=1}^d \vy[\ind]\right) - \frac{1}{\sum_{\ind=1}^d \vy[\ind]} \sum_{\ind=1}^d \vy[\ind] \log \vy[\ind]\mright\},
    \]
    In other words, the three perspectives (\Cref{algo:ours_lambdaview}, Formulations \ref{eq:drlc_ftrl_special} and \ref{eq:oftrl_psi}) of \ours are equivalent and result in the same learning dynamics.
\end{theorem}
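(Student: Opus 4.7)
The plan is to establish the chain of equivalences in two steps: first Algorithm~\ref{algo:ours_lambdaview} $\Leftrightarrow$ Formulation~\eqref{eq:drlc_ftrl_special}, and then Formulation~\eqref{eq:drlc_ftrl_special} $\Leftrightarrow$ the lifted \OFTRL formulation. Both reductions are algebraic, built on standard entropy/log-partition duality and a carefully chosen change of variables $\vy=\lambda\vx/\eta$, so the work is in bookkeeping rather than in any deep argument.

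\emph{Step 1 (Algorithm~\ref{algo:ours_lambdaview} $\Leftrightarrow$ \eqref{eq:drlc_ftrl_special}).} I would perform partial maximization over $\vx\in\Delta^d$ at fixed $\lambda>0$. The inner problem
\[
\max_{\vx\in\Delta^d}\Bigl\{\lambda\langle\at\^t,\vx\rangle - \sum_{\ind=1}^d \vx[\ind]\log \vx[\ind]\Bigr\}
\]
is the familiar entropy-regularized linear program over the simplex, whose unique maximizer is the softmax $\vx^*(\lambda)[\ind] = \exp(\lambda\at\^t[\ind])/\sum_{\ind'}\exp(\lambda\at\^t[\ind'])$, matching Line~\ref{line:norm0} of \Cref{algo:ours_lambdaview}. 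A direct computation of the optimal value yields the log-partition function $\log(\sum_{\ind} e^{\lambda\at\^t[\ind]})$. Substituting this back into \eqref{eq:drlc_ftrl_special} reduces the outer problem to $\max_{\lambda\in(0,\eta]}\{\log(\sum_{\ind} e^{\lambda\at\^t[\ind]}) + (\alpha-1)\log\lambda\}$, which is exactly Line~\ref{line:oftrl0} of \Cref{algo:ours_lambdaview}. Hence the $(\lambda\^t,\vx\^t)$ pairs coincide.

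\emph{Step 2 (\eqref{eq:drlc_ftrl_special} $\Leftrightarrow$ lifted \OFTRL).} I would introduce the change of variables $\vy\defeq \lambda\vx/\eta$. Since $\lambda\in(0,\eta]$ and $\vx\in\Delta^d$, the map $(\lambda,\vx)\mapsto\vy$ is a bijection onto $(0,1]\Delta^d$, with inverse $\lambda=\eta\summ{\vy}$ and $\vx[\ind]=\vy[\ind]/\summ{\vy}$. The three terms in the objective of \eqref{eq:drlc_ftrl_special} transform as follows:
\begin{align*}
\lambda\langle\at\^t,\vx\rangle &= \eta\langle\at\^t,\vy\rangle,\\
(\alpha-1)\log\lambda &= (\alpha-1)\log\summ{\vy} + (\alpha-1)\log\eta,\\
-\sum_{\ind}\vx[\ind]\log\vx[\ind] &= -\tfrac{1}{\summ{\vy}}\sum_{\ind}\vy[\ind]\log\vy[\ind] + \log\summ{\vy} + \log\eta,
\end{align*}
where in the last line I have used $\sum_{\ind}\vy[\ind]=\eta\summ{\vy}$ to collect the stray $\log\eta$ and $\log\summ{\vy}$ factors. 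Summing the three and discarding the additive constant $\alpha\log\eta$ (which does not depend on $\vy$) yields exactly the objective of the lifted \OFTRL formulation in \Cref{theorem:equivalence_viewpoints}. Uniqueness of the maximizer (ensured by \Cref{remark:unique_solution} for the first equivalence, and transferred through the bijective change of variables) then guarantees the three iterates are the same.

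\emph{Main obstacle.} Conceptually nothing is hard; the sole point that requires care is the bookkeeping in Step~2, where the change of variables $\vy=\lambda\vx/\eta$ produces two cancellations that have to line up: the $\log\lambda$ arising from rescaling inside the entropy term combines with $(\alpha-1)\log\lambda$ to produce exactly $\alpha\log\summ{\vy}$ (after further rescaling by $\eta$), and the $\log\eta$ factors must be verified to collect into a single constant independent of $\vy$ so they can be discarded from the argmax. Getting these constants correct, together with the check that the domain $(\lambda,\vx)\in(0,\eta]\times\Delta^d$ maps bijectively onto $(0,1]\Delta^d$, is the only nontrivial aspect of the proof.
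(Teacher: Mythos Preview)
Your approach is correct and essentially identical to the paper's: the paper combines a change of variables (their Proposition~\ref{theorem:equivalence_opt}) with the softmax/KKT computation (their Lemma~\ref{lemma:KKT} and Corollary~\ref{corollary:lambda_softmax}), just in the opposite order from your Step~1/Step~2.

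One bookkeeping slip in Step~2 is worth flagging, since you correctly identified this as the delicate point: the entropy term does \emph{not} produce a stray $\log\eta$. With $\vx[\ind]=\vy[\ind]/\summ{\vy}$ (which involves only $\summ{\vy}$, not $\eta$), one gets
\[
-\sum_{\ind}\vx[\ind]\log\vx[\ind] \;=\; -\tfrac{1}{\summ{\vy}}\sum_{\ind}\vy[\ind]\log\vy[\ind] + \log\summ{\vy},
\]
with no $\log\eta$ term; your justification ``$\sum_{\ind}\vy[\ind]=\eta\summ{\vy}$'' is a tautology (both sides equal $\summ{\vy}$) and cannot be what you meant. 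The resulting additive constant is therefore $(\alpha-1)\log\eta$, not $\alpha\log\eta$. This does not affect the argmax, so the conclusion stands, but the algebra should be corrected.
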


\subsection{Proof Sketch} \label{sec:proof_sketch}

As discussed in the previous section, an essential step in the regret analysis of \ours is that, by \Cref{theorem:equivalence_viewpoints}, the play $\vx\^t$ generated by \ours is equivalent to $\vy\^t$ with a transformation by the dynamic learning rate $\lambda\^t$, where the iterates of $\vy\^t$ follow
\[
    \vy\^{t} \defeq \argmax_{\vy \in (0,1]\Delta^d} \left\{\langle \at\^t, \vy\rangle + \psi(\vy) \right\}, 
\]
with $\psi$ being our regularizer $\psi(\vy) :\Omega  \rightarrow \bbR$ on the domain 
\[
\Omega \defeq [0,1]\Delta^d = \{\vy \in \bbR_+^d \mid \sum_{\ind = 1}^{d} \vy[\ind] \leq 1\} 
\]
defined as
\[
    \psi(\vy) = - \frac{1}{\eta} \alpha \log \left(\sum_{\ind=1}^d \vy[\ind]\right) + \frac{1}{\eta}\frac{1}{\sum_{\ind=1}^d \vy[\ind]} \sum_{\ind=1}^d \vy[\ind] \log \vy[\ind],
\]
where $\alpha \defeq 2 + 2 \log d + \beta \log^2 d$ with $\beta \geq 70$ is a hyperparameter. We restate that the dynamic learning rate parameter $\lambda\^t$ is the unique solution to the Dynamic Learning Rate Control Problem~\ref{eq:opt_problem_lambda} for each time $t \in [T]$.

\paragraph{Notation.} For the analysis of the algorithm, we introduce additional necessary notation for convenience. We use $\| {\vz} \|_{\vec{B}}^2 \defeq {\vz}^\top \vec{B} {\vz}$ to denote the quadratic norm induced by the matrix $\vec{B} \in \mathbb{R}^{d \times d}$. Let $\summ{\vy} = \sum_{\ind = 1}^{d} \vy[\ind]$ be the sum of the elements of the vector $\vy  \in \mathbb{R}^d$. For two vectors $\vy, \vz \in (0, 1]\Delta^d$ in $(0, 1]\Delta^d$ space, define $\vx, \vtheta \in \Delta^d$ as the induced actions in the simplex, with $\vx[r] = \frac{\vy[r]}{\summ{\vy}}$ and $\vtheta[r] = \frac{\vz[r]}{\summ{\vz}}$ for every coordinate $r \in [d]$.

\paragraph{Spectral Properties of $\psi$.}
Prior to discussing the analysis of regret and deriving the nonnegative RVU bounds in \Cref{theorem:rvu}, we examine the strong spectral properties of the regularizer $\psi$, which may be of independent interest for future studies. Proofs of these properties are provided in \Cref{sec:spectral_psi}. In \Cref{theorem:regularizer_convex}, we prove that the regularizer $\psi$ is not only strongly convex, but its curvature is lower bounded, interestingly, by the diagonal matrix $\frac{1}{2 \eta \sum_{\ind = 1}^d \vy[\ind]} \operatorname{diag}\left(\frac{1}{\vy[1]}, \frac{1}{\vy[2]}, \ldots, \frac{1}{\vy[d]}\right)$. It is notable that these strong spectral properties of the regularizer $\psi$ hold only when $\alpha \geq 2 + 2 \log d + 2 \log^2 d$, and it is not possible to eliminate the dependence of the parameter $\alpha$ on $\log^2 d$, which makes this regime particularly interesting. 

\begin{theorem} \label{theorem:regularizer_convex}
    For $\alpha \ge 2 + 2 \log d + 2 \log^2 d$, the function $\psi (\vy)$ is strongly convex, and its positive definite Hessian at any point satisfies the bound
    \begin{align*}
        \nabla^2 \psi(\vy) \succeq \frac{1}{2 \eta} \begin{pmatrix}
            \frac{1}{\vy[1] \cdot \sum_{\ind=1}^d \vy[\ind]} & 0                                                & \cdots & 0                                                \\
            0                                                & \frac{1}{\vy[2] \cdot \sum_{\ind=1}^d \vy[\ind]} & \cdots & 0                                                \\
            \vdots                                           & \vdots                                           & \ddots & \vdots                                           \\
            0                                                & 0                                                & \cdots & \frac{1}{\vy[d] \cdot \sum_{\ind=1}^d \vy[\ind]}
        \end{pmatrix}.
    \end{align*}
\end{theorem}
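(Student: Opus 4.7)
The plan is to compute $\nabla^2 \psi(\vy)$ in closed form, observe that it decomposes as the claimed diagonal lower bound (up to a factor of $\tfrac{1}{2}$) plus a rank-two perturbation, and then control that perturbation via a $\vy$-weighted Cauchy--Schwarz combined with a single AM--GM step balanced at the threshold $\alpha = \Theta(\log^2 d)$.

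First I would reparametrize using $s = \summ{\vy}$ and $\vx = \vy/s \in \Delta^d$, which produces the equivalent form
\[
    \psi(\vy) = -\frac{\alpha-1}{\eta}\log s + \frac{1}{\eta}\,\ent{\vx},
\]
separating the log-barrier in $s$ from the negative entropy of the renormalization. Differentiating twice (using $\partial_j s = 1$ and $\partial_j \ent{\vx} = (\log \vx[j] - \ent{\vx})/s$) I obtain the explicit entries
\[
    [\nabla^2 \psi(\vy)]_{ij} = \frac{\delta_{ij}}{\eta\, s\, \vy[i]} + \frac{1}{\eta s^2}\bigl( \alpha - 2 + 2\,\ent{\vx} - \log \vx[i] - \log \vx[j] \bigr),
\]
which I rewrite as $\nabla^2 \psi(\vy) = D + \frac{1}{\eta s^2}\bigl[(\alpha - 2 + 2\,\ent{\vx})\mathbf{1}\mathbf{1}^\top - \mathbf{1}\vec{\ell}^{\top} - \vec{\ell}\,\mathbf{1}^{\top}\bigr]$, with $D := (\eta s)^{-1}\operatorname{diag}(1/\vy[i])$ and $\vec{\ell}[i] := \log \vx[i]$. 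Since the diagonal part $D$ is already twice the target lower bound, it suffices to show that the rank-two piece cannot cancel more than half of $D$.

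Second, I convert this to the pointwise quadratic-form inequality $\vz^\top (\nabla^2 \psi - D/2)\,\vz \geq 0$ for all $\vz \in \mathbb{R}^d$. Setting $u = \sum_i \vz[i]$ and $v = \sum_i \vz[i]\log \vx[i]$ and clearing the factor $2\eta s^2$, the task reduces to
\[
    s \sum_i \vz[i]^2/\vy[i] + 2(\alpha - 2 + 2\,\ent{\vx})\,u^2 - 4uv \geq 0.
\]
I would then introduce the centered direction $\tilde \vz[i] := \vz[i] - \vx[i]\,u$, for which $\sum_i \tilde \vz[i] = 0$, and the associated centered linear form $\tilde v := \sum_i \tilde \vz[i]\log \vx[i] = v - u\,\ent{\vx}$. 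Substituting, the $\ent{\vx}\,u^2$ contributions cancel exactly and the inequality collapses to the much cleaner
\[
    s \sum_i \vz[i]^2/\vy[i] + 2(\alpha - 2)\,u^2 \geq 4 u\,\tilde v.
\]

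The main estimate is a $\vy$-weighted Cauchy--Schwarz applied to $\tilde v$ after recentering $\log \vx$ at its $\vx$-mean $\ent{\vx}$, yielding
\[
    \tilde v^{\,2} \leq s\,V(\vx)\,\sum_i \tilde \vz[i]^2/\vy[i] \leq s\,V(\vx)\sum_i \vz[i]^2/\vy[i],
\]
where $V(\vx) := \sum_i \vx[i]\log^2 \vx[i] - \ent{\vx}^2$ is the varentropy of $\vx$ and the second inequality is the algebraic identity $\sum_i \tilde \vz[i]^2/\vy[i] = \sum_i \vz[i]^2/\vy[i] - u^2/s$. A single AM--GM $4 u\,\tilde v \leq 2u^2/\epsilon + 2\epsilon\,\tilde v^{\,2}$ taken at the optimal $\epsilon = 1/(2V(\vx))$ then closes the inequality as soon as $\alpha \geq 2 + 2V(\vx)$. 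To remove the $\vx$-dependence I would invoke the bound $V(\vx) \leq \sum_i \vx[i]\log^2 \vx[i] \leq \log^2 d + O(\log d)$, which I would prove by splitting the sum according to whether $\vx[i] \geq 1/d$ (giving the $\log^2 d$ contribution directly) or $\vx[i] < 1/d$ (controlled through the pointwise bound $t^2 e^{-t} \leq c\,e^{-t/2}$ with $t = -\log \vx[i]$). This matches the hypothesis $\alpha \geq 2 + 2\log d + 2\log^2 d$ with room to spare.

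The main obstacle I expect is matching the stated constants cleanly: the centered form of Cauchy--Schwarz is essential, since an uncentered bound picks up an extra $\ent{\vx}^2 \lesssim \log^2 d$ that would double the leading coefficient; similarly, the split bound on $\sum_i \vx[i]\log^2 \vx[i]$ has to be sharp enough that only a linear-in-$\log d$ slack survives beyond the $\log^2 d$ term. A looser analysis gives the same qualitative strong-convexity statement but with a worse numerical constant in front of $\log^2 d$.
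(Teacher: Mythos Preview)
Your approach is essentially the same as the paper's at the structural level—both compute the Hessian explicitly, write the quadratic form $\eta s^2\,\vz^\top\nabla^2\psi(\vy)\,\vz$, peel off half of the diagonal as the target, and absorb the rank-two cross term using (what amounts to) a weighted Cauchy--Schwarz. The paper phrases that last step as a term-by-term completion of squares,
\[
\sum_k \frac{\vz[k]^2}{2\vx[k]} - 2u\sum_k \vz[k]\log\vx[k] \;\ge\; -2u^2\sum_k \vx[k]\log^2\vx[k],
\]
which is exactly the uncentered Cauchy--Schwarz plus AM--GM you describe. Your centering substitution $\tilde\vz[i]=\vz[i]-\vx[i]u$ is a genuine refinement the paper does not use: the paper instead bounds $\ent{\vx}\ge -\log d$ up front, which is precisely where its $2\log d$ term in the threshold on $\alpha$ originates. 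Your route replaces $\sum_i \vx[i]\log^2\vx[i]$ by the varentropy $V(\vx)$ and in principle yields a tighter constant.

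The one place your sketch has a real gap is the final quantitative bound on $\sum_i \vx[i]\log^2\vx[i]$. Your proposed split at $1/d$ together with $t^2e^{-t}\le c\,e^{-t/2}$ gives, for the small coordinates, $\sum_{\vx[i]<1/d}\vx[i]\log^2\vx[i]\le c\sum_{\vx[i]<1/d}\sqrt{\vx[i]}$, and this sum can be as large as $\Theta(\sqrt{d})$ (take all $d$ coordinates near $1/d$), not $O(\log d)$. So the argument as written does not deliver the claimed $\log^2 d + O(\log d)$ slack and hence does not match the stated hypothesis on $\alpha$. The paper, for its part, simply asserts $\sum_i \vx[i]\log^2\vx[i]\le \log^2 d$ without proof; so on this particular step neither account is fully rigorous, but your sketch is the one that visibly falls short of its own target. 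If you keep your centered formulation, the cleaner path is to bound $V(\vx)$ directly rather than passing through $\sum_i \vx[i]\log^2\vx[i]$.
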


Consequently, by strong convexity of the regularizer $\psi$ (see \cref{theorem:regularizer_convex}, we can define its induced Bregman divergence $D_\psi$,
\[
    D_\psi (\vy \,\big\|\, \vz) & = \psi(\vy) - \psi(\vz) - \langle \nabla\psi(\vz), \vy - \vz \rangle.
\]
In turn, we establish the following spectral properties of the induced Bregman divergence $D_\psi$, which are pivotal for the regret analysis to be discussed next. 
\begin{theorem} \label{theorem:bregman_curve}
    The Bregman divergence $D_\psi(\cdot \,\big\|\, \cdot)$ induced by the regularizer $\psi(\cdot)$ satisfies the following properties:
    \begin{itemize}
        \item \textbf{Curvature in the lifted space $(0, 1] \Delta^d$}: $D_\psi(\cdot \,\big\|\, \cdot)$ is lower bounded by a term proportional to the $\ell_1$ norm on the lifted simplex $(0, 1] \Delta^d$:
        \[
        D_\psi(\vy \,\big\|\, \vz) \geq \frac{1}{2 \eta} \| \vy - \vz \|_1^2.
        \]
        
        \item \textbf{Curvature in the action simplex $\Delta^d$}: $D_\psi(\cdot \,\big\|\, \cdot)$ is lower bounded by a term proportional to the $\ell_1$ norm on the action simplex $\Delta^d$:
        \[
            D_{\psi}(\vz \,\big\|\, \vy) \geq \frac{1}{4 \eta} (1 - \epsilon) \|\vtheta - \vx \|_1^2,
        \]
        under the multiplicative stability assumption over the sum of actions, $\omega \defeq \frac{\summ{\vz}}{\summ{\vy}} \in [1 - \epsilon, 1 + \epsilon]$ for a constant $\epsilon \in (0, \frac{2}{5})$.
    \end{itemize}
\end{theorem}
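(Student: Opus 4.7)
The plan is to treat the two parts separately, with Part 1 being a direct consequence of the Hessian bound in Theorem 4.3 and Part 2 requiring additional work to translate ambient $\ell_1$ bounds into the normalized-simplex $\ell_1$ bound.

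For the first claim (curvature on $(0,1]\Delta^d$), I would start from the mean-value form of the Bregman divergence, $D_\psi(\vy \| \vz) = \tfrac{1}{2}(\vy - \vz)^\top \nabla^2\psi(\xi)(\vy-\vz)$ for some $\xi$ on the segment between $\vy$ and $\vz$, and then plug in the diagonal Hessian lower bound from Theorem 4.3. This reduces the claim to the pointwise inequality $\sum_i (\vy_i-\vz_i)^2/(\xi_i \summ{\xi}) \geq \|\vy-\vz\|_1^2$. This follows from a single weighted Cauchy--Schwarz inequality with weights $w_i = \xi_i$, which gives $\sum_i (\vy_i-\vz_i)^2/\xi_i \geq \|\vy-\vz\|_1^2/\summ{\xi}$, combined with the fact that $\summ{\xi} \leq 1$ throughout $\Omega$. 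This step is entirely mechanical.

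For the second claim (curvature in the normalized simplex), the central algebraic observation is $\vz - \vy = A(\omega \vtheta - \vx)$ with $A = \summ{\vy}$ and $\omega = \summ{\vz}/\summ{\vy} \in [1-\epsilon,1+\epsilon]$. My approach is to apply the same integral-form Hessian argument as in Part 1, but keep track of how $\summ{\xi(t)}$ varies along the segment $\xi(t) = (1-t)\vy + t\vz$, which satisfies $\summ{\xi(t)} = A(1 + t(\omega-1))$. This converts the bound into $D_\psi(\vz\|\vy) \gtrsim \|\omega\vtheta-\vx\|_1^2/(4\eta)$ up to a $(1+O(\epsilon))$ factor coming from $\summ{\xi(t)}^2 \leq A^2(1+\epsilon)^2$ in the Cauchy--Schwarz step. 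The remaining challenge is to pass from $\|\omega\vtheta-\vx\|_1$ to $\|\vtheta - \vx\|_1$.

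I would handle this last step by splitting into two regimes. When $\|\vtheta-\vx\|_1 \geq 2|\omega-1|$ (the ``direction-dominated'' regime), the decomposition $\omega\vtheta-\vx = (\vtheta-\vx) + (\omega-1)\vtheta$ together with the reverse triangle inequality gives $\|\omega\vtheta-\vx\|_1 \geq \|\vtheta-\vx\|_1 - |\omega-1| = \|\vtheta-\vx\|_1(1-O(\epsilon))$, and the required bound follows after absorbing the $(1\pm\epsilon)$ factors. In the complementary regime $\|\vtheta-\vx\|_1 < 2|\omega-1|$ (``scale-dominated''), instead of relying on the Hessian bound, I would decompose $\psi = \psi_1 + \psi_2$ with $\psi_1 = -\tfrac{\alpha}{\eta}\log\summ{\vy}$ and compute $D_{\psi_1}(\vz\|\vy) = \tfrac{\alpha}{\eta}(\omega-1-\log\omega) \geq \Omega(\alpha (\omega-1)^2/\eta)$ exactly. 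Since $\alpha = \Theta(\log^2 d)$ is large, this single contribution already exceeds $(1-\epsilon)\|\vtheta-\vx\|_1^2/(4\eta)$ in this regime. The main obstacle is calibrating the constants in the direction-dominated regime so that the compounding $(1+\epsilon)^2$ loss from the integral and the loss from the triangle inequality combine to give precisely the advertised $(1-\epsilon)/(4\eta)$ coefficient for all $\epsilon \in (0, 2/5)$; an alternative, potentially cleaner route would be to use the closed-form decomposition $D_\psi(\vz\|\vy) = \tfrac{1}{\eta}[(\alpha-1)(\omega-1-\log\omega) + \omega\,\mathrm{KL}(\vtheta\|\vx) + (\omega-1)(H(\vx) - H(\vtheta))]$ together with Pinsker's inequality on the KL term, using the $\alpha$-weighted log-barrier contribution to absorb the cross term.
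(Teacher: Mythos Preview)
Your Part 1 argument is essentially identical to the paper's (\Cref{prop:el1_lifted}): Hessian lower bound from \Cref{theorem:regularizer_convex}, weighted Cauchy--Schwarz, then $\summ{\xi}\le 1$.

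For Part 2, the paper does \emph{not} go through the Hessian-plus-case-split route you outline as your primary plan; it uses precisely your ``alternative, potentially cleaner route.'' Concretely, the paper first establishes the exact decomposition (\Cref{prop:bregman_rep})
\[
\eta D_\psi(\vz\,\|\,\vy) = (\alpha-1)(\omega-1-\log\omega) + \omega\,\kl{\vtheta}{\vx} + (1-\omega)\big(\ent{\vtheta}-\ent{\vx}\big),
\]
then bounds the cross term via the entropy-difference inequality $|\ent{\vtheta}-\ent{\vx}|\le(\log d)\sqrt{2\,\kl{\vtheta}{\vx}}$ (\Cref{lemma:entropy_diff}), completes the square so that the $(\alpha-1)$-weighted log-barrier term absorbs it, and finally applies Pinsker to the surviving $\tfrac{\omega}{2}\kl{\vtheta}{\vx}$. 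This is exactly the mechanism you sketch at the end; the entropy-difference lemma is the one ingredient you did not name explicitly.

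Two remarks on why your primary approach does not close. First, the direction-dominated constants genuinely do not reach $(1-\epsilon)/(4\eta)$: even after optimizing the threshold, the Cauchy--Schwarz step leaves a $(1+\epsilon)^{-2}$ factor, and for $\epsilon=2/5$ one gets at best about $25/(196\eta)<3/(20\eta)$. Second, your scale-dominated case has a gap: writing $\psi=\psi_1+\psi_2$ and using only $D_{\psi_1}$ is not valid because $\psi_2$ by itself is not convex (convexity of $\psi$ in \Cref{theorem:regularizer_convex} requires the full $\alpha$), so $D_{\psi_2}$ can be negative of order $(\omega-1)^2\log^2 d/\eta$, i.e.\ the same order as $D_{\psi_1}$. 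Fixing this forces you back to bounding the entropy-difference cross term, at which point you are doing the paper's proof.
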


This theorem is proved in \Cref{prop:el1_lifted,prop:el1_simplex} in \Cref{sec:spectral_psi}. The proof of \Cref{prop:el1_lifted} follows from observing that \Cref{theorem:regularizer_convex} implies that the regularizer $\psi$ is strongly convex with respect to the $\ell_1$ norm. The proof of \Cref{prop:el1_simplex} is more involved and relies on the special representation of $D_\psi(\cdot \,\big\|\, \cdot)$ shown in \Cref{prop:bregman_rep}, accompanied by the multiplicative stability property, finite differences of entropies, and Pinsker's inequality.

\paragraph{Nonnegative Regret.} Recall that the dynamics of \ours are equivalent to those of \Cref{eq:oftrl_psi}. Thus, we begin by analyzing the regret of the \OFTRL in \Cref{eq:oftrl_psi}. We denote this regret by $\tildereg\^T$,
\[
\tildereg\^T \defeq \max_{\vy^* \in [0,1]\Delta^d } \sum_{t = 1}^{T} \langle  {\ut}\^{t}, \vy^* - \vy\^{t} \rangle.
\]
Recall that our nonnegative RVU bounds in \Cref{theorem:rvu} are on the positive regret $\max\{0, \reg\^T\}$. By the following proposition, we connect the $\tildereg\^T$ to $\reg\^T$ defined in \Cref{eq:linreg}.

\begin{proposition} \label{prop:reg+}
    For any time horizon $T \in \mathbb{N}$, we have that $\tildereg\^T = \max\{0, \reg\^T\}$. As a result, $\tildereg\^T \geq 0$ and $\tildereg\^T \geq \reg\^T$.
\end{proposition}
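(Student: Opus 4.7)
The plan is to exploit the linearity of the inner product together with the specific structure $\vy\^t = \lambda\^t \vx\^t$ and $\ut\^t = \nut\^t - \langle \nut\^t, \vx\^t\rangle \vec{1}_d$. First I would observe that the cumulative "played" term in $\tildereg\^T$ vanishes: since $\vx\^t \in \Delta^d$ means $\langle \vec{1}_d, \vx\^t\rangle = 1$, a direct computation gives
\[
\langle \ut\^t, \vy\^t\rangle = \lambda\^t\bigl(\langle \nut\^t, \vx\^t\rangle - \langle \nut\^t, \vx\^t\rangle\bigr) = 0,
\]
so $\tildereg\^T = \max_{\vy^* \in [0,1]\Delta^d}\sum_{t=1}^T \langle \ut\^t, \vy^*\rangle$.

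Next I would reparameterize points in the lifted simplex $[0,1]\Delta^d$ by writing $\vy^* = \gamma\vx^*$ with $\gamma = \summ{\vy^*} \in [0,1]$ and $\vx^* \in \Delta^d$ (when $\gamma = 0$, $\vx^*$ may be taken arbitrarily). Then for each $t$,
\[
\langle \ut\^t, \vy^*\rangle = \gamma\langle \nut\^t, \vx^*\rangle - \gamma \langle \nut\^t, \vx\^t\rangle\langle \vec{1}_d, \vx^*\rangle = \gamma\bigl(\langle \nut\^t, \vx^*\rangle - \langle \nut\^t, \vx\^t\rangle\bigr).
\]
Summing over $t$ and decoupling the maximization over $(\gamma, \vx^*)$ gives
\[
\tildereg\^T = \max_{\gamma \in [0,1]}\,\gamma\cdot\max_{\vx^* \in \Delta^d}\sum_{t=1}^T\bigl(\langle \nut\^t, \vx^*\rangle - \langle \nut\^t, \vx\^t\rangle\bigr) = \max_{\gamma \in [0,1]}\gamma\,\reg\^T.
\]

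Finally, I would conclude by noting that the inner maximum over $\gamma \in [0,1]$ of the linear function $\gamma \mapsto \gamma\,\reg\^T$ is $\reg\^T$ when $\reg\^T \geq 0$ (attained at $\gamma = 1$) and $0$ when $\reg\^T < 0$ (attained at $\gamma = 0$), yielding $\tildereg\^T = \max\{0, \reg\^T\}$. The two stated consequences $\tildereg\^T \ge 0$ and $\tildereg\^T \ge \reg\^T$ follow immediately. There is no real obstacle in this argument; the only subtlety is verifying that the lifted domain $[0,1]\Delta^d$ includes $\vy^* = \vec{0}$ so that the $\gamma = 0$ option is admissible, which guarantees nonnegativity of $\tildereg\^T$.
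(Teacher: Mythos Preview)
Your proof is correct and uses the same key observation as the paper—namely the orthogonality $\langle \ut\^t, \vy\^t\rangle = 0$—but your explicit reparameterization $\vy^* = \gamma\vx^*$ and decoupling of the maximization actually establishes the \emph{equality} $\tildereg\^T = \max\{0,\reg\^T\}$ more directly than the paper's own proof, which only writes out the $\geq$ direction (restricting the comparator from $[0,1]\Delta^d$ to $\Delta^d$, and separately taking $\vy^* = \vec 0$). In this sense your argument is a slight improvement in presentation, though the underlying idea is identical.
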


This proposition is proved in \Cref{sec:pos_regret}. An immediate consequence of this proposition is that any RVU bounds for the \OFTRL in \Cref{eq:oftrl_psi} immediately imply nonnegative RVU bounds for \ours. Hence, we proceed by analyzing $\tildereg\^T$.

\paragraph{Nonnegative RVU.} With the equivalent formulation of \ours as the \OFTRL algorithm in \Cref{eq:oftrl_psi}, the strong spectral properties of $\psi$, and the nonnegative Regret, the analysis follows the standard machinery of Optimistic Follow-the-Regularized-Leader algorithms.

Consequently, we define $ {\vy}\^{t}$ as the outputs generated by the \OFTRL in \Cref{eq:oftrl_psi}:
\begin{align*}
    {\vy}\^{t} = \argmax_{  {\vy} \in \Omega} - F_t(  {\vy}) = \argmin_{  {\vy} \in \Omega} F_t(  {\vy}), \textup{ where } F_t(  {\vy}) = - \mleft\langle  {\Ut}\^t +  {\vec\ut}\^{t-1},  {\vy} \mright\rangle + \psi(  {\vy}). \numberthis{eq:Fdef}
\end{align*}
Moreover, we define the auxiliary sequence $ {\vz}\^{t}$ as the solutions to the corresponding standard \FTRL for each time $t$:
\begin{align*}
    {\vz}\^{t} = \argmax_{  {\vz} \in \Omega} - G_t(  {\vz}) = \argmin_{  {\vz} \in \Omega} G_t(  {\vz}), \textup{ where } G_t(  {\vz}) = - \mleft\langle  {\Ut}\^t,  {\vz} \mright\rangle + \psi(  {\vz}). \numberthis{eq:Gdef}
\end{align*}
Functions $G_t$ and $F_t$ are strongly convex, as shown by \Cref{theorem:regularizer_convex}. We continue the analysis with the following standard lemma from \OFTRL analysis.

\begin{lemma} \label{lemma:basi_OFTRL}
    For any $ {\vy} \in \Omega$, and the sequence $\{ {\vy}\^{t}\}_{t = 1}^{T}$ generated by \Cref{eq:Fdef}, it holds that
    \[
        \sum_{t = 1}^{T} \mleft\langle  {\vy} -  {\vy}\^{t},  {\ut}\^{t} \mright\rangle \leq \psi( {\vy}) - \psi( {\vy}\^{1}) + \underbrace{\sum_{t = 1}^{T} \mleft\langle  {\vz}\^{t+1} -  {\vy}\^{t},  {\ut}\^{t} -  {\ut}\^{t - 1}\mright\rangle}_{\textup{(I)}} \\
        \qquad \underbrace{- \sum_{t = 1}^{T} \mleft( D_{\psi}( {\vy}\^{t} \,\big\|\,  {\vz}\^{t}) + D_{\psi}( {\vz}\^{t+1} \,\big\|\,  {\vy}\^{t}) \mright)}_{\textup{(II)}}.
    \]
\end{lemma}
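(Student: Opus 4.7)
My plan is to follow the classical Optimistic-FTRL analysis via three ingredients: a Bregman-strengthened Be-the-Regularized-Leader (BTRL) inequality for the non-optimistic auxiliary sequence $\{\vz\^t\}$, the three-point identity for Bregman divergences, and the interior first-order optimality conditions of $\vy\^t$ and $\vz\^t$.

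First, I would prove by induction on $T$ the BTRL bound
\[
    \sum_{t=1}^T \langle \ut\^t, \vy - \vz\^{t+1}\rangle \;\leq\; \psi(\vy) - \psi(\vz\^1) - \sum_{t=1}^T D_\psi(\vz\^{t+1} \,\|\, \vz\^t),
\]
using $\vz\^1 = \argmin \psi$ for the base case and the optimality of $\vz\^{T+1}$ for $G_{T+1}$ combined with strong convexity of $\psi$ (via \Cref{theorem:regularizer_convex}) for the inductive step. The inductive step instantiates the hypothesis at $\vy = \vz\^{T+1}$ and then invokes the one-step strong-convexity optimality inequality to absorb the residual $\psi(\vz\^{T+1}) - \psi(\vy)$ term.

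Next, I would split the per-step regret through the identity
\[
    \langle \ut\^t, \vy - \vy\^t\rangle \;=\; \langle \ut\^t, \vy - \vz\^{t+1}\rangle + \langle \ut\^t - \ut\^{t-1}, \vz\^{t+1} - \vy\^t\rangle + \langle \ut\^{t-1}, \vz\^{t+1} - \vy\^t\rangle,
\]
bounding the first sum by the BTRL above and leaving the second sum as the stability term appearing in the lemma. For the third sum, I would invoke the three-point identity
\[
    D_\psi(\vz\^{t+1} \,\|\, \vz\^t) \;=\; D_\psi(\vz\^{t+1} \,\|\, \vy\^t) + D_\psi(\vy\^t \,\|\, \vz\^t) + \langle \nabla\psi(\vy\^t) - \nabla\psi(\vz\^t),\, \vz\^{t+1} - \vy\^t\rangle
\]
together with the interior first-order optimality conditions $\nabla\psi(\vy\^t) = \Ut\^t + \ut\^{t-1}$ and $\nabla\psi(\vz\^t) = \Ut\^t$, which give $\nabla\psi(\vy\^t) - \nabla\psi(\vz\^t) = \ut\^{t-1}$ exactly. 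Substituting yields
\[
    \langle \ut\^{t-1}, \vz\^{t+1} - \vy\^t\rangle \;=\; D_\psi(\vz\^{t+1} \,\|\, \vz\^t) - D_\psi(\vy\^t \,\|\, \vz\^t) - D_\psi(\vz\^{t+1} \,\|\, \vy\^t).
\]
Summing over $t$ and combining, the $D_\psi(\vz\^{t+1} \,\|\, \vz\^t)$ terms cancel exactly against those introduced by the BTRL, producing the stated bound. Finally, since $\Ut\^1 = \ut\^0 = \vec 0$ we have $\vy\^1 = \vz\^1 = \argmin \psi$, so $\psi(\vz\^1) = \psi(\vy\^1)$.

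The main technical subtlety I anticipate is justifying that the minimizers $\vy\^t$ and $\vz\^t$ are strictly interior to $\Omega = (0,1]\Delta^d$ so the FOC holds as an equality rather than only as a variational inequality. This is where the particular structure of $\psi$ matters: the logarithmic term $-\frac{\alpha}{\eta}\log(\sum_r \vy[r])$ and the entropic term $\frac{1}{\eta \summ{\vy}}\sum_r \vy[r] \log \vy[r]$ both blow up as any coordinate $\vy[r] \to 0$, keeping the minimizers off the coordinate boundary. Moreover, the equivalence in \Cref{theorem:equivalence_viewpoints} shows $\summ{\vy\^t} = \lambda\^t \leq \eta < 1$, so the minimizers stay strictly off the upper face $\sum_r \vy[r] = 1$ as well. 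With interior minimizers the gradient FOC is exact and the remainder of the argument is routine bookkeeping.
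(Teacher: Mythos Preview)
Your route is genuinely different from the paper's. The paper does not go through a BTRL bound or the three-point identity; instead it chains the constrained-optimality inequality $F(\vz^*) \leq F(\vz) - D_F(\vz\|\vz^*)$ (their \Cref{lemma:opt_bregman}) alternately at $\vz\^t$ and $\vy\^t$, obtaining $G_t(\vz\^t)\le F_t(\vy\^t)+\langle\vy\^t,\ut\^{t-1}\rangle-D_\psi(\vy\^t\|\vz\^t)$ and $F_t(\vy\^t)\le G_{t+1}(\vz\^{t+1})+\langle\vz\^{t+1},\ut\^t-\ut\^{t-1}\rangle-D_\psi(\vz\^{t+1}\|\vy\^t)$, and then telescopes. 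The key advantage of their approach is that it uses only the variational-inequality form of first-order optimality, which holds regardless of whether the minimizers are interior.

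Your interiority justification has a real gap. The claim that $\summ{\vy\^t}=\lambda\^t\le\eta<1$ conflates the two normalizations in the paper: in the OFTRL formulation \eqref{eq:Fdef} under analysis, $\Omega=[0,1]\Delta^d$ and $\summ{\vy\^t}$ equals the \emph{rescaled} learning rate in $(0,1]$, not the original $\lambda\^t\in(0,\eta]$ (see the footnote to \Cref{lemma:KKT} and the change of variables in \Cref{theorem:equivalence_opt}). In particular, \Cref{lemma:lambda_is_one} shows that whenever $\max_r\at\^t[r]\ge-\beta\log^2 d$---the typical operating regime---the unrescaled rate saturates at $\eta$, which corresponds to $\summ{\vy\^t}=1$. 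So $\vy\^t$ (and likewise $\vz\^t$) frequently sits on the upper face $\summ{\vy}=1$, where the gradient FOC carries a Lagrange multiplier $\mu\vec{1}$ and your exact identity $\nabla\psi(\vy\^t)-\nabla\psi(\vz\^t)=\ut\^{t-1}$ fails. Your step~3 can be repaired by replacing the three-point identity with two applications of the variational-inequality form (essentially \Cref{lemma:opt_bregman} at $\vy\^t$ and at $\vz\^t$), but at that point you are recovering the paper's chaining argument rather than the cleaner equality route you sketched.
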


The general plan for our RVU bounds is to carefully upper bound term (I) to obtain the beta terms $O(\eta) \sum_{t=1}^{T-1} \| \nut\^{t} - \nut\^{t-1} \|_{\infty}^2$, and to upper bound term (II) to obtain the gamma terms $- \Omega\left(\frac{1}{\eta}\right) \sum_{t=1}^{T-1} \| \vx\^{t+1} - \vx\^{t} \|_{1}^2$. This procedure is formalized in the proof of \Cref{theorem:rvu} in \Cref{sec:proof_rvu}.

A key observation in this regard is that, by \Cref{theorem:bregman_curve} and the multiplicative stability of the Dynamic Learning Rate Control Step in \Cref{theorem:stability_lambda}, the Bregman divergence $D_\psi(\vy \,\big\|\, \vz)$ is not only lower bounded by the squared norm of the difference of actions in the lifted space, $\Omega\left(\frac{1}{\eta}\right) \| \vy - \vz \|_1^2$ (\Cref{lemma:bregman_to_l1}), but also, due to our novel choice of regularizer, it is lower bounded by the squared norm of the difference of actions in the original space (action simplex), $\Omega\left(\frac{1}{\eta}\right) \| \vtheta - \vx \|_1^2$ (\Cref{lemma:beta_terms}). Combining these two bounds, we obtain that 
\[
    D_\psi(\vy \,\big\|\, \vz) \geq \Omega \left(\frac{1}{\eta}\right) \left(\| \vy - \vz \|_1^2 + \| \vtheta - \vx \|_1^2\right). \numberthis{eq:lower_breg_order}
\]
By choosing a small learning rate $\eta$, the $\Omega \left(\frac{1}{\eta}\right) \| \vy - \vz \|_1^2$ terms in \Cref{eq:lower_breg_order} compensate for the $O(\eta) \| \vy - \vz \|_1^2$ that results from converting term (I) into the beta terms $O(\eta) \sum_{t=1}^{T-1} \| \nut\^{t} - \nut\^{t-1} \|_{\infty}^2$. This conversion process follows from the relationship between the $\ut$s and $\nut$s, basic calculations involving the Cauchy–Schwarz and Hölder’s inequalities, and a sufficiently small choice of $\eta$. Details are provided in the proof of \Cref{theorem:rvu} in \Cref{sec:proof_rvu}.

\subsection{Detailed Analysis}
In this section, we first state the auxiliary lemmas used in the analysis, and then provide the detailed analysis and formal proofs for the sketches presented in the previous sections.

\subsubsection{Auxiliary Lemmas}
\begin{theorem}[Theorem 2.2.5 of \citep{renegar2001mathematical}] \label{theorem:newton_self}
    Given a self-concordant function $f: \bbR^d \rightarrow \bbR$, and the induced local norm $\| . \|_{\vx} \defeq \| . \|_{\nabla^2 f(\vx)}$, if for some $\vx$ in the domain of $f$ we have $\| n(\vx) \|_{\vx} \leq \frac{1}{4}$, then $f$ has a minimizer $\vz$ that
    \[
        \|z - x \|_x \leq 3 \| n(\vx) \|_{\vx},
    \]
    where $n(\vx)$ is the Newton update, $n(\vx) = - [\nabla^2 f(\vx)]^{-1} \nabla f(\vx)$. 
    By 
\end{theorem}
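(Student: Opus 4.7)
The plan is to establish this classical Newton convergence bound for self-concordant functions by exploiting the fundamental invariance of the local norm within the Dikin ellipsoid. Define the Newton iterate sequence $\vx_0 = \vx$, $\vx_{k+1} = \vx_k + n(\vx_k)$, and let $\delta_k \defeq \|n(\vx_k)\|_{\vx_k}$ denote the Newton decrement at step $k$. The goal is to show $\delta_k \to 0$ quadratically, that each iterate stays inside the domain, and that $\sum_k \|\vx_{k+1} - \vx_k\|_{\vx_0}$ is dominated by a geometric series totaling at most $3\delta_0$.

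The first step would be to invoke the key self-concordance consequence: whenever $\|\vu\|_{\vx} < 1$, the Hessian satisfies the two-sided bound
\[
(1 - \|\vu\|_\vx)^2 \nabla^2 f(\vx) \preceq \nabla^2 f(\vx + \vu) \preceq \frac{1}{(1 - \|\vu\|_\vx)^2} \nabla^2 f(\vx),
\]
which in particular makes the induced local norms $\|\cdot\|_{\vx}$ and $\|\cdot\|_{\vx+\vu}$ multiplicatively close within the unit Dikin ball. Because $\delta_0 \le 1/4 < 1$, the point $\vx_1$ lies in the domain and its local norm is comparable (up to a factor in $[3/4, 4/3]$) to that at $\vx_0$. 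I would then plug into the standard computation of the progress of the Newton step, using the identity $\langle \nabla f(\vx_k), n(\vx_k)\rangle = -\delta_k^2$ and integrating the Hessian along the segment $[\vx_k, \vx_{k+1}]$, to obtain the quadratic contraction $\delta_{k+1} \le \delta_k^2/(1-\delta_k)^2$. Feeding $\delta_0 \le 1/4$ into this recursion gives $\delta_{k+1} \le \tfrac{4}{9}\delta_k$, so the $\delta_k$ shrink at least geometrically with ratio $4/9$.

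With geometric contraction of $\delta_k$ in hand, the next step would be to track the displacements in the \emph{original} norm $\|\cdot\|_{\vx_0}$. Using the Hessian comparison to pass between $\|\cdot\|_{\vx_k}$ and $\|\cdot\|_{\vx_0}$, each Newton step length $\|\vx_{k+1}-\vx_k\|_{\vx_0}$ can be bounded by $(1 - \sigma_k)^{-1}\delta_k$ where $\sigma_k = \sum_{j<k}\delta_j$ is a bookkeeping term that stays strictly below $1/2$ throughout the iteration (because $\sum_k \delta_k \le \delta_0 \sum_k (4/9)^k = \tfrac{9}{5}\delta_0 \le 9/20$). Summing the resulting geometric series and keeping track of the accumulated constants gives $\sum_{k\ge 0}\|\vx_{k+1}-\vx_k\|_{\vx_0} \le 3\delta_0$, establishing simultaneously that the iterates form a Cauchy sequence inside the domain and that their limit $\vz$ satisfies $\|\vz - \vx\|_\vx \le 3\|n(\vx)\|_\vx$.

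The last step is to verify that the limit point $\vz$ is actually a minimizer. Because the Hessians stay uniformly bounded below on the segment traversed (the Dikin ball of radius $1/2$ around $\vx_0$ has $\nabla^2 f \succeq (1/4)\nabla^2 f(\vx_0) \succ 0$), continuity of $\nabla f$ combined with $\|\nabla f(\vx_k)\|_{\vx_k}^* = \delta_k \to 0$ forces $\nabla f(\vz) = 0$, and then strict convexity on this ellipsoidal neighborhood makes $\vz$ the unique minimizer of $f$ there. The main technical obstacle I would expect is keeping careful track of the constants when converting between the shifting local norms $\|\cdot\|_{\vx_k}$ and the reference norm $\|\cdot\|_{\vx_0}$; the contraction ratio and the safety margin that keeps us strictly inside the Dikin ball both depend on the initial bound $\delta_0 \le 1/4$, and getting a clean constant $3$ in the final bound requires choosing the book-keeping carefully rather than bounding everything crudely.
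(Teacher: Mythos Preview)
The paper does not prove this statement; it is cited verbatim as Theorem~2.2.5 of \citet{renegar2001mathematical} and listed among the auxiliary lemmas with no accompanying argument. Your proposal is the standard textbook proof of this classical self-concordance result---quadratic contraction of the Newton decrement, Hessian comparison within the Dikin ellipsoid, and summation of the resulting geometric series---and is essentially correct, but there is nothing in the paper to compare it against.
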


\begin{lemma} \label{lemma:softmax_bound}
    For any set of numbers $s_1, s_2, \dots, s_d$, the log-sum-exp function $g(t) \defeq h(t s_1, t s_2, \dots, t s_d) = \log \mleft( \sum_{\ind = 1}^{d} \exp\{t s_\ind\}\mright) $ satisfies,
    \[
        \max\{ s_1, s_2, \dots, s_d \} \leq \frac{g(t)}{t} \leq \max\{ s_1, s_2, \dots, s_d \} + \frac{\log d}{t}.
    \]
\end{lemma}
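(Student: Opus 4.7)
The statement is the standard sandwich bound relating the log-sum-exp function to the max. The proof plan is elementary and proceeds by bounding the sum $\sum_{\ind=1}^d \exp\{t s_\ind\}$ above and below by expressions involving $\exp\{t s^*\}$, where $s^* \defeq \max_\ind s_\ind$. I assume $t > 0$ throughout (the context of the paper guarantees this, since $t$ plays the role of the learning rate $\lambda > 0$); the statement is vacuous/reversed otherwise.

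For the lower bound, I would observe that every summand is nonnegative, so the sum is at least as large as its largest term: $\sum_{\ind=1}^d \exp\{t s_\ind\} \ge \exp\{t s^*\}$. Taking the logarithm (which is monotone) yields $g(t) \ge t s^*$, and dividing by $t > 0$ gives $g(t)/t \ge s^*$, which is the left inequality.

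For the upper bound, I would use the opposite direction: since $s_\ind \le s^*$ for every $\ind \in [d]$, and $\exp$ is monotone, $\exp\{t s_\ind\} \le \exp\{t s^*\}$ for every $\ind$. Summing over the $d$ indices gives $\sum_{\ind=1}^d \exp\{t s_\ind\} \le d \exp\{t s^*\}$. Taking logarithms yields $g(t) \le t s^* + \log d$, and dividing by $t > 0$ produces $g(t)/t \le s^* + (\log d)/t$, which is the right inequality.

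There is no real obstacle here: the lemma is a textbook fact, and both inequalities follow from monotonicity of $\exp$ and $\log$ together with the trivial chain $\exp\{t s^*\} \le \sum_{\ind=1}^d \exp\{t s_\ind\} \le d \exp\{t s^*\}$. If the paper wishes to avoid the $t>0$ caveat (for example to handle the degenerate initialization $\lambda \to 0$ by continuity), one could alternatively note that the two-sided bound is equivalent to $0 \le g(t) - t s^* \le \log d$, which holds uniformly and does not require dividing by $t$ — this form may be marginally cleaner to cite later in the analysis.
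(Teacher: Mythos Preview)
Your proposal is correct and matches the paper's proof essentially verbatim: the paper also bounds $\sum_{\ind=1}^d \exp\{t s_\ind\}$ below by $\max_\ind \exp\{t s_\ind\}$ and above by $d \cdot \max_\ind \exp\{t s_\ind\}$, then takes logarithms and divides by $t$. Your remark about the implicit assumption $t>0$ is a fair clarification that the paper leaves unstated.
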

\begin{proof}
    The LHS follows by lowerbounding $\sum_{\ind = 1}^{d} \exp\{t o_\ind\}$ by $\max\{\exp\{t o_\ind\}\}$. The RHS follows by upperbounding $\sum_{\ind = 1}^{d} \exp\{t o_\ind\}$ by $d.\max\{\exp\{t o_\ind\}\}$.
\end{proof}

\begin{lemma}[Pinsker's inequality] \label{lemma:pinsker}
    Given two discrete random variables ${\vec p}$ and ${\vec q}$,
    \[
    \| {\vec p} - {\vec q} \|_1^2 \leq 2 \kl{\vec p}{\vec q}.
    \]
\end{lemma}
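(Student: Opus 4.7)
The plan is to prove Pinsker's inequality by the classical two-step reduction: first reduce the problem from an arbitrary finite alphabet to the binary (Bernoulli) case by a coarsening argument, and then verify the binary case by a one-dimensional convexity calculation.

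First, I would reduce to the binary case. Letting $A = \{i \in [d] : \vec{p}[i] \geq \vec{q}[i]\}$, set $P \defeq \sum_{i \in A}\vec{p}[i]$ and $Q \defeq \sum_{i \in A}\vec{q}[i]$. A standard identity for total variation gives
\[
    \|\vec{p} - \vec{q}\|_1 \;=\; \sum_{i \in A}(\vec{p}[i] - \vec{q}[i]) + \sum_{i \notin A}(\vec{q}[i] - \vec{p}[i]) \;=\; 2(P - Q).
\]
On the KL side, the log-sum inequality (equivalently, the data-processing inequality for the two-cell partition $\{A, A^c\}$) gives
\[
    \kl{\vec{p}}{\vec{q}} \;\geq\; P\log\frac{P}{Q} + (1-P)\log\frac{1-P}{1-Q}.
\]
So it suffices to establish $4(P-Q)^2 \leq 2\bigl(P\log\frac{P}{Q} + (1-P)\log\frac{1-P}{1-Q}\bigr)$ for all $P, Q \in [0,1]$, which is the binary Pinsker inequality.

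Second, I would prove this binary case by fixing $Q$ and studying
\[
    g(P) \;\defeq\; P\log\frac{P}{Q} + (1-P)\log\frac{1-P}{1-Q} - 2(P-Q)^2
\]
as a function of $P \in (0,1)$. A direct computation yields $g(Q) = 0$ and $g'(Q) = 0$, while
\[
    g''(P) \;=\; \frac{1}{P(1-P)} - 4 \;\geq\; 0,
\]
since $P(1-P) \leq 1/4$ on $[0,1]$. Therefore $g$ is convex with a global minimum at $P = Q$ of value $0$, so $g(P) \geq 0$ for all $P$, which finishes the binary case and hence the full inequality.

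There is no real obstacle here: the only subtlety is justifying the reduction step, which follows from the log-sum inequality applied to the two-cell partition $\{A, A^c\}$, and handling the boundary cases $P \in \{0,1\}$ or $Q \in \{0,1\}$ by continuity (with the standard convention $0 \log 0 = 0$ and $\kl{\vec{p}}{\vec{q}} = +\infty$ whenever the support of $\vec{p}$ is not contained in that of $\vec{q}$, in which case the inequality is trivial).
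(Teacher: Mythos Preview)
Your proof is correct and follows the classical two-step argument for Pinsker's inequality (reduce to the binary case via the data-processing/log-sum inequality, then verify the binary case by a convexity calculation). The paper itself does not supply a proof: it lists Pinsker's inequality as an auxiliary lemma and cites it as a standard fact, so there is no paper-side argument to compare against. Your write-up would serve perfectly well as the missing proof.
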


\begin{lemma}[Entropy difference] \label{lemma:entropy_diff}
    Given two discrete random variables ${\vec p}$ and ${\vec q}$ with support size $d$,
    \[
    | \ent{\vec p} - \ent{\vec q} | \leq (\log d)\sqrt{2 \kl{\vec p}{\vec q}}.
    \]
\end{lemma}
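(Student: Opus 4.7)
The plan is to establish this inequality via an integral representation of $\ent{\vec p} - \ent{\vec q}$ along the segment between $\vec q$ and $\vec p$ in $\Delta^d$, combined with a Cauchy--Schwarz step. The key observation is that this naturally decomposes $|\ent{\vec p} - \ent{\vec q}|$ into a product of a ``Fisher-information''-type factor (which integrates to a KL divergence) and a ``log''-type factor (which is bounded uniformly by $\log d$).

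Concretely, let $\vec r(t) \defeq (1 - t)\vec q + t\vec p \in \Delta^d$ for $t \in [0,1]$. By the fundamental theorem of calculus together with $\sum_i (p_i - q_i) = 0$, which absorbs both the $+1$ appearing in $\nabla \ent(\vec r) = \log \vec r + \vec{1}_d$ and the additive shift by $\log d$, one obtains
\[
    \ent{\vec p} - \ent{\vec q} = \int_0^1 \sum_i (p_i - q_i)\, \log\bigl(d\, r_i(t)\bigr)\, dt.
\]
Applying Cauchy--Schwarz pointwise in $t$ with weights $r_i(t)$ then yields
\[
    \Bigl|\sum_i (p_i - q_i)\, \log(d\, r_i(t))\Bigr| \;\le\; \sqrt{\sum_i \frac{(p_i - q_i)^2}{r_i(t)}}\,\cdot\,\sqrt{\sum_i r_i(t)\,\bigl(\log(d\, r_i(t))\bigr)^2}.
\]

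I expect the main technical hurdle to be the bound on the second factor, namely proving the uniform estimate $\sum_i r_i\,(\log(d\, r_i))^2 \le (\log d)^2$ over all $\vec r \in \Delta^d$. The plan is to reduce this to a constrained optimization by substituting $s_i = d\, r_i$ (so $\sum_i s_i = d$, $s_i \ge 0$) and to show via a Lagrangian analysis that the supremum of $\sum_i s_i (\log s_i)^2$ is attained at a vertex where a single coordinate equals $d$ and the others vanish, giving $d\,(\log d)^2$ and hence $(\log d)^2$ after rescaling. Meanwhile the first factor integrates explicitly: using the antiderivative $\int_0^1 (p_i - q_i)^2/r_i(t)\, dt = (p_i - q_i)\log(p_i/q_i)$, one gets $\int_0^1 \sum_i (p_i - q_i)^2/r_i(t)\, dt = \kl{\vec p}{\vec q} + \kl{\vec q}{\vec p}$. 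Combining these ingredients with Jensen's inequality $\int_0^1 \sqrt{g(t)}\, dt \le \sqrt{\int_0^1 g(t)\, dt}$ yields the symmetric form
\[
    |\ent{\vec p} - \ent{\vec q}| \;\le\; (\log d)\sqrt{\kl{\vec p}{\vec q} + \kl{\vec q}{\vec p}}.
\]

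The stated one-sided bound with $\sqrt{2\, \kl{\vec p}{\vec q}}$ then follows by a direction-specific refinement: starting from the exact decomposition $\ent{\vec p} - \ent{\vec q} = \kl{\vec p}{\vec q} + \sum_i (p_i - q_i)\log q_i$, one applies the same Cauchy--Schwarz step with weights $q_i$ to the residual $|\sum_i (p_i - q_i)\log(d\, q_i)|$, reuses the Lagrangian bound $\sum_i q_i(\log(d\, q_i))^2 \le (\log d)^2$, and then invokes Pinsker (\Cref{lemma:pinsker}) together with standard Cauchy--Schwarz manipulations to collapse the resulting chi-squared-type factor into $\kl{\vec p}{\vec q}$.
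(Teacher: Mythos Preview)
The proposal has a genuine gap at its central technical step: the uniform bound $\sum_i r_i\bigl(\log(d\, r_i)\bigr)^2 \le (\log d)^2$ over $\vec r \in \Delta^d$ is false. For $d=2$ and $\vec r = (0.1,\,0.9)$ one computes $0.1\,(\log 0.2)^2 + 0.9\,(\log 1.8)^2 \approx 0.570$, whereas $(\log 2)^2 \approx 0.480$. The Lagrangian argument you sketch cannot succeed because $s \mapsto s(\log s)^2$ is \emph{not} convex on all of $[0,d]$: its second derivative $(2/s)(\log s + 1)$ is negative for $s < e^{-1}$, so the constrained supremum of $\sum_i s_i(\log s_i)^2$ under $\sum_i s_i = d$ need not be attained at a vertex. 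Both places where you invoke this bound---the integral Cauchy--Schwarz step and the direction-specific refinement---therefore break.

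There is a second gap in the refinement. After writing $\ent{\vec p} - \ent{\vec q} = \kl{\vec p}{\vec q} + \sum_i (p_i - q_i)\log q_i$ and applying Cauchy--Schwarz with weights $q_i$, the residual factor you produce is $\sum_i (p_i-q_i)^2/q_i = \chi^2(\vec p\,\|\,\vec q)$, which obeys $\chi^2 \ge \kl{\vec p}{\vec q}$, not the reverse; Pinsker's inequality does not let you ``collapse'' a $\chi^2$ upper bound into a constant multiple of $\kl{\vec p}{\vec q}$, so this step is under-specified and, as written, does not close. For comparison with the paper: the paper lists \Cref{lemma:entropy_diff} among its auxiliary lemmas and gives no proof, so there is no argument in the paper to benchmark against.
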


\subsubsection{Proofs for Learning Rate Control Step (\Cref{sec:learning_rate_control_problem})} \label{sec:proof_temp_adjust}
We start proving \Cref{lemma:convexity_concordance_step}, by first proving strong concavity of optimization problem \eqref{eq:opt_problem_lambda}.
\begin{lemma}(Strong concavity of learning rate control Step) \label{lemma:convexity_step}
    The objective of the optimization problem,
    \[
        \hat{\lambda} \defeq \argmax_{\lambda \in (0, \eta]} \left\{f(\lambda; \at) \defeq (\alpha - 1) \log \lambda + \log \Big( {\sum_{\ind=1}^d e^{\lambda \at[\ind]}} \Big)  \right\},
    \]
    is $\frac{\alpha - \log^2 d - 1}{\lambda^2}$-strongly concave.
\end{lemma}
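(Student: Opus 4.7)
I would prove this by direct computation of $f''(\lambda;\at)$ and reducing the problem to bounding the variance of $\at$ under the softmax distribution. Write $f(\lambda;\at) = (\alpha-1)\log\lambda + g(\lambda)$ where $g(\lambda) \defeq \log\sum_{k=1}^d e^{\lambda \at[k]}$ is the log-sum-exp function. Differentiating the first term twice gives $-(\alpha-1)/\lambda^2$. For the second term, a standard computation shows
\[
    g'(\lambda) = \mathbb{E}_{\vec{p}(\lambda)}[\at], \qquad g''(\lambda) = \mathrm{Var}_{\vec{p}(\lambda)}(\at) \ge 0,
\]
where $\vec{p}(\lambda)$ is the softmax distribution with $\vec{p}(\lambda)[k] \propto e^{\lambda \at[k]}$. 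Hence
\[
    f''(\lambda;\at) \;=\; -\frac{\alpha-1}{\lambda^2} + \mathrm{Var}_{\vec{p}(\lambda)}(\at),
\]
and it suffices to show the variance bound $\mathrm{Var}_{\vec{p}(\lambda)}(\at) \le \log^2 d / \lambda^2$.

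To prove the variance bound, I would exploit shift-invariance of the variance. Let $r^* \defeq \max_k \at[k]$ and $w_k \defeq r^* - \at[k] \ge 0$. Then $\mathrm{Var}_{\vec{p}(\lambda)}(\at) = \mathrm{Var}_{\vec{p}(\lambda)}(w) \le \mathbb{E}_{\vec{p}(\lambda)}[w^2]$ (since $w\ge 0$). The key observation is a tail bound on $w$: writing $\vec{p}(\lambda)[k] = e^{-\lambda w_k}/Z$ with $Z = \sum_j e^{-\lambda w_j} \ge 1$ (the index attaining $r^*$ contributes $1$), we get $\vec{p}(\lambda)[k] \le e^{-\lambda w_k}$. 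Summing over $\{k : w_k > t\}$ and combining with the trivial bound,
\[
    \Pr_{\vec{p}(\lambda)}(w > t) \;\le\; \min\{1,\, d\, e^{-\lambda t}\}.
\]
The two regimes cross at $t^\star = \log d/\lambda$, which is precisely the scale on which $\mathbb{E}_{\vec{p}(\lambda)}[w]$ concentrates (an easy consequence of Lemma~\ref{lemma:softmax_bound}).

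I would then compute
\[
    \mathbb{E}_{\vec{p}(\lambda)}[w^2] \;=\; \int_0^\infty 2t\,\Pr(w>t)\,dt \;\le\; \int_0^{\log d/\lambda}\! 2t\,dt \;+\; \int_{\log d/\lambda}^\infty\! 2t\,d\,e^{-\lambda t}\,dt.
\]
The first integral equals $\log^2 d / \lambda^2$, and a substitution $s = \lambda t$ shows the second is a lower-order correction in $1/\lambda^2$. Combining these yields $\mathrm{Var}_{\vec{p}(\lambda)}(\at) \le \log^2 d/\lambda^2$ (up to the absolute constants absorbed into the ``$-1$'' in the lemma statement), and plugging into the expression for $f''$ gives the claimed strong concavity bound.

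\textbf{Main obstacle.} The conceptual step---recognizing $g''(\lambda)$ as a variance and that it must vanish at rate $1/\lambda^2$ since the softmax concentrates on the argmax---is clean, but the technical obstacle lies in getting the constants right. A naïve tail integration produces extra $O(\log d)$ terms beyond the leading $\log^2 d$, so one must either be careful with the tail integral or appeal to the entropy identity $\tfrac{d}{d\lambda} H(\vec{p}(\lambda)) = -\lambda \,\mathrm{Var}_{\vec{p}(\lambda)}(\at)$ together with $\int_0^\infty \lambda\,\mathrm{Var}_{\vec{p}(\lambda)}(\at)\,d\lambda = \log d$ to control the variance at a single scale.
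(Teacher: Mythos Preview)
Your recognition of $g''(\lambda)$ as a softmax variance and the tail-bound strategy are sound, but there is a concrete gap at the final step. Carrying out your tail integration actually gives
\[
\mathbb{E}_{\vec{p}(\lambda)}[w^2]\;\le\;\frac{\log^2 d+2\log d+2}{\lambda^2},
\]
and the $2\log d$ overshoot is \emph{not} an absolute constant---it cannot be ``absorbed into the $-1$'' in the lemma statement as you claim. Your argument therefore yields strong concavity with modulus $(\alpha-3-2\log d-\log^2 d)/\lambda^2$, weaker than the lemma by a $\log d$-sized term. The entropy identity you propose as a remedy gives only the integrated relation $\int_0^\infty \lambda\,\mathrm{Var}_{\vec{p}(\lambda)}(\at)\,d\lambda=\log d$, which does not by itself control the variance at a fixed $\lambda$.

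The paper's proof avoids tail integration entirely via a change of variables. Inverting the softmax relation $\vx[k]=e^{\lambda\at[k]}/\Gamma$ (with $\Gamma=\sum_j e^{\lambda\at[j]}$) gives $\at[k]=(\log\vx[k]+\log\Gamma)/\lambda$, and shift-invariance of variance then produces the \emph{exact} identity
\[
\mathrm{Var}_{\vec{p}(\lambda)}(\at)\;=\;\frac{1}{\lambda^2}\Bigg(\sum_{k}\vx[k]\log^2\vx[k]-\Big(\sum_k\vx[k]\log\vx[k]\Big)^{2}\Bigg).
\]
The $1/\lambda^2$ scaling is now exact rather than emerging from a split at $t^\star=\log d/\lambda$, and the bracketed quantity---the varentropy of $\vx\in\Delta^d$---depends only on $d$, which the paper bounds by $\log^2 d$. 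The loss in your approach comes precisely from shifting by $r^*$ rather than the mean: passing from variance to second moment discards the $-\big(\sum_k\vx[k]\log\vx[k]\big)^2$ term, and that term is exactly what offsets the tail contribution you struggle to control.
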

\begin{proof}
    Taking derivatives from $f(\lambda; \at)$ w.r.t. $\lambda$,
    \[
        \frac{\partial f(\lambda; \at)}{\partial \lambda}     & = \frac{\sum_{\ind = 1}^d \at[\ind] e^{\lambda \at[\ind]}}{\sum_{\ind = 1}^d e^{\lambda \at[\ind]}} + \frac{\alpha - 1}{\lambda}.                                                                                                                                                        \\
        \frac{\partial^2 f(\lambda; \at)}{\partial \lambda^2} & = \frac{ \Big( \sum_{\ind = 1}^d \at[\ind]^2 e^{\lambda \at\^t[\ind]} \Big) \Big(\sum_{\ind = 1}^d e^{\lambda \at\^t[\ind]} \Big) - \Big(\sum_{\ind = 1}^d \at[\ind] e^{\lambda \at[\ind]} \Big)^2}{\Big(\sum_{\ind = 1}^d e^{\lambda \at[\ind]} \Big)^2} - \frac{\alpha - 1}{\lambda^2} \\
        & = \frac{\sum_{\ind = 1}^d \at[\ind]^2 e^{\lambda \at[\ind]} }{\sum_{\ind = 1}^d e^{\lambda \at[\ind]}} - \mleft( \frac{\sum_{\ind = 1}^d \at[\ind] e^{\lambda \at[\ind]}}{\sum_{\ind = 1}^d e^{\lambda \at[\ind]}}\mright)^2 - \frac{\alpha - 1}{\lambda^2}.
    \]
    Substituting $\at[r] = \frac{\log \vx[r] + \log \Gamma}{\lambda}$ for every coordinate $r \in [d]$, where $\Gamma = \sum_{\ind = 1}^{d} \exp\{\lambda \at[\ind]\}$ and $\vx \in \Delta^d$, we get
    \[
        \frac{\partial^2 f(\lambda; \at)}{\partial \lambda^2} & = \frac{1}{\lambda^2} \sum_{\ind = 1}^{d } \big( \vx[\ind] (\log \vx[\ind] + \log \Gamma)^2 \big) - \frac{1}{\lambda^2} \Big ( \sum_{\ind =1}^{k}  \vx[\ind] (\log \vx[\ind] + \log \Gamma) \Big)^2 -\frac{\alpha - 1}{\lambda^2} \\
        & = \frac{1}{\lambda^2} \sum_{\ind = 1}^{d } \big( \vx[\ind] (\log \vx[\ind] + \log \Gamma)^2 \big) - \frac{1}{\lambda^2} \Big ( \sum_{\ind =1}^{k}  \vx[\ind] \log \vx[\ind]  + \log \Gamma \Big)^2 -\frac{\alpha - 1}{\lambda^2}  \\
        & = \frac{1}{\lambda^2} \mleft( \sum_{\ind =1}^{d} \vx[\ind] \log^2 \vx[\ind] - \Big( \sum_{\ind =1}^{k} \vx[\ind] \log \vx[\ind] \Big)^2 - (\alpha - 1) \mright)                                                                   \\
        & \leq \frac{1}{\lambda^2} \mleft( \log^2 d - (\alpha - 1) \mright).
    \]
    Therefore, the function $f(\lambda; \at)$ is $\frac{\alpha - \log^2 d - 1}{\lambda^2}$-strongly concave.
\end{proof}
Secondly, we show that the optimization problem \eqref{eq:opt_problem_lambda} is self-concordant. 
\begin{lemma}[Self-concordance of learning rate control step] \label{lemma:self_concordance}
    The objective of the optimization problem,
    \[
        \hat{\lambda} \defeq \argmax_{\lambda \in (0, \eta]} \left\{f(\lambda; \at) \defeq (\alpha - 1) \log \lambda + \log \Big( {\sum_{\ind=1}^d e^{\lambda \at[\ind]}} \Big)  \right\},
    \]
    is self-concordant., i.e.,
    \[
        \mleft( \frac{\partial^3 f(\lambda; \at)}{\partial \lambda^3} \mright)^2 \leq  - 4 \mleft(\frac{\partial^2 f(\lambda; \at)}{\partial \lambda^2} \mright)^3.
    \]
\end{lemma}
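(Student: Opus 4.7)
The plan is to exploit the exponential-family substitution $p_k(\lambda) \defeq e^{\lambda \at[k]}/\sum_j e^{\lambda \at[j]}$ that already drives the proof of \Cref{lemma:convexity_step}. The key ingredients are the standard identity $\frac{d \log p_k}{d \lambda} = \at[k] - \mu$ (with $\mu \defeq \mathbb{E}_p[\at]$) together with its shift-invariant form $\at[k] - \mu = \tilde q_k/\lambda$, where $\tilde q_k \defeq \log p_k + H(p)$ satisfies $\mathbb{E}_p[\tilde q] = 0$. Differentiating $V \defeq \mathrm{Var}_p(\log p) = \sum_k p_k \tilde q_k^2$ in $\lambda$ and using these identities yields the clean relation
\[
\frac{dV}{d\lambda} \;=\; \frac{M_3 + 2V}{\lambda}, \qquad \text{where } M_3 \defeq \mathbb{E}_p\bigl[\tilde q^3\bigr]
\]
is the third central moment of $\log p$ under $p$. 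Combining this with the formula $\lambda^2 f''(\lambda;\at) = V - (\alpha - 1)$ derived in \Cref{lemma:convexity_step} produces the compact third-derivative analog $\lambda^3 f'''(\lambda;\at) = 2(\alpha - 1) + M_3$.

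After clearing $\lambda^6$, the target inequality $(f''')^2 \leq -4(f'')^3$ is equivalent to
\[
\bigl(2(\alpha - 1) + M_3\bigr)^2 \;\leq\; 4\bigl((\alpha - 1) - V\bigr)^3.
\]
\Cref{lemma:convexity_step} already supplies $V \leq \log^2 d$, so the remaining work is a companion bound $|M_3| \leq C \log^3 d$ for some absolute constant $C$. I would split $M_3 = \sum_k p_k \tilde q_k^3$ by the sign of $\tilde q_k$. For indices with $\tilde q_k \geq 0$ the trivial estimate $\tilde q_k \leq H(p) \leq \log d$ combined with the variance bound controls the contribution by $\log d \cdot V \leq \log^3 d$. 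For indices with $\tilde q_k < 0$, the substitution $p_k = e^{-H(p) - u_k}$ with $u_k \defeq -\tilde q_k > 0$ rewrites each term as $e^{-H(p)} e^{-u_k} u_k^3$; partitioning further by whether $u_k \leq T$ or $u_k > T$ for a threshold $T \asymp \log d$, the small-$u$ window is bounded by $T^3 \sum e^{-u_k} \leq T^3 e^{H(p)}$ (yielding $O(\log^3 d)$ after the $e^{-H(p)}$ factor), while the large-$u$ window exploits the monotonicity of $e^{-u} u^3$ for $u \geq 3$ and the trivial count of at most $d$ such indices to give an $O(d \cdot e^{-T} T^3)$ remainder, which is $O(1)$ for $T$ a sufficiently large multiple of $\log d$.

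Plugging $V \leq \log^2 d$ and $|M_3| \leq C \log^3 d$ into the target inequality and invoking the hypothesis $\alpha \geq 2 + 2 \log d + \beta \log^2 d$ used throughout \Cref{sec:loga}, the right-hand side is of order $(\beta - 1)^3 \log^6 d$ while the left-hand side is at most a constant multiple of $\log^6 d$; hence for $\beta$ a sufficiently large universal constant (the paper's $\beta \geq 70$ from \Cref{theorem:stability_lambda} has ample slack), the inequality holds. The step I expect to be the main obstacle is the moment estimate $|M_3| \leq C \log^3 d$: unlike the variance $V$, which is tamed by the uniform pointwise bound on $p(-\log p)^2$, the quantity $p_k(\log p_k + H(p))^3$ is not pointwise bounded as $p_k \to 0$, and the estimate requires a careful balance between decaying probabilities and growing $|\tilde q_k|^3$ mediated by the envelope $\max_{u \geq 0} e^{-u} u^3 = 27/e^3$ together with the $\ell_1$ normalization $\sum_k p_k = 1$.
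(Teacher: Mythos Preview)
Your proposal is correct and structurally identical to the paper's argument: both compute $\lambda^2 f'' = V - (\alpha-1)$ and $\lambda^3 f''' = M_3 + 2(\alpha-1)$ via the softmax substitution (the paper by direct calculation in the variables $\vx[k]=p_k$ and $\Gamma$, you via the $dV/d\lambda$ identity), then reduce self-concordance to $(M_3 + 2(\alpha-1))^2 \leq 4((\alpha-1)-V)^3$ and finish by comparing orders of magnitude for $\alpha = \Theta(\log^2 d)$.

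The only substantive difference is in how the third-moment bound is obtained. The paper handles it in one line: expanding $M_3$ in raw moments of $\log p$, the two nonpositive summands $\sum_k \vx[k]\log^3\vx[k]$ and $2H(\vx)^3$ are dropped, leaving $M_3 \leq -3H(\vx)\sum_k\vx[k]\log^2\vx[k] \leq 3\log^3 d$, and then $(f''')^2 \le (3\log^3 d + 2(\alpha-1))^2/\lambda^6$ is asserted directly. Your sign-split-plus-threshold argument on $\tilde q_k$ is more elaborate but also more careful---it bounds $|M_3|$ from both sides, whereas the paper's dropping trick only gives the upper bound on $M_3$ (and the intermediate inequality $|H|\cdot\mathbb{E}[\log^2 p] \leq \log^3 d$ is asserted without justification). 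So the step you flagged as the ``main obstacle'' is precisely where the paper is most terse; the paper's route is much shorter, yours is more rigorous and self-contained.
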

\begin{proof}
    Recall that from proof of \Cref{lemma:convexity_step},
    \[
        \frac{\partial^2 f(\lambda; \at)}{\partial \lambda^2} & = \frac{\sum_{\ind = 1}^d \at[\ind]^2 e^{\lambda \at[\ind]} }{\sum_{\ind = 1}^d e^{\lambda \at[\ind]}} - \mleft( \frac{\sum_{\ind = 1}^d \at[\ind] e^{\lambda \at[\ind]}}{\sum_{\ind = 1}^d e^{\lambda \at[\ind]}}\mright)^2 - \frac{\alpha - 1}{\lambda^2},
    \]
    Furthermore,
    \[
        - \frac{\partial^2 f(\lambda; \at)}{\partial \lambda^2} \geq \frac{\alpha - \log^2 d - 1}{\lambda^2}.
    \]
    Taking derivatives,
    \[
        \frac{\partial^3 f(\lambda; \at)}{\partial \lambda^3} & = \frac{\sum_{\ind = 1}^d \at[\ind]^3 e^{\lambda \at[\ind]} }{\sum_{\ind = 1}^d e^{\lambda \at[\ind]}} - \frac{\Big( \sum_{\ind = 1}^d \at[\ind]^2 e^{\lambda \at[\ind]} \Big) \Big(  \sum_{\ind = 1}^d \at[\ind] e^{\lambda \at[\ind]}\Big)}{\Big(\sum_{\ind = 1}^d e^{\lambda \at[\ind]}\Big)^2}                                                                                                      \\
        & \quad - 2 \mleft( \frac{\sum_{\ind = 1}^d \at[\ind] e^{\lambda \at[\ind]}}{\sum_{\ind = 1}^d e^{\lambda \at[\ind]}} \mright) \mleft( \frac{\sum_{\ind = 1}^d \at[\ind]^2 e^{\lambda \at[\ind]} }{\sum_{\ind = 1}^d e^{\lambda \at[\ind]}} - \mleft( \frac{\sum_{\ind = 1}^d \at[\ind] e^{\lambda \at[\ind]}}{\sum_{\ind = 1}^d e^{\lambda \at[\ind]}}\mright)^2 \mright) + \frac{2 (\alpha - 1)}{\lambda^3}.
    \]
    Substituting $\at[r] = \frac{\log \vx[r] + \log \Gamma}{\lambda}$ for every coordinate $r \in [d]$, where $\Gamma = \sum_{\ind = 1}^{d} \exp\{\lambda \at[\ind]\}$ and $\vx \in \Delta^d$, we get
    \[
        & \frac{\partial^3 f(\lambda; \at)}{\partial \lambda^3}                                                                                                                                                                                                                                     \\
        & \quad  = \frac{1}{\lambda^3} \sum_{\ind = 1}^{d } \vx[\ind] (\log \vx[\ind] + \log \Gamma)^3 - 3 \frac{1}{\lambda^3} \Big( \sum_{\ind = 1}^{d } \vx[\ind] (\log \vx[\ind] + \log \Gamma)^2 \Big)  \Big( \sum_{\ind = 1}^{d } \vx[\ind] (\log \vx[\ind] + \log \Gamma) \Big)  \\
        & \quad \qquad + 2 \frac{1}{\lambda^3} \Big( \sum_{\ind = 1}^{d } \vx[\ind] (\log \vx[\ind] + \log \Gamma) \Big)^3 + \frac{2 (\alpha - 1)}{\lambda^3}                                                                                                                          \\
        & \quad = \frac{1}{\lambda^3} \mleft( \sum_{\ind = 1}^{d} \vx[k]\log^3 \vx[k] - 3 \big( \sum_{\ind = 1}^{d} \vx[k] \log^2 \vx[k] \big) \big( \sum_{\ind = 1}^{d} \vx[k] \log \vx[k] \big) + 2 \big ( \sum_{\ind = 1}^{d} \vx[k] \log \vx[k] \big)^3  + 2 (\alpha - 1) \mright) \\
        & \quad \leq \frac{1}{\lambda^3} \mleft( 3 \log^3 d +  2 (\alpha - 1) \mright)
    \]
    Thus,
    \[
        \mleft(\frac{\partial^3 f(\lambda; \at)}{\partial \lambda^3}\mright)^2 \leq \mleft(\frac{3 \log^3 d +  2 (\alpha - 1)}{\lambda^3}  \mright)^2 \leq 4 \mleft( \frac{\alpha - \log^2 d - 1}{\lambda^2} \mright)^3 \leq
        - 4 \mleft( \frac{\partial^2 f(\lambda; \at)}{\partial \lambda^3} \mright)^3,
    \]
    for all $d \geq 1$,\footnote{the equality happens at $d = 1$.} where $\alpha = \beta \log^2 d + 2 \log d + 2$ and $\beta \geq 2$, hence the proof is concluded.
\end{proof}

\begin{restate}{lemma:convexity_concordance_step}[Properties of learning rate control step]
    For any $\at \in \bbR^d$, the rate control objective $f(\lambda; \at)$, defined in
    \[
        \hat{\lambda} \defeq \argmax_{\lambda \in (0, \eta]} \left\{f(\lambda; \at) \defeq (\alpha - 1) \log \lambda + \log \Big( {\sum_{\ind=1}^d e^{\lambda \at[\ind]}} \Big)  \right\},
    \]
    satisfies the following properties:
    \begin{itemize}
        \item Strong concavity: $f''(\lambda; \at) \leq -(\alpha - \log^2 d - 1)/\lambda^2$ for all $\lambda \in (0,\infty)$.
        \item Self-concordance: $(f'''(\lambda; \at))^2 \leq -4 f''(\lambda; \at)^{3}$,
    \end{itemize}
    where all derivatives are with respect to $\lambda$.
\end{restate}
\begin{proof}
    We show strong concavity in \Cref{lemma:convexity_step} and self-concordance in \Cref{lemma:self_concordance}.
\end{proof}

Next, we need to prove \Cref{theorem:stability_lambda}. To do so, we first prove and state the following lemma, which demonstrate that when the the maximum regret $\max_k \{\at\^t[k]\}$ accumulated on the actions is not too negative, the optimal solution to optimization problem \eqref{eq:opt_problem_lambda} is $\lambda\^t = \eta$.

\begin{lemma}\label{lemma:lambda_is_one}
    Given an arbitrary vector $\at\in \bbR^d$, and $\hat{\lambda}$ as the solution to
    \begin{align}
        \hat{\lambda} \defeq \argmax_{\lambda \in (0, \eta]} \left\{f(\lambda; \at) \defeq (\alpha - 1) \log \lambda + \log \Big( {\sum_{\ind=1}^d e^{\lambda \at[\ind]}} \Big)  \right\},
    \end{align}
    then as long as $\max_{r \in [d]} \{ \at[r]\} \geq - \beta \log^2 d$, we have $\hat{\lambda} = \eta$.
\end{lemma}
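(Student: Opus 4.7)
The plan is to show that the derivative $f'(\lambda;\at)$ is strictly positive on the whole interval $(0,\eta]$; combined with the uniqueness of the maximizer guaranteed by the strong concavity established in \Cref{lemma:convexity_concordance_step} (and \Cref{remark:unique_solution}), this forces $\hat\lambda$ to sit at the right endpoint $\eta$. All the work therefore reduces to a single scalar inequality.

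Concretely, I would write $f'(\lambda;\at)=\mathbb{E}_{K\sim p_\lambda}[\at[K]]+\tfrac{\alpha-1}{\lambda}$, where $p_\lambda[k]\propto e^{\lambda\at[k]}$ is the softmax distribution, and then lower bound the expectation in terms of $M\defeq\max_{r\in[d]}\at[r]$. The trick is to use the convexity of the LogSumExp $g(\lambda)=\log\sum_{k}e^{\lambda\at[k]}$, whose derivative is precisely $\mathbb{E}_{K\sim p_\lambda}[\at[K]]$. Convexity gives the secant bound $g'(\lambda)\ge (g(\lambda)-g(0))/\lambda=(g(\lambda)-\log d)/\lambda$, and \Cref{lemma:softmax_bound} provides $g(\lambda)\ge \lambda M$; chaining these yields
\[
\mathbb{E}_{K\sim p_\lambda}[\at[K]]\ \ge\ M-\frac{\log d}{\lambda}.
\]

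Substituting into the derivative, $f'(\lambda;\at)\ge M+\tfrac{\alpha-1-\log d}{\lambda}$. Under the hypothesis $M\ge-\beta\log^2 d$ and the paper's parameter choice $\alpha= 2+2\log d+\beta\log^2 d$, this becomes $f'(\lambda;\at)\ge -\beta\log^2 d+\tfrac{1+\log d+\beta\log^2 d}{\lambda}$, which is strictly positive for every $\lambda\le 1$ because the second summand already exceeds $\beta\log^2 d$ in that regime. Since $\eta\le 1/50\le 1$ in the algorithmic setting, $f'(\lambda;\at)>0$ throughout $(0,\eta]$, so $f$ is strictly increasing there and the constrained maximizer is $\hat\lambda=\eta$.

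The only mildly delicate point is the softmax expectation bound: because $\at$ can contain very negative entries, the crude bound $\mathbb{E}[\at[K]]\ge\min_k\at[k]$ is useless, and one must instead exploit the monotonicity $g'(\lambda)\ge (g(\lambda)-g(0))/\lambda$ implied by convexity of the LogSumExp. Once that observation is in hand, the rest is arithmetic with the definitions of $\alpha$ and $M$.
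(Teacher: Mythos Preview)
Your proposal is correct and arrives at the same core inequality as the paper, namely $f'(\lambda;\at)\ge M+\tfrac{\alpha-1-\log d}{\lambda}$ with $M=\max_r\at[r]$. The only stylistic difference is that the paper checks $f'(\eta;\at)\ge 0$ at the single endpoint $\eta$ and appeals to concavity/KKT, whereas you establish $f'(\lambda;\at)>0$ on the whole interval; and the paper obtains the softmax-expectation lower bound via the entropy identity $\sum_r \vx[r]\at[r]=\tfrac{1}{\lambda}(\ent{\vx}+\log\sum_k e^{\lambda\at[k]})$ rather than the convexity secant bound you use---but both routes produce exactly the same inequality $\mathbb{E}_{K\sim p_\lambda}[\at[K]]\ge M-\tfrac{\log d}{\lambda}$.
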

\begin{proof}
    By KKT conditions, concavity and uniqueness of the solution, it is obvious that whenever $f^\prime (\eta; \at) \geq 0$, then $\widehat{\lambda} = \eta$. Thus, with the same arguments as proof of \Cref{lemma:stability},
    \[
        f^\prime(\eta; \at) & = \frac{\alpha - 1}{\eta} + \frac{1}{\eta} \ent{\vx} + \frac{1}{\eta} \log \mleft( \sum_{\ind = 1}^{d} e^{\lambda \at[\ind]}\mright) \\
        & \geq \frac{\alpha - 1}{\eta} - \frac{\log d}{\eta} + \frac{1}{\eta} \max_{r \in [d]} \{ \at[r]\}                                     \\
        & \geq \frac{1}{\eta} \mleft( \beta \log^2 d + \max_{r \in [d]} \{ \at[r]\} \mright).
    \]
    Thus, whenever $\max_{r \in [d]} \{ \at[r]\} \geq - \beta \log^2 d$, then $\hat{\lambda} = \eta$.
\end{proof}

In sequel, to prove \Cref{theorem:stability_lambda} (sensitivity of learning rates on regrets), we state and prove the following lemma on stability of learning rates in multiplicative sense. Combining a good analytic guess $\lambda_0$ for the value of $\lambda$ with techniques for self-concordant function analysis, we establish that the intrinsic norm of the second-order ascent direction of $f$ is small at $\lambda_0$. This allows us to conclude that the solution $\lambda$ must be within a small radius from $\lambda_0$ in intrinsic norm. Furthermore, using the bound on $f''(\lambda_0;\cdot)$ given in \cref{lemma:convexity_concordance_step}, we can finally conclude proximity in the multiplicative sense.

\begin{lemma}(Multiplicative Stability)\label{lemma:stability}
    Given vectors $\at, \aprimet \in \bbR^d$, then $\widehat{\lambda}, \widehat{\lambda^\prime}$, the solutions to
    \begin{align}
        \widehat{\lambda} = \argmax_{\lambda \in (0, \eta]} \left\{f(\lambda; \at) \defeq (\alpha - 1) \log \lambda + \log \Big( {\sum_{\ind=1}^d e^{\lambda \at[\ind]}} \Big)  \right\},
    \end{align}
    and
    \begin{align}
        \widehat{\lambda^\prime} = \argmax_{\lambda \in (0, \eta]} \left\{f(\lambda; \aprimet) \defeq (\alpha - 1) \log \lambda + \log \Big( {\sum_{\ind=1}^d e^{\lambda \aprimet[\ind]}} \Big)  \right\}
    \end{align}
    respectively, are multiplicatively stable, i.e.,
    \[
        \frac{\beta - 5}{\beta + 3} \mleft(\frac{\min_{r \in d}{(- \aprimet[r]})}{\min_{r \in d}{(- \at[r]})} \mright) \leq \frac{\widehat{\lambda}}{\widehat{\lambda^\prime}} \leq \frac{\beta + 3}{\beta - 5} \mleft( \frac{\min_{r \in d}{(- \aprimet[r]})}{\min_{r \in d}{(- \at[r]})} \mright).
    \]
\end{lemma}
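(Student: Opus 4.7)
My plan is to follow the roadmap sketched right after Theorem~\ref{theorem:stability_lambda}: introduce an analytical guess for $\hat{\lambda}$ and for $\hat{\lambda'}$, show that each learning rate is multiplicatively close to its guess via self-concordant Newton analysis, and then combine the two one-sided bounds. The motivated guess is
\[
    \lambda_0 \defeq \frac{\alpha-1}{\min_{r \in [d]}(-\at[r])}, \qquad \lambda_0' \defeq \frac{\alpha-1}{\min_{r \in [d]}(-\aprimet[r])},
\]
with the observation that $\lambda_0/\lambda_0' = \min_r(-\aprimet[r])/\min_r(-\at[r])$, which is exactly the ratio appearing in the conclusion. The choice is motivated by the fact that if the LogSumExp term in $f(\lambda;\at)$ is replaced by its tight upper envelope $\lambda M + \log d$ (where $M\defeq\max_r\at[r]$), then first-order optimality of this approximation yields precisely $\lambda_0$.

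First I would bound the Newton-step intrinsic norm at $\lambda_0$. Writing $f'(\lambda_0;\at) = -M + E_{\lambda_0}[\at]$ with $E_{\lambda_0}$ the softmax expectation, the residual equals $E_{\lambda_0}[M-\at]\ge 0$; introducing the nonnegative gaps $g_r \defeq M-\at[r]$ and using that the softmax normalizer is $\ge 1$ together with the elementary inequality $x\,e^{-\lambda_0 x}\le 1/(e\lambda_0)$, I can bound $|f'(\lambda_0;\at)|$ by a multiple of $\log d/\lambda_0$ in the nontrivial regime $M < -\beta\log^2 d$ (outside this regime, Lemma~\ref{lemma:lambda_is_one} already gives $\hat\lambda=\eta$, and the same reasoning for $\aprimet$ handles the remaining boundary cases, which is why the constants $\beta\pm 5,\beta\pm 3$ are needed instead of tighter ones). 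Combining with the strong-concavity lower bound $|f''(\lambda_0;\at)|\ge (\alpha-\log^2 d - 1)/\lambda_0^2$ from Lemma~\ref{lemma:convexity_concordance_step}, a direct computation yields
\[
    \|n(\lambda_0)\|_{\lambda_0} \;=\; \frac{|f'(\lambda_0;\at)|}{\sqrt{|f''(\lambda_0;\at)|}} \;\le\; \frac{1}{4}
\]
for $\beta \ge 70$ and $\alpha = 2+2\log d + \beta\log^2 d$.

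Next I would invoke Theorem~\ref{theorem:newton_self} for the self-concordant function $-f(\cdot;\at)$: the unique minimizer $\hat\lambda$ satisfies $\|\hat\lambda-\lambda_0\|_{\lambda_0}\le 3\|n(\lambda_0)\|_{\lambda_0}$. Converting the intrinsic-norm bound back to an ordinary one using $\sqrt{|f''(\lambda_0;\at)|}\ge \sqrt{\alpha-\log^2 d - 1}/\lambda_0$, and then simplifying with $\alpha - \log^2 d - 1 \ge (\beta-1)\log^2 d$, gives the multiplicative estimate
\[
    \frac{|\hat\lambda - \lambda_0|}{\lambda_0} \;\le\; \frac{4}{\beta-1}, \qquad\text{hence}\qquad \frac{\hat\lambda}{\lambda_0} \;\in\; \left[1-\frac{4}{\beta-1},\; 1+\frac{4}{\beta-1}\right].
\]
Repeating the same four-step argument with $\aprimet$ in place of $\at$ yields the analogous conclusion for $\hat{\lambda'}/\lambda_0'$. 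The algebraic identity $(1+c)/(1-c) = (\beta+3)/(\beta-5)$ with $c = 4/(\beta-1)$ then lets me combine the two one-sided bounds as
\[
    \frac{\hat\lambda}{\hat{\lambda'}} \;=\; \frac{\hat\lambda/\lambda_0}{\hat{\lambda'}/\lambda_0'}\cdot \frac{\lambda_0}{\lambda_0'} \;\in\; \left[\frac{\beta-5}{\beta+3},\; \frac{\beta+3}{\beta-5}\right] \cdot \frac{\min_r(-\aprimet[r])}{\min_r(-\at[r])},
\]
which is exactly the claimed multiplicative stability.

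The hardest step is getting the residual bound $|f'(\lambda_0;\at)|$ tight enough to guarantee $\|n(\lambda_0)\|_{\lambda_0}\le 1/4$ and to yield the specific constant $4/(\beta-1)$. A naive union-over-coordinates estimate gives only $|f'(\lambda_0;\at)|\le d/(e\lambda_0)$, which is too loose for large $d$. The refinement is to exploit that in the nontrivial regime $M<-\beta\log^2 d$ we automatically have $\lambda_0 \le (\alpha-1)/(\beta\log^2 d) = O(1)$, and to split the gaps $g_r$ into a window of size $O(\log d/\lambda_0)$ (whose aggregate contribution is also $O(\log d/\lambda_0)$) and the tail (which is exponentially suppressed). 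Carrying out this splitting carefully, while tracking all universal constants so that the threshold $\beta\ge 70$ actually delivers the specific factor $(\beta\pm 3)/(\beta\mp 5)$, is the delicate quantitative core of the proof.
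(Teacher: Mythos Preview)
Your overall architecture matches the paper's proof exactly: the same analytical guess $\lambda_0=(\alpha-1)/\min_r(-\at[r])$, the same appeal to self-concordance via Theorem~\ref{theorem:newton_self}, the same conversion from intrinsic to multiplicative distance using the curvature lower bound of Lemma~\ref{lemma:convexity_step}, and the same final combination yielding the factors $(\beta\pm3)/(\beta\mp5)$.

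Where you diverge is in bounding $|f'(\lambda_0;\at)|$, and here you are making the argument much harder than it needs to be. The paper does not use any gap-splitting or the inequality $xe^{-\lambda_0 x}\le 1/(e\lambda_0)$; instead it rewrites the derivative via the softmax change of variables as
\[
    f'(\lambda_0;\at)=\frac{\alpha-1}{\lambda_0}+\frac{\ent{\vx}}{\lambda_0}+\frac{1}{\lambda_0}\log\!\Big(\sum_{k}e^{\lambda_0\at[k]}\Big),
\]
and then applies two trivial estimates: $-\log d\le \ent{\vx}\le 0$ and the LogSumExp sandwich of Lemma~\ref{lemma:softmax_bound}. Since $(\alpha-1)/\lambda_0=-M$ by construction, these immediately give $|f'(\lambda_0;\at)|\le \log d/\lambda_0$, which in turn yields $\|n(\lambda_0)\|_{\lambda_0}^2\le \log^2 d/(\alpha-\log^2 d-1)\le 1/(\beta-1)$. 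So what you flag as ``the delicate quantitative core'' is in fact a two-line computation once the entropy representation is used; your proposed window/tail decomposition would work but is unnecessary.

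A minor point: the regime distinction $M<-\beta\log^2 d$ and the invocation of Lemma~\ref{lemma:lambda_is_one} do not belong in the proof of this lemma. The Multiplicative Stability lemma is proved unconditionally (assuming only $\min_r(-\at[r])>0$ so that $\lambda_0$ makes sense); the case analysis and the boundary behavior $\hat\lambda=\eta$ enter only later, in the proof of Theorem~\ref{theorem:stability_lambda}, where this lemma is combined with Lemma~\ref{lemma:lambda_is_one}.
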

\begin{proof}
    We use suboptimality bound entailed by Newton update for self-concordance functions to prove stability. Recall that from proof of \Cref{lemma:convexity_step},
    \[
        \frac{\partial f(\lambda; \at)}{\partial \lambda}      & = \frac{\sum_{\ind = 1}^d \at[\ind] e^{\lambda \at[\ind]}}{\sum_{\ind = 1}^d e^{\lambda \at[\ind]}} + \frac{\alpha - 1}{\lambda}.                                                                                                                            \\
        \frac{\partial^2 f(\lambda; \at)}{\partial \lambda^2} & = \frac{\sum_{\ind = 1}^d \at[\ind]^2 e^{\lambda \at[\ind]} }{\sum_{\ind = 1}^d e^{\lambda \at[\ind]}} - \mleft( \frac{\sum_{\ind = 1}^d \at[\ind] e^{\lambda \at[\ind]}}{\sum_{\ind = 1}^d e^{\lambda \at[\ind]}}\mright)^2 - \frac{\alpha - 1}{\lambda^2}.
    \]
    Let us choose the primary guess $\lambda_0 = \dfrac{\alpha - 1}{\min_{r \in [d]}{(- \at[r]})}$. By change of variables, we know that
    \[
        \frac{\partial f}{\partial \lambda}(\lambda_0; \at) = \frac{\alpha - 1}{\lambda_0} + \frac{1}{\lambda_0} \ent{\vx} + \frac{1}{\lambda_0} \log \mleft( \sum_{\ind = 1}^{d} e^{\lambda_0 \at[\ind]}\mright),
    \]
    where $\ent{\vx}$ is the negative entropy of the vector $\vx [r] = \dfrac{\exp\{\lambda_0 \at[r]\}}{\sum_{\ind = 1}^{d} \exp\{\lambda_0 \at[\ind]\}}$. Therefore, $0 \geq H(\vx) \geq - \log d$.
    On the other hand, by \Cref{lemma:softmax_bound},
    \[
        \max_{r \in [d]} \{\at[r]\} \leq \frac{1}{\lambda_0} \log \mleft( \sum_{\ind = 1}^{d} e^{\lambda_0 \at[\ind]}\mright) \leq  \max_{r \in [d]} \{\at[r]\} + \frac{\log d}{\lambda_0}.
    \]
    Hence,
    \[
        - \frac{\log d}{\lambda_0} = \min_{r \in [d]} \{-\at[r]\} +  \max_{r \in [d]} \{\at[r]\}- \frac{\log d}{\lambda_0} \leq f^\prime(\lambda_0, \at) \leq  \min_{r \in [d]} \{-\at[r]\} +  \max_{r \in [d]} \{\at[r]\} + \frac{\log d}{\lambda_0} = \frac{\log d}{\lambda_0}.
    \]
    On the other hand, from \Cref{lemma:convexity_step} we know that
    \[
        | f^{\dprime}(\lambda_0; \at) | \geq \frac{1}{\lambda_0^2} \mleft( \alpha - \log^2 d  - 1 \mright).
    \]
    Therefore, the local norm of Newton step $n(\lambda_0)$ is
    \[
        \| n(\lambda_0) \|_{f^{\dprime}(\lambda_0; \at)}^2 & = \frac{f^\prime(\lambda_0, \at)^2}{|f^{\dprime}(\lambda_0, \at) |} \\
        & \leq \frac{\log^2 d}{(\beta - 1) \log^2 d + 2 \log d + 1}           \\
        & \leq \frac{1}{\beta - 1},
    \]
    controlled by the hyperparameter $\beta \geq 17$.\footnote{\textrm{recall that $\alpha = \beta \log^2 d + 2 \log d + 2$.}} In turn, by \Cref{theorem:newton_self},
    \[
        & \| \hat{\lambda} - \lambda_0 \|_{f^{\dprime}(\lambda_0; \at)} \leq \frac{3}{\sqrt{\beta - 1}}                                       \\
        \Rightarrow \quad & | \hat{\lambda} - \lambda_0 | \leq \frac{3}{\sqrt{(\beta - 1) |f^{\dprime} (\lambda_0; \at) |}}  \leq \frac{4 \lambda_0}{\beta - 1} \\
        \Rightarrow \quad & \mleft| \frac{\hat{\lambda}}{\lambda_0} - 1 \mright| \leq \frac{4}{\beta - 1} \numberthis{eq:stab_lambda},
    \]
    for all $d \geq 2$.
    In a similar manner, we choose the primary guess $\lambda^\prime_0 = \dfrac{\alpha - 1}{\min_{r \in d}{\{- \aprimet[r]}\}}$ and we infer,
    \[
        \mleft| \frac{\hat{\lambda^\prime}}{\lambda^\prime_0} - 1 \mright| \leq \frac{4}{\beta - 1}. \numberthis{eq:stab_lambda_prime}
    \]
    Combining \Cref{eq:stab_lambda,eq:stab_lambda_prime},
    \[
        & (1 - \frac{4}{\beta - 1}) \lambda_0 \leq \hat{\lambda} \leq (1 + \frac{4}{\beta - 1}) \lambda_0 \numberthis{eq:one_lambda}                                                   \\
        & (1 - \frac{4}{\beta - 1}) \lambda^\prime_0 \leq \hat{\lambda^\prime} \leq (1 + \frac{4}{\beta - 1}) \lambda^\prime_0                                                     \\
        \Rightarrow & \frac{\beta - 5}{\beta + 3} \mleft(\frac{\min_{r \in d}{(- \aprimet[r]})}{\min_{r \in d}{(- \at[r]})} \mright) = \frac{\beta - 5}{\beta + 3} . \frac{\lambda_0}{\lambda^\prime_0}  \leq \frac{\hat{\lambda}}{\hat{\lambda^\prime}} \leq \frac{\beta + 3}{\beta - 5}.\frac{\lambda_0}{\lambda^\prime_0} = \frac{\beta + 3}{\beta - 5} \mleft( \frac{\min_{r \in d}{(- \aprimet[r]})}{\min_{r \in d}{(- \at[r]})} \mright).
    \]
\end{proof}

Now, we are ready to state and prove \Cref{theorem:stability_lambda}. To this end, we consider three cases based on whether $\max_{r \in [d]} \{ \at[r]\}$ and $\max_{r \in [d]} \{ \aprimet[r]\}$ are higher or lower than $- \beta \log^2 d$. In the regimes, where \Cref{lemma:lambda_is_one} is satisfied the proof is immediate. For other cases, we use careful analysis with the help of \Cref{lemma:stability}.

\begin{restate}{theorem:stability_lambda} (Sensitivity of learning rates on regrets)
    There exists a universal constant $\beta$,\footnote{For concrete values, choosing any $\beta \geq 70$ suffices.} such that for $\alpha \ge 2 + 2 \log d + \beta \log^2 d$, the following property holds.
    Let $\at, \aprimet \in \bbR^d$ be such that $\| \at - \aprimet \|_{\infty} \leq 2$, and let $\hat{\lambda}, \hat{\lambda'}$ the corresponding learning rates, that is,
    \[
        \hat{\lambda} = \argmax_{t \in (0, \eta]} f(t; \at), \qquad
        \hat{\lambda'} = \argmax_{t \in (0,\eta]} f(t; \at').
    \]
    Then, $\hat{\lambda}$ and $\hat{\lambda'}$ are multiplicatively stable; specifically,
    \[
        \frac{7}{10} \leq  \frac{\hat{\lambda}}{\hat{\lambda^\prime}} \leq \frac{7}{5},
    \]
\end{restate}
\begin{proof}
    Here we show the lemma for $\| \at - \aprimet \|_{\infty} \leq 2$. The extension to general $\| \at - \aprimet \|_{\infty} \leq o(1)$ is easy to infer by choosing $\beta$ large enough. We have three cases depending on the size of $\max_{r \in [d]} \{ \at[r]\}$.

    \begin{enumerate}
        \item If $\max_{r \in [d]} \{ \at[r]\} \geq - \beta \log^2 d$ and $\max_{r \in [d]} \{ \aprimet[r]\} \geq - \beta \log^2 d$, then by \Cref{lemma:lambda_is_one}, we conclude that $\hat{\lambda} = \hat{\lambda^\prime} = \eta$.

        \item If $\max_{r \in [d]} \{ \at[r]\} < - \beta \log^2 d$ and $\max_{r \in [d]} \{ \aprimet[r]\} < - \beta \log^2 d$, then by \Cref{lemma:stability} and $\| \at - \aprimet \|_{\infty} \leq 2$,
              \[
                  \frac{4}{5} \frac{\beta - 5}{\beta + 3} \leq \frac{\beta - 5}{\beta + 3} \mleft( \frac{\beta \log^2 d}{2 + \beta \log^2 d} \mright) \leq \frac{\hat{\lambda}}{\hat{\lambda^\prime}} \leq \frac{\beta + 3}{\beta - 5} \mleft( \frac{2 + \beta \log^2 d}{\beta \log^2 d}\mright) \leq \frac{6}{5} \frac{\beta + 3}{\beta - 5},
              \]
              for $\beta \geq 20$ and $d \geq 2$.

        \item And otherwise, if without loss of generality assume that $\max_{r \in [d]} \{ \at[r]\} \geq - \beta \log^2 d$ and  $\max_{r \in [d]} \{ \aprimet[r]\} < - \beta \log^2 d$. Then, $\hat{\lambda} = 1$. On the other hand, based on \Cref{eq:one_lambda} in the proof of \Cref{lemma:stability},
              \[
                  \mleft(1 - \frac{4}{\beta - 1}\mright) \lambda^\prime_0 \leq \hat{\lambda^\prime} \leq \mleft(1 + \frac{4}{\beta - 1} \mright) \lambda^\prime_0.
              \]
              Hence,
              \[
                  \frac{\alpha - 1}{2 + \beta \log^2 d} \frac{\beta - 5}{\beta - 1}
                  \leq \frac{\beta - 5}{\beta - 1} \dfrac{\alpha - 1}{\min_{r \in d}\{- \alpha^{\prime}[r]\}}
                  \leq \frac{\hat{\lambda^\prime}}{\hat{\lambda}}
                  \leq \frac{\beta + 3}{\beta - 1} \dfrac{\alpha - 1}{\min_{r \in d}{\{- \alpha^{\prime}[r]\}}}
                  \leq \dfrac{\alpha - 1}{\beta \log^2 d} \frac{\beta + 3}{\beta - 1}
              \]
              since $\| \at - \aprimet \|_{\infty} \leq 2$ and thus $\min_{r \in d}{\{- \aprimet[r]}\} \leq 2 + \beta \log^2 d$. Consequently,
              \[
                  \frac{4}{5} \frac{\beta - 5}{\beta - 1} \leq \frac{\alpha - 1}{2 + \beta \log^2 d} \frac{\beta - 5}{\beta - 1}
                  \leq \frac{\hat{\lambda^\prime}}{\hat{\lambda}}
                  \leq \dfrac{\alpha - 1}{\beta \log^2 d} \frac{\beta + 3}{\beta - 1}
                  \leq \frac{6}{5} \frac{\beta + 3}{\beta - 1}
              \]
              for $\beta \geq 50$ and $d\geq 2$,
    \end{enumerate}
    Putting all cases together, for $\beta \geq 50$,
    \[
        \frac{4}{5} \frac{\beta - 5}{\beta + 3} \leq \frac{\hat{\lambda^\prime}}{\hat{\lambda}}
        \leq \frac{6}{5} \frac{\beta + 3}{\beta - 5}.
    \]
    Now, by choosing any $\beta \geq 70$,
    \[
        \frac{7}{10} = \frac{4}{5} \times \frac{7}{8} \leq \frac{\hat{\lambda^\prime}}{\hat{\lambda}}
        \leq \frac{6}{5} \times \frac{8}{7} < \frac{7}{5} \quad \Rightarrow \quad \frac{\hat{\lambda^\prime}}{\hat{\lambda}}  \in \mleft[\frac{21}{20} - \frac{7}{20}, \frac{21}{20} + \frac{7}{20}\mright],
    \]
    and the proof is completed.
\end{proof}

\subsubsection{Equivalent Viewpoints of \ours}
\label{sec:equivalent_viewpoints}

In this section, we discuss equivalent formulations of \ours. First, to conceptualize the idea of dynamic learning rate control, we show that \ours (as presented in \Cref{algo:ours_lambdaview}) is a special case of Formulation \ref{eq:DRLC_FTRL} with the choice of regularizers $\rho(\lambda) = (\alpha - 1) \log \lambda$ and $\phi(\vx) = \sum_{\ind=1}^d \vx[\ind] \log \vx[\ind]$, making it a special, easily computable instance. Secondly, we show that, for the purpose of regret analysis, \ours is equivalent to \OFTRL with the regularizer $\psi$ on the space $(0, 1] \Delta^n$ defined in \Cref{eq:regularizer}. This equivalence allows us to leverage standard techniques to analyze the regret, given the strong spectral properties of the regularizer $\psi$, which we will establish later.

\begin{restate}{theorem:equivalence_viewpoints}
    \Cref{algo:ours_lambdaview} (\ours) can alternatively be viewed as \oursgeneral (Formulation \ref{eq:DRLC_FTRL}) with the choice of regularizers $\rho(\lambda) = (\alpha - 1) \log \lambda$ and $\phi(\vx) = \sum_{\ind=1}^d \vx[\ind] \log \vx[\ind]$, i.e.,
    \[
        \vstack{\lambda\^t}{\vx\^t} \gets \argmax_{\lambda \in (0, \eta], \vx \in \Delta^d}\mleft\{ {\lambda} \mleft\langle  \at\^t, \vx \mright\rangle + (\alpha - 1) \log \lambda - \sum_{\ind=1}^d \vx[\ind] \log \vx[\ind]\mright\}.
    \]
    or additionally as \OFTRL with the regularizer $\psi$ defined in \Cref{eq:regularizer}, i.e.,
    \[
        \vy\^t \gets \argmax_{\vy \in (0,1]\Delta^d }\mleft\{ {\eta} \mleft\langle  \at\^t, \vy \mright\rangle + \alpha \log \left(\sum_{\ind=1}^d \vy[\ind]\right) - \frac{1}{\sum_{\ind=1}^d \vy[\ind]} \sum_{\ind=1}^d \vy[\ind] \log \vy[\ind]\mright\},
    \]
    In other words, the three perspectives (\Cref{algo:ours_lambdaview}, Formulations \ref{eq:drlc_ftrl_special} and \ref{eq:oftrl_psi}) of \ours are equivalent and result in the same learning dynamics.
\end{restate}

\begin{proof}
    The proof follows by combining \Cref{theorem:equivalence_opt,lemma:KKT,corollary:lambda_softmax}.
\end{proof}

\begin{proposition}
\label{theorem:equivalence_opt}
    The optimization step of the following \OFTRL,
    \[
        \vy\^t \gets \argmax_{\vy \in (0,1]\Delta^d }\mleft\{ {\eta} \mleft\langle  {\Ut}\^t +  {\vec\ut}\^{t-1}, \vy \mright\rangle + \alpha \log \Big(\sum_{\ind=1}^d \vy[\ind]\Big) - \frac{1}{\sum_{\ind=1}^d \vy[\ind]} \sum_{\ind=1}^d \vy[\ind] \log \vy[\ind]\mright\} \numberthis{algo:ours_yview}
    \]
    is equivalent to optimization step of the following special instance of Formulation \ref{eq:DRLC_FTRL},
    \[
        \vstack{\lambda\^t}{\vx\^t} \gets \argmax_{\lambda  \in (0, 1], \vx \in \Delta^d}\mleft\{ {\eta} {\lambda} \mleft\langle  {\Ut}\^t +  {\vec\ut}\^{t-1}, \vx \mright\rangle + (\alpha - 1) \log \lambda - \sum_{\ind=1}^d \vx[\ind] \log \vx[\ind]\mright\} \numberthis{algo:ours_xview}
    \]
    under the change of variables $\lambda\^t = \sum_{\ind = 1}^d \vy[\ind]$ and  $\vx\^t[r] = \frac{\vy[r]}{\sum_{\ind = 1}^d \vy[\ind]}$.
\end{proposition}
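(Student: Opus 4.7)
The plan is to exhibit an explicit bijection between the feasible regions of the two problems and check that it identifies the two objectives, so the $\argmax$ sets correspond.

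First, I would parametrize the lifted simplex. Any $\vy \in (0,1]\Delta^d$ (i.e.\ $\vy \in \mathbb{R}^d_{\geq 0}$ with $\sum_{\ind} \vy[\ind] \in (0,1]$) is determined by the pair $(\lambda, \vx) \in (0,1] \times \Delta^d$ via $\lambda \defeq \sum_{\ind=1}^d \vy[\ind]$ and $\vx[r] \defeq \vy[r]/\lambda$, and conversely $\vy = \lambda \vx$ recovers an element of $(0,1]\Delta^d$ from such a pair. This is a smooth bijection, so it is enough to show that the two objectives coincide pointwise under this correspondence.

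Second, I would plug $\vy = \lambda \vx$ into the three nontrivial pieces of the \OFTRL objective in \eqref{algo:ours_yview}. The linear term gives $\eta \langle \Ut\^t + \ut\^{t-1}, \vy \rangle = \eta \lambda \langle \Ut\^t + \ut\^{t-1}, \vx \rangle$ by linearity. The log-normalization term gives $\alpha \log(\sum_\ind \vy[\ind]) = \alpha \log \lambda$. The only nonroutine piece is the normalized entropy; expanding $\vy[\ind] = \lambda \vx[\ind]$ and using $\sum_\ind \vx[\ind] = 1$ yields
\[
\frac{1}{\sum_\ind \vy[\ind]} \sum_{\ind=1}^d \vy[\ind] \log \vy[\ind] = \sum_{\ind=1}^d \vx[\ind]\bigl(\log \lambda + \log \vx[\ind]\bigr) = \log \lambda + \sum_{\ind=1}^d \vx[\ind]\log\vx[\ind].
\]
The $\log\lambda$ produced here combines with the $\alpha \log \lambda$ above to give exactly the $(\alpha-1)\log\lambda$ appearing in \eqref{algo:ours_xview}, and what remains is the negative entropy $-\sum_\ind \vx[\ind]\log\vx[\ind]$. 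Hence the two objectives agree.

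Third, since the change of variables is a bijection of the feasible sets and the objectives are equal pointwise, the two $\argmax$ problems have the same solution set under the identification $(\lambda\^t, \vx\^t) \leftrightarrow \vy\^t = \lambda\^t \vx\^t$. There is no real obstacle here beyond bookkeeping; the only subtlety worth flagging is ensuring that the boundary $\lambda = 0$ (equivalently $\vy = 0$) is correctly excluded, which is immediate because both objectives tend to $-\infty$ as $\lambda \downarrow 0$ thanks to the $(\alpha-1)\log\lambda$ and $\alpha\log(\sum_\ind \vy[\ind])$ barrier terms, so the restriction to $\lambda \in (0,1]$ versus $\vy \in (0,1]\Delta^d$ is consistent.
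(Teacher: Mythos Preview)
Your proposal is correct and takes essentially the same approach as the paper: substitute $\vy = \lambda\vx$ into the objective of \eqref{algo:ours_yview}, expand the three terms using $\sum_\ind \vx[\ind]=1$, and observe that the $\alpha\log\lambda$ from the normalization term and the $-\log\lambda$ produced by the normalized entropy combine into $(\alpha-1)\log\lambda$. The paper's proof is exactly this calculation, and your additional remarks on the bijection of feasible sets and the boundary $\lambda\downarrow 0$ are sound but not strictly needed.
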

\begin{proof}
    By choice of $\vy = \lambda \vx$, for any feasible point in the first optimization problem,
    \[
        & \eta \left\langle  \Ut\^{t} + \ut\^{t-1}, \lambda \vx \right\rangle + \alpha \log \Big(\sum_{\ind=1}^d \lambda \vx[\ind]\Big) - \frac{1}{\sum_{\ind=1}^d \lambda \vx[\ind]} \sum_{\ind=1}^d \lambda \vx[\ind] \log \left(\lambda \vx[\ind]\right) \\
        & \quad = \lambda \left\langle  \Ut\^{t} + \ut\^{t-1}, \vx \right\rangle + \alpha \log (\lambda) + \alpha \log \Big(\sum_{\ind=1}^d \vx[\ind]\Big) - \frac{1}{\sum_{\ind=1}^d \vx[\ind]} \sum_{\ind=1}^d \vx[\ind] (\log \vx[\ind] + \log \lambda)  \\
        & \quad = {\eta} {\lambda} \mleft\langle  {\Ut}\^t +  {\vec\ut}\^{t-1}, \vx \mright\rangle + (\alpha - 1) \log \lambda - \sum_{\ind=1}^d \vx[\ind] \log \vx[\ind],
    \]
    where the last line follows since $\sum_{\ind=1}^d \vx[\ind] = 1$. The other direction holds similarly.
\end{proof}
\begin{lemma} \label{lemma:KKT}
    The optimal solution $(\lambda\^t, \vx\^t)$ to the following special instance of \oursgeneral,
    \[
        \vstack{\lambda\^t}{\vx\^t} \gets \argmax_{\lambda  \in (0, \eta], \vx \in \Delta^d}\mleft\{{\lambda} \mleft\langle  {\Ut}\^t +  {\vec\ut}\^{t-1}, \vx \mright\rangle + (\alpha - 1) \log \lambda - \sum_{\ind=1}^d \vx[\ind] \log \vx[\ind]\mright\} \numberthis{eq:opt_2}
    \]
    satisfies,\footnote{Note that this formulation is the same as \Cref{algo:ours_xview}. It is entailed by simply scaling $\lambda$ by $\eta$.}
    \[
        \vx\^t[r] = \textup{softmax}(\lambda\^t \at\^t)[r] = \frac{\exp\{\lambda\^t \at\^t[r]\}}{\sum_{\ind = 1}^{d} \exp\{\lambda\^t \at\^t[\ind]\}} \numberthis{eq:x_as_lambda},
    \]
    for every coordinate $r \in [d]$, where $\at\^t[r] = {\Ut}\^t[r] + {\vec\ut}\^{t-1}[r]$.
\end{lemma}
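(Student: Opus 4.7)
The plan is a standard block-coordinate (partial minimization) argument: fix $\lambda \in (0, \eta]$, optimize the joint objective over $\vx \in \Delta^d$ to obtain a closed-form partial optimizer $\vx^*(\lambda)$, and then substitute back to show that the resulting reduced problem over $\lambda$ coincides with the learning rate control problem \eqref{eq:opt_problem_lambda}. Since the joint objective is concave in $\vx$ for each fixed $\lambda$ (entropy is strictly convex, and the linear term plus the constant $(\alpha-1)\log\lambda$ does not affect concavity in $\vx$), this block decomposition is lossless: any maximizer $(\lambda\^t, \vx\^t)$ must satisfy $\vx\^t = \vx^*(\lambda\^t)$.

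First I would solve the inner problem $\max_{\vx \in \Delta^d}\{\lambda\langle \at\^t, \vx\rangle - \sum_{\ind=1}^d \vx[\ind]\log\vx[\ind]\}$, where I used $\at\^t = \Ut\^t + \ut\^{t-1}$. Writing the Lagrangian with multiplier $\mu \in \bbR$ for the constraint $\sum_\ind \vx[\ind] = 1$ (the nonnegativity constraints are inactive since $-\vx\log\vx$ has $+\infty$ derivative at $0$), the stationarity condition is $\lambda \at\^t[r] - \log \vx[r] - 1 - \mu = 0$ for every $r \in [d]$, which yields $\vx^*(\lambda)[r] = \exp\{\lambda \at\^t[r] - 1 - \mu\}$. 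Enforcing $\sum_\ind \vx^*(\lambda)[\ind] = 1$ fixes $\mu$ and gives the softmax form
\[
    \vx^*(\lambda)[r] = \frac{\exp\{\lambda \at\^t[r]\}}{\sum_{\ind=1}^d \exp\{\lambda \at\^t[\ind]\}}, \qquad r \in [d].
\]

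Next I would substitute this into the joint objective. Letting $Z(\lambda) \defeq \sum_{\ind=1}^d \exp\{\lambda \at\^t[\ind]\}$, each coordinate satisfies $\lambda \at\^t[r] = \log \vx^*(\lambda)[r] + \log Z(\lambda)$, so
\[
    \lambda \langle \at\^t, \vx^*(\lambda)\rangle - \sum_{r=1}^d \vx^*(\lambda)[r] \log \vx^*(\lambda)[r] = \sum_{r=1}^d \vx^*(\lambda)[r] \log Z(\lambda) = \log Z(\lambda),
\]
using $\sum_r \vx^*(\lambda)[r] = 1$. Hence the reduced objective in $\lambda$ is exactly $(\alpha - 1)\log \lambda + \log Z(\lambda) = f(\lambda; \at\^t)$ from \eqref{eq:opt_problem_lambda}. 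Therefore the maximizing $\lambda\^t$ of \eqref{eq:opt_2} coincides with the solution of the learning rate control problem \eqref{eq:opt_problem_lambda}, and the corresponding $\vx\^t = \vx^*(\lambda\^t)$ is the claimed softmax, giving \eqref{eq:x_as_lambda}.

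I do not foresee a real obstacle here beyond checking that the partial maximization commutes with the overall maximization; this follows from the fact that for each fixed $\lambda$ the inner problem admits a unique maximizer $\vx^*(\lambda)$ and the resulting value function of $\lambda$ is well-defined, so $\max_{\lambda, \vx} = \max_\lambda \max_\vx$. A minor bookkeeping point to verify is that the scaling of $\eta$ between the two equivalent formulations in \Cref{theorem:equivalence_viewpoints} (i.e., whether $\lambda$ ranges over $(0,\eta]$ or $(0,1]$ with $\lambda$ rescaled by $\eta$) is handled consistently, which follows immediately from the change of variable already carried out in \Cref{theorem:equivalence_opt}.
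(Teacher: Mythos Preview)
Your proposal is correct and follows essentially the same approach as the paper: both derive the softmax form by writing the first-order (KKT/Lagrangian) conditions for the $\vx$-block at the optimum, then normalize using $\vx \in \Delta^d$. Your write-up additionally carries out the substitution step to identify the reduced objective in $\lambda$ with $f(\lambda;\at\^t)$, which in the paper is separated out as the subsequent \Cref{corollary:lambda_softmax}; this is not a different method, just a slightly larger scope than the lemma itself requires.
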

\begin{proof}
    By KKT conditions,
    \[
        \lambda\^t \at\^t - \big[\log x\^t[1], \log x\^t[2], \cdots, \log x\^t[d]\big]^\top = \mu \in \bbR.
    \]
    Therefore, $\vx\^t[r] \propto \exp\{\at\^t[r]\}$. Now, after renormalization since $\vx \in \Delta^d$,
    \[
        \vx\^t[r] = \frac{\exp\{\lambda\^t \at\^t[r]\}}{\sum_{\ind = 1}^{d} \exp\{\lambda\^t \at\^t[\ind]\}}.
    \]
\end{proof}
\begin{corollary} \label{corollary:lambda_softmax}
    We know that the optimal solution $\lambda\^t$ can be written as,
    \[
        \lambda\^t = \argmax_{\lambda \in (0, \eta]} \left\{ \log \Big( {\sum_{\ind=1}^d e^{\lambda \at\^t[\ind]}} \Big) + (\alpha - 1) \log \lambda \right\}, \numberthis{eq:opt_problem_lambda_temp2}
    \]
    where $\at\^t[r] = {\Ut}\^t[r] + {\vec\ut}\^{t-1}[r]$.
\end{corollary}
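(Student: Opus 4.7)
The plan is to substitute the closed-form expression for $\vx\^t$ provided by Lemma~\ref{lemma:KKT} into the joint objective \eqref{eq:opt_2} and verify that what remains is precisely the one-dimensional objective in \eqref{eq:opt_problem_lambda_temp2}. Since the joint maximum over $(\lambda,\vx)$ can be computed by first maximizing over $\vx$ for each fixed $\lambda$ and then maximizing over $\lambda$, and since Lemma~\ref{lemma:KKT} identifies the unique inner maximizer as the softmax $\vx(\lambda)[r] = e^{\lambda \at\^t[r]}/\sum_{\ind} e^{\lambda \at\^t[\ind]}$, it suffices to evaluate the objective at this $\vx(\lambda)$ and show the result matches \eqref{eq:opt_problem_lambda_temp2}.

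Concretely, I would plug $\vx(\lambda)$ into the two $\vx$-dependent terms and simplify. Letting $Z(\lambda) \defeq \sum_{\ind=1}^d e^{\lambda \at\^t[\ind]}$, a direct calculation gives $\lambda\langle \at\^t, \vx(\lambda)\rangle = \lambda \sum_{r} \at\^t[r]\, e^{\lambda \at\^t[r]}/Z(\lambda)$ and
\[
    -\sum_{r=1}^d \vx(\lambda)[r] \log \vx(\lambda)[r] = -\sum_{r=1}^d \frac{e^{\lambda \at\^t[r]}}{Z(\lambda)}\bigl(\lambda \at\^t[r] - \log Z(\lambda)\bigr) = -\lambda\langle \at\^t, \vx(\lambda)\rangle + \log Z(\lambda),
\]
so the two terms telescope to $\log Z(\lambda) = \log\bigl(\sum_{\ind=1}^d e^{\lambda \at\^t[\ind]}\bigr)$. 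Adding back the $\lambda$-only regularizer $(\alpha-1)\log\lambda$ recovers exactly the objective of \eqref{eq:opt_problem_lambda_temp2}, and the feasible set $\lambda \in (0,\eta]$ is unchanged by the partial optimization, completing the identification.

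There is essentially no obstacle here: the result is the standard consequence of the fact that the Fenchel conjugate of the negative entropy over the simplex is the log-sum-exp function. The only care needed is to note that Lemma~\ref{lemma:KKT} guarantees a unique inner maximizer for each $\lambda > 0$, so the profile function $\lambda \mapsto \max_{\vx \in \Delta^d}\{\lambda\langle \at\^t,\vx\rangle - \sum_r \vx[r]\log\vx[r]\}$ is well-defined and equals $\log Z(\lambda)$, justifying the reduction of the joint maximization to the scalar problem.
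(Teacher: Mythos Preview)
Your proposal is correct and follows essentially the same approach as the paper: both substitute the softmax expression from Lemma~\ref{lemma:KKT} into the joint objective~\eqref{eq:opt_2} and simplify, observing that the linear term $\lambda\langle \at\^t,\vx(\lambda)\rangle$ cancels against the corresponding part of the entropy term to leave $\log Z(\lambda)$. Your presentation is slightly more compact (introducing $Z(\lambda)$ and noting the Fenchel-conjugate interpretation), but the argument is the same.
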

\begin{proof}
    Given \Cref{lemma:KKT}, plugging \Cref{eq:x_as_lambda} into \Cref{eq:opt_2} entails,
    \[
        & \argmax_{\lambda \in [0, \eta]}                                                                                          \\
        & \quad  \left\{ \lambda \sum_{r=1}^d \at\^t[r] \frac{e^{\lambda \at\^t[r]}}{\sum_{\ind=1}^d e^{\lambda \at\^t[\ind]}} + (\alpha - 1) \log \lambda - \sum_{r=1}^d \frac{e^{\lambda \at\^t[r]}}{\sum_{r=1}^d e^{\lambda \at\^t[\ind]}} \log \Big( \frac{e^{\lambda \at\^t[r]}}{\sum_{\ind=1}^d e^{\lambda \at\^t[\ind]}} \Big) \right\}                                                                               \\
        & = \argmax_{\lambda \in [0, \eta]}                                          \\
        & \quad  \left\{ \lambda \sum_{r=1}^d \at\^t[r] \frac{e^{\lambda \at\^t[r]}}{\sum_{\ind=1}^d e^{\lambda \at\^t[\ind]}} + (\alpha - 1) \log \lambda - \lambda \sum_{r=1}^d \at\^t[r] \frac{e^{\lambda \at\^t[r]}}{\sum_{\ind=1}^d e^{\lambda \at\^t[\ind]}} + \frac{ \sum_{r=1}^d  e^{\lambda \at\^t[r]}}{\sum_{\ind=1}^d e^{\lambda \at\^t[\ind]}} \log \Big( {\sum_{\ind=1}^d e^{\lambda \at\^t[\ind]}} \Big) \right\} \\
        & = \argmax_{\lambda \in [0, \eta]} \left\{ \log \Big( {\sum_{\ind=1}^d e^{\lambda \at\^t[\ind]}} \Big) + (\alpha - 1) \log \lambda \right\}.                                                                    \\
    \]
\end{proof}

\subsubsection{Strong Spectral Properties of $\psi$} \label{sec:spectral_psi}
In this section, we prove strong convexity of $\psi$ and hence we demonstrate that Bregman divergence of $\psi$ is well-defined.

\begin{restate}{theorem:regularizer_convex}
    The regularizer $\psi (\vy)$ is strongly convex and furthermore,
    \[
        \nabla^2 \psi(\vy) \succeq \frac{1}{2 \eta} \begin{pmatrix}
            \frac{1}{\vy[1] \cdot \sum_{\ind=1}^d \vy[\ind]} & 0                                                & \cdots & 0                                                \\
            0                                                & \frac{1}{\vy[2] \cdot \sum_{\ind=1}^d \vy[\ind]} & \cdots & 0                                                \\
            \vdots                                           & \vdots                                           & \ddots & \vdots                                           \\
            0                                                & 0                                                & \cdots & \frac{1}{\vy[d] \cdot \sum_{\ind=1}^d \vy[\ind]}
        \end{pmatrix}.
    \]
\end{restate}
\begin{proof}
    The partial derivatives of the regularizer are
    \[
        \eta \frac{\partial \psi}{\partial \vy[i]} = - \frac{\alpha}{\sum_{\ind=1}^d \vy[\ind]} - \frac{1}{(\sum_{\ind=1}^d \vy[\ind])^2} \left(\sum_{\ind=1}^d \vy[\ind] \log \vy[\ind]\right) + \frac{1 + \log \vy[i]}{\sum_{\ind=1}^d \vy[\ind]}
    \]
    For convenience, let $\summ{\vy} \defeq \sum_{\ind=1}^d \vy[\ind]$ and $\vx[i] \defeq \frac{\vy[i]}{\summ{\vy}}$, so that $\vx \in \Delta^d$. Let $\mathds{1}_{i = j}$ be the indicator function that is evaluated to one when $i=j$ and zero elsewhere. The second order derivatives are as follows,
    \[
        \eta \frac{\partial^2 \psi}{\partial \vy[i] \partial \vy[j]} & = \frac{\alpha}{\summ{\vy}^2} + \frac{2}{\summ{\vy}^3} \left(\sum_{\ind=1}^d \vy[\ind] \log \vy[\ind]\right) - \frac{1 + \log \vy[j]}{\summ{\vy}^2} - \frac{1 + \log \vy[i]}{\summ{\vy}^2} + \frac{\mathds{1}_{i = j}}{\vy[i] \summ{\vy}}                                 \\
        & = \frac{\alpha}{\summ{\vy}^2} + \frac{2}{\summ{\vy}^2} \left(\sum_{\ind=1}^d \vx[\ind] \log (\summ{\vy}\vx[\ind])\right) - \frac{2}{\summ{\vy}^2} \\&\hspace{6cm}- \frac{\log (\summ{\vy}\vx[i]) + \log (\summ{\vy}\vx[j])}{\summ{\vy}^2}+ \frac{\mathds{1}_{i = j}}{\vy[i] \summ{\vy}} \\
        & = \frac{\alpha - 2 + 2 \sum_{\ind=1}^d \vx[\ind] \log \vx[\ind]}{\summ{\vy}^2} + \frac{2 \log \summ{\vy}}{\summ{\vy}^2} - \frac{2 \log \summ{\vy}}{\summ{\vy}^2} \\&\hspace{6cm}- \frac{\log \vx[i] + \log \vx[j]}{\summ{\vy}^2} + \frac{\mathds{1}_{i = j}}{\vy[i] \summ{\vy}}          \\
        & = \frac{\alpha - 2 + 2 \sum_{\ind=1}^d \vx[\ind] \log \vx[\ind]}{\summ{\vy}^2} - \frac{\log \vx[i] + \log \vx[j]}{\summ{\vy}^2} + \frac{\mathds{1}_{i = j}}{\vy[i] \summ{\vy}}.
    \]\allowdisplaybreaks
    Let now $\alpha \defeq 2 + 2 \log d + \alpha^\prime$; we can therefore guarantee that for any vector $\vec v \in \bbR^d$,
    \[
        & \eta \summ{\vy}^2 [\vec v^\top \nabla^2 \psi(\vy) \vec v]                                                                                                                                                                                                                                                                                       \\
        & \hspace{.5cm} \geq \alpha^\prime \mleft(\sum_{\ind = 1}^k \vec v[k] \mright)^2 + \mleft( \sum_{\ind = 1}^{d} \frac{\vec v[\ind]^2}{\vx[i]} \mright) + \mleft( \sum_{\ind = 1}^{d} - 2 \vec v[\ind] \log \vx[i] \mright) \mleft( \sum_{\ind = 1}^{d} \vec v[\ind] \mright)                                                                       \\
        & \hspace{.5cm} = \mleft( \sum_{\ind = 1}^{d} \frac{\vec v[\ind]^2}{2 \vx[i]} \mright) + \alpha^\prime \mleft(\sum_{\ind = 1}^k \vec v[k] \mright)^2 + \mleft( \sum_{\ind = 1}^{d} \frac{\vec v[\ind]^2}{2 \vx[i]} \mright) + \mleft( \sum_{\ind = 1}^{d} - 2 \vec v[\ind] \log \vx[i] \mright) \mleft( \sum_{\ind = 1}^{d} \vec v[\ind] \mright) \\
        & \hspace{.5cm} = \mleft( \sum_{\ind = 1}^{d} \frac{\vec v[\ind]^2}{2 \vx[i]} \mright) + \alpha^\prime \mleft(\sum_{\ind = 1}^k \vec v[k] \mright)^2                                                                                                                                                                                              \\
        & \hspace{1.5cm} + \sum_{\ind = 1}^{d} \mleft[ \mleft( \sqrt{\frac{1}{2 \vx[\ind]}} \vec v[\ind] - \left(\sum_{j = 1}^{d} \vec v[j]\right) \sqrt{2 \vx[\ind]} \log \vx[\ind]\mright)^2 - 2 \left(\sum_{j = 1}^{d} \vec v[\ind]\right)^2 \vx[\ind] \log^2 \vx[\ind]\mright]                                                                        \\
        & \hspace{.5cm} \geq \mleft( \sum_{\ind = 1}^{d} \frac{\vec v[\ind]^2}{2 \vx[i]} \mright) + \alpha^\prime \mleft(\sum_{\ind = 1}^k \vec v[k] \mright)^2 - 2 \mleft( \sum_{\ind = 1}^{d} \vec v[\ind] \mright)^2 \mleft(\sum_{\ind = 1}^{d} \vx[\ind] \log^2 x[\ind] \mright)                                                                      \\
        & \hspace{.5cm} \geq \mleft( \sum_{\ind = 1}^{d} \frac{\vec v[\ind]^2}{2 \vx[i]} \mright) + \alpha^\prime \mleft(\sum_{\ind = 1}^k \vec v[k] \mright)^2 - 2 \log^2 d \mleft(\sum_{\ind = 1}^k \vec v[k] \mright)^2.
    \]
    This shows that by setting any $\alpha^\prime \geq 2 \log^2 d$, we can obtain
    \[
        \eta \vec v^\top \nabla^2 \psi(\vy) \vec v \geq \frac{1}{2} \vec v^\top \begin{pmatrix}
            \frac{1}{\vy[1] \cdot \sum_{\ind=1}^d \vy[\ind]} & 0                                                & \cdots & 0                                                \\
            0                                                & \frac{1}{\vy[2] \cdot \sum_{\ind=1}^d \vy[\ind]} & \cdots & 0                                                \\
            \vdots                                           & \vdots                                           & \ddots & \vdots                                           \\
            0                                                & 0                                                & \cdots & \frac{1}{\vy[d] \cdot \sum_{\ind=1}^d \vy[\ind]}
        \end{pmatrix} \vec v,
    \]
    which concludes the proof.
\end{proof}

\begin{proposition} \label{prop:bregman_rep}
    The Bregman divergence $D_\psi(\cdot \,\big\|\, \cdot)$ induced by the regularizer $\psi(\cdot)$ has the following representation:
    \[
        \eta D_{\psi}(\vz \,\big\|\, \vy) = (\alpha - 1) D_{\log} (\summ{\vz} \,\big\|\, \summ{\vy}) + \frac{\summ{\vz}}{\summ{\vy}} \kl{\vtheta}{\vx} + \left(1 - \frac{\summ{\vz}}{\summ{\vy}}\right) \ent{\vtheta} - \left(1 - \frac{\summ{\vz}}{\summ{\vy}}\right) \ent{\vx},
    \]
    where $D_{\log}$ is the Bregman divergence induced by the log regularizer.
\end{proposition}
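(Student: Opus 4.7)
The plan is to prove the identity by direct calculation, using a change of variables that separates the \emph{scale} of the vector (captured by $\summ{\vy}$) from its \emph{shape} on the simplex (captured by $\vx = \vy/\summ{\vy}$). First I would simplify $\psi$ itself. Writing $\vy[i] = \summ{\vy}\,\vx[i]$ and using $\sum_i \vx[i] = 1$, the inner sum collapses:
\[
\frac{1}{\summ{\vy}}\sum_{i=1}^{d} \vy[i]\log \vy[i] \;=\; \log \summ{\vy} + \sum_{i=1}^d \vx[i]\log \vx[i] \;=\; \log \summ{\vy} + \ent{\vx},
\]
so that $\eta\,\psi(\vy) = -(\alpha-1)\log \summ{\vy} + \ent{\vx}$. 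This already exposes the two candidate building blocks: a $-\log$ piece on the mass and a negative entropy piece on the simplex direction.

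Next, I would differentiate this clean form to read off $\nabla\psi(\vy)$ in a compact way, then expand the linear term $\langle \nabla\psi(\vy),\vz-\vy\rangle$. The key computational step is to compute $\sum_i \vz[i]\log \vy[i]$. Writing $\vz[i] = \summ{\vz}\,\vtheta[i]$ and $\log \vy[i] = \log \summ{\vy} + \log \vx[i]$, this becomes
\[
\sum_{i=1}^d \vz[i]\log \vy[i] \;=\; \summ{\vz}\log \summ{\vy} \;+\; \summ{\vz}\Bigl(\ent{\vtheta} - \kl{\vtheta}{\vx}\Bigr),
\]
using the identity $\sum_i \vtheta[i]\log \vx[i] = \ent{\vtheta} - \kl{\vtheta}{\vx}$. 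A further convenient sanity check is that $\eta\langle \nabla\psi(\vy),\vy\rangle = -(\alpha-1)$, which will cancel cleanly when I subtract.

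Finally, I would assemble $\eta\,D_\psi(\vz\,\|\,\vy) = \eta\psi(\vz) - \eta\psi(\vy) - \eta\langle \nabla\psi(\vy),\vz-\vy\rangle$ and group the terms by the parameter $(\alpha-1)$, the ratio $\summ{\vz}/\summ{\vy}$, and the entropies. The $(\alpha-1)$-terms will collapse to
\[
(\alpha-1)\Bigl[-\log \summ{\vz} + \log \summ{\vy} + \tfrac{\summ{\vz}}{\summ{\vy}} - 1\Bigr],
\]
which is exactly $(\alpha-1)\,D_{\log}(\summ{\vz}\,\|\,\summ{\vy})$ since $D_{\log}(u\,\|\,v) = -\log u + \log v + u/v - 1$ is the Bregman divergence of $-\log$. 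The remaining terms organize themselves as $\tfrac{\summ{\vz}}{\summ{\vy}}\kl{\vtheta}{\vx}$ plus a piece proportional to $(1 - \summ{\vz}/\summ{\vy})$ multiplying $\ent{\vtheta} - \ent{\vx}$, matching the claim.

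The calculation is entirely mechanical, with no real obstacle beyond bookkeeping; the only subtlety is tracking which constant pieces (such as the $\log \summ{\vy}$ cross-terms arising from $\log \vy[i] = \log \summ{\vy} + \log \vx[i]$) cancel between $\psi(\vz)$, $\psi(\vy)$, and the linearization. Grouping by $\summ{\vz}/\summ{\vy}$ early, as indicated above, avoids redundant cancellations and makes the final identification of $D_{\log}$ transparent.
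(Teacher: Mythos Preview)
Your proposal is correct and follows essentially the same approach as the paper: a direct calculation expanding the Bregman divergence via the change of variables $\vy=\summ{\vy}\,\vx$, $\vz=\summ{\vz}\,\vtheta$, and then grouping terms into the $D_{\log}$, KL, and entropy pieces. The only cosmetic difference is ordering: you first simplify $\eta\psi(\vy)$ to $-(\alpha-1)\log\summ{\vy}+\ent{\vx}$ and then differentiate, whereas the paper differentiates $\psi$ in its original form and simplifies the partial derivative to $\eta\,\partial_i\psi(\vy)=-\tfrac{\alpha-1}{\summ{\vy}}-\tfrac{1}{\summ{\vy}}\ent{\vx}+\tfrac{\log\vx[i]}{\summ{\vy}}$ before assembling $D_\psi$; the resulting algebra is the same.
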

\begin{proof}
    We know that
    \[
        \eta \frac{\partial \psi}{\partial \vy[i]} & = - \frac{\alpha}{\sum_{\ind=1}^d \vy[\ind]} - \frac{1}{(\sum_{\ind=1}^d \vy[\ind])^2} (\sum_{\ind=1}^d \vy[\ind] \log \vy[\ind]) + \frac{1 + \log \vy[i]}{\sum_{\ind=1}^d \vy[\ind]}                                                        \\
        & = - \frac{\alpha}{\summ{\vy}} - \frac{1}{\summ{\vy}} \big(\sum_{\ind = 1}^d \vx[\ind] \log \vx[\ind] \big) - \frac{\log \summ{\vy}}{\summ{\vy}} + \frac{1}{\summ{\vy}} + \frac{\log \vx[i]}{\summ{\vy}} + \frac{\log \summ{\vy}}{\summ{\vy}} \\
        & = - \frac{\alpha - 1}{\summ{\vy}} - \frac{1}{\summ{\vy}} \big(\sum_{\ind = 1}^d \vx[\ind] \log \vx[\ind] \big) + \frac{\log \vx[i]}{\summ{\vy}}.
    \]
    For the Bregman divergence, by definition, we get that
    \[
        D_{\psi}(\vz \,\big\|\, \vy) & = \psi(\vz) - \psi(\vy) - \sum_{\ind = 1}^d \frac{\partial \psi (\vy) }{\partial \vy[\ind]} (\vz[\ind] - \vy[\ind]).
    \]
    Hence,
    \[
        \hspace{-0.6cm} \eta D_{\psi}(\vz \,\big\|\, \vy) & = \mleft( - (\alpha - 1) \log \summ{\vz} + \sum_{\ind = 1}^{d} \vtheta[\ind] \log \vtheta[\ind] \mright) - \mleft( - (\alpha - 1) \log \summ{\vy} + \sum_{\ind = 1}^{d} \vx[\ind] \log \vx[\ind] \mright)                                                                          \\
        & \qquad - \mleft( - \frac{\alpha - 1}{\summ{\vy}} - \frac{1}{\summ{\vy}} (\sum_{\ind = 1}^d \vx[\ind] \log \vx[\ind])\mright) (\summ{\vz} - \summ{\vy}) - \mleft( \frac{1}{\summ{\vy}} \sum_{\ind = 1}^d \log \vx[\ind] (\summ{\vz} \vtheta[\ind]- \summ{\vy} \vx[\ind]) \mright)                   \\
        & = - (\alpha - 1) \log \frac{\summ{\vz}}{\summ{\vy}} + (\alpha - 1) \left(\frac{\summ{\vz}}{\summ{\vy}} - 1\right) + \left(\frac{\summ{\vz}}{\summ{\vy}} - 1\right)\left(\sum_{\ind = 1}^{d} \vx[\ind] \log \vx[\ind]\right) + \sum_{\ind = 1}^{d} \vtheta[\ind] \log \vtheta[\ind] \\
        & \qquad - \frac{\summ{\vz}}{\summ{\vy}} \left(\sum_{\ind = 1}^{d} \vtheta[\ind ] \log \vx[\ind ]\right)
        \\
        & = (\alpha - 1) D_{\log} (\summ{\vz} \,\big\|\, \summ{\vy}) + \frac{\summ{\vz}}{\summ{\vy}} \kl{\vtheta}{\vx} + \left(1 - \frac{\summ{\vz}}{\summ{\vy}}\right) \ent{\vtheta} - \left(1 - \frac{\summ{\vz}}{\summ{\vy}}\right) \ent{\vx},
    \]
    where $D_{\log}$ is the Bregman divergence induced by log regularizer. Thus, the proof is concluded.
\end{proof}

\begin{proposition} \label{prop:el1_lifted}
    The Bregman divergence $D_\psi( . \,\big\|\,  .)$ induced by the regularizer $\psi(.)$ is lower bounded by a term proportional to the $\ell_1$ norm on the lifted simplex $(0, 1]\Delta^d$, 
    \[
    D_\psi(  \vy \,\big\|\,  \vz) & \geq \frac{1}{2\eta} \| \vy -  \vz\|_1^2.
    \]
\end{proposition}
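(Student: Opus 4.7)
The proposition is a strong-convexity statement for $\psi$ with respect to the $\ell_1$ norm on the lifted simplex $(0,1]\Delta^d$. The natural route is to start from the Hessian lower bound already established in Theorem~\ref{theorem:regularizer_convex} and convert its diagonal quadratic form into an $\ell_1$ bound via Cauchy--Schwarz, exploiting the key fact that on the lifted simplex $\summ{\vec w}\le 1$.

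First, I would invoke the integral form of Taylor's theorem for the Bregman divergence: for any $\vy,\vz \in (0,1]\Delta^d$,
\begin{align*}
D_\psi(\vy\,\|\,\vz) \;=\; \int_0^1 (1-s)\,(\vy-\vz)^\top \nabla^2 \psi\bigl(\vec w_s\bigr)(\vy-\vz)\,ds,
\end{align*}
where $\vec w_s \defeq \vz + s(\vy-\vz)$. Since $(0,1]\Delta^d$ is convex, $\vec w_s$ stays in the domain. Plugging in the diagonal Hessian lower bound from Theorem~\ref{theorem:regularizer_convex} gives
\begin{align*}
(\vy-\vz)^\top \nabla^2 \psi(\vec w_s)(\vy-\vz) \;\geq\; \frac{1}{2\eta}\sum_{i=1}^d \frac{(\vy[i]-\vz[i])^2}{\vec w_s[i]\,\summ{\vec w_s}}.
\end{align*}

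Next, I would apply Cauchy--Schwarz in the form
\begin{align*}
\|\vy-\vz\|_1^2 \;=\;\Bigl(\sum_i \tfrac{|\vy[i]-\vz[i]|}{\sqrt{\vec w_s[i]}}\cdot \sqrt{\vec w_s[i]}\Bigr)^2 \;\leq\; \Bigl(\sum_i \tfrac{(\vy[i]-\vz[i])^2}{\vec w_s[i]}\Bigr)\Bigl(\sum_i \vec w_s[i]\Bigr),
\end{align*}
which, after dividing through by $\summ{\vec w_s}^2$, yields $\sum_i (\vy[i]-\vz[i])^2/(\vec w_s[i]\summ{\vec w_s}) \geq \|\vy-\vz\|_1^2/\summ{\vec w_s}^2$. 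Because $\vec w_s \in (0,1]\Delta^d$ means $\summ{\vec w_s}\le 1$, the denominator is at most $1$, so this is further lower bounded by $\|\vy-\vz\|_1^2$. Combining and using $\int_0^1 (1-s)\,ds = 1/2$ gives the stated $\Omega(1/\eta)\|\vy-\vz\|_1^2$ lower bound on the Bregman divergence.

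There is no real obstacle here; the only subtle point is the role of the domain. The proof would collapse if we worked on an unrestricted positive orthant, because $\summ{\vec w_s}$ could be arbitrarily large and the ``drop the normalization'' step in the last display would fail. Thus the lifted-simplex constraint $\summ{\vec w_s}\le 1$ is precisely what buys the $\ell_1$ strong convexity despite the regularizer having a $1/\summ{\vy}$ scaling in its Hessian. If the literal constant $1/(2\eta)$ in the statement (versus the $1/(4\eta)$ produced by the Taylor argument above) is important, one can sharpen the route by instead observing that the derivation of Theorem~\ref{theorem:regularizer_convex} actually gives $\vec v^\top \nabla^2\psi(\vy)\vec v \geq \tfrac{1}{\eta\summ{\vy}^2}\cdot \tfrac{1}{2}\sum_i v[i]^2/\vx[i]$, and then applying Cauchy--Schwarz directly at the strong-convexity level rather than inside the Taylor integral.
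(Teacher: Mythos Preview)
Your approach is essentially identical to the paper's: both invoke the Hessian lower bound from Theorem~\ref{theorem:regularizer_convex}, apply Cauchy--Schwarz to convert the weighted quadratic form into an $\ell_1$ bound, and exploit $\summ{\cdot}\le 1$ on the lifted simplex. The paper phrases this as ``$\psi$ is strongly convex with respect to the $\ell_1$ norm'' and then directly asserts the Bregman inequality, while you spell out the Taylor integral.

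On the constant: you are right that this route yields $\tfrac{1}{4\eta}$ rather than the stated $\tfrac{1}{2\eta}$. Establishing $\vec v^\top\nabla^2\psi(\vy)\,\vec v\ge\tfrac{1}{2\eta}\|\vec v\|_1^2$ and then integrating against $(1-s)$ costs a factor of two; the paper's last display elides this step. Your proposed sharpening does not recover the missing factor, however---whether Cauchy--Schwarz is applied before or after the Taylor integral, the Hessian estimate remains $\tfrac{1}{2\eta}$ and the integral still contributes $\int_0^1(1-s)\,ds=\tfrac{1}{2}$, so both orderings give $\tfrac{1}{4\eta}$. The discrepancy is immaterial downstream (the constant is absorbed into the choice of $\eta$ in the RVU analysis), so this minor slip does not affect the paper's main results.
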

\begin{proof}
    We first show that $\psi(\vy) $ is strongly convex w.r.t. $\ell_1$ norm. By \Cref{theorem:regularizer_convex}, for any vector $\nut \in \bbR^d$,
    \[
    \nut^\top \nabla^2 \psi(\vy) \nut & \geq \frac{1}{2\eta} \sum_{i = 1}^{d} \frac{\nut[i]^2}{\vy[i].\sum_{\ind = 1}^{d} \vy[\ind]} \\
    & \geq \frac{1}{2\eta} \sum_{i = 1}^{d} \frac{\nut[i]^2}{\vy[i]} \numberthis{eq:bregman_to_l11} \\
    & \geq \frac{1}{2\eta} \Big(\sum_{\ind = 1}^{d} \vy[\ind]\Big). \sum_{i = 1}^{d} \frac{\nut[i]^2}{\vy[i]} \numberthis{eq:bregman_to_l12}
     \\
    & \geq \frac{1}{2\eta} \Big(\sum_{i = 1}^{d} \nut[i]\Big)^2
    = \frac{1}{2\eta} \| \nut \|_1^2,
    \] 
    where \Cref{eq:bregman_to_l11,eq:bregman_to_l12} follow since $\sum_{\ind = 1}^{d} \vy[\ind] \leq 1$ and the last line is derived by Cauchy–Schwarz. 
    Next by definition of the Bregman divergence and strong convexity of $\psi$, 
    \[
     D_\psi(  \vy \,\big\|\,  \vz) & = \psi(\vy) - \psi(\vz) - \langle \nabla \psi (\vz), \vy - \vz \rangle \\
     & \geq \frac{1}{2\eta} \| \vy -  \vz\|_1^2.
    \]
    and the proof is completed.
\end{proof}

\begin{proposition}\label{prop:el1_simplex}
    Under the multiplicative stability assumption of $\omega \defeq \frac{\summ{\vz}}{\summ{\vy}} \in [1 - \epsilon, 1 + \epsilon]$ for a constant $\epsilon \in (0, \frac{2}{5})$, the Bregman divergence $D_\psi(\cdot \,\big\|\, \cdot)$ induced by the regularizer $\psi(\cdot)$ is lower bounded by a term proportional to the $\ell_1$ norm on the action simplex $\Delta^d$:
    \[
        \eta D_{\psi}(\vz \,\big\|\, \vy) \geq \frac{1}{4} (1 - \epsilon) \|\vtheta - \vx \|_1^2.
    \]
\end{proposition}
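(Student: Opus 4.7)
The starting point is the closed-form representation of the Bregman divergence from \Cref{prop:bregman_rep}. Writing $\omega = \summ{\vz}/\summ{\vy}$ and $K = \kl{\vtheta}{\vx}$, it gives
\[
\eta D_\psi(\vz \,\|\, \vy) = (\alpha-1)\, D_{\log}(\summ{\vz}\,\|\,\summ{\vy}) + \omega\, K + (1-\omega)\bigl(\ent{\vtheta} - \ent{\vx}\bigr).
\]
The plan is to show that the first (nonnegative) term, which is large because $\alpha - 1 \ge \beta \log^2 d$, is powerful enough to absorb the indefinite third term, leaving us with a clean fraction of $\omega K$ that Pinsker's inequality then converts into $\ell_1$.

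First, I would discard the term $\omega\, K$ down to $(1-\epsilon) K$ using the hypothesis, and bound the entropy difference via \Cref{lemma:entropy_diff}: $|\ent{\vtheta} - \ent{\vx}| \le \log d\sqrt{2K}$. Letting $s = |1-\omega| \le \epsilon$, this reduces the claim to showing
\[
(\alpha-1)(\omega - 1 - \log \omega) + \tfrac{1-\epsilon}{2}\, K - s\log d\sqrt{2K} \;\ge\; 0,
\]
since $\frac{1-\epsilon}{2}K$ on its own already dominates $\tfrac{1-\epsilon}{4}\|\vtheta-\vx\|_1^2$ by Pinsker (\Cref{lemma:pinsker}). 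Viewing the left-hand side as a quadratic in $\sqrt{K}$, nonnegativity is equivalent to a discriminant condition:
\[
s^2 \log^2 d \;\le\; (1-\epsilon)(\alpha-1)(\omega - 1 - \log \omega).
\]

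Second, I would invoke a Taylor/convexity estimate for the log-Bregman divergence on the compact interval $\omega \in [1-\epsilon, 1+\epsilon]$. Since $\frac{d^2}{d\omega^2}(\omega - 1 - \log \omega) = 1/\omega^2$, the Lagrange form of Taylor's theorem gives $\omega - 1 - \log \omega \ge \frac{s^2}{2(1+\epsilon)^2}$. Combined with $\alpha - 1 \ge \beta \log^2 d$, the displayed discriminant condition becomes
\[
\beta \;\ge\; \frac{2(1+\epsilon)^2}{1-\epsilon},
\]
which for $\epsilon \le 2/5$ reads $\beta \ge 98/15$, comfortably satisfied by the standing hypothesis $\beta \ge 70$. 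This closes the argument.

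I do not anticipate a serious obstacle: the algebra is routine, and the nonstandard step is only the realization that the logarithmic term $(\alpha-1)D_{\log}$ is exactly the right ``energy reserve'' to cancel the linear-in-$\sqrt{K}$ entropy-difference correction. The mild delicacy is checking that the quadratic bound on $\omega - 1 - \log\omega$ is uniform enough on the whole interval $[1-\epsilon, 1+\epsilon]$, which is why the restriction $\epsilon \in (0, 2/5)$ appears. Everything else follows by plugging in constants.
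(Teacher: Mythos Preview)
Your proposal is correct and follows essentially the same route as the paper: both start from the representation in \Cref{prop:bregman_rep}, control the entropy-difference term via \Cref{lemma:entropy_diff}, absorb it using the large $(\alpha-1)D_{\log}$ term, and finish with Pinsker's inequality. The only cosmetic difference is that you phrase the absorption step as a discriminant condition on a quadratic in $\sqrt{K}$, whereas the paper carries it out by explicitly completing the square (splitting $\omega K$ into a $\frac{\omega^2}{\beta}K$ piece used in the square and a remaining $(\omega-\omega^2/\beta)K\ge \frac{\omega}{2}K$ piece for Pinsker).
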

\begin{proof}
    By \Cref{prop:bregman_rep} and the multiplicative stability assumption $\omega \defeq \frac{\summ{\vz}}{\summ{\vy}} \in [1 - \epsilon, 1 + \epsilon]$, we infer that
    \[
        \eta D_{\psi}(\vz \,\big\|\, \vy) & \geq \beta \log^2 d \left(\log\left(\frac{1}{\omega}\right) + \omega - 1\right) + (1 - \omega) \big(\ent{\vtheta} - \ent{\vx}\big) + \omega \kl{\vtheta}{\vx}                                                                       \\
        & \geq \frac{1}{4} \beta \log^2 d \left(1 - \frac{1}{\omega}\right)^2 + (\omega - 1) \log d \sqrt{2 \kl{\vtheta}{\vx}} + \frac{\omega^2}{\beta} \kl{\vtheta}{\vx} + (\omega - \frac{\omega^2}{\beta}) \kl{\vtheta}{\vx} \numberthis{eq:ent_to_kl} \\
        & \geq \mleft(\sqrt{\beta} \log d \left(1 - \frac{1}{\omega}\right) + \frac{\omega }{\sqrt{\beta}} \sqrt{\kl{\vtheta}{\vx}} \mright)^2 + \frac{\omega}{2} \kl{\vtheta}{\vx}                                                                       \\
        & \geq \frac{1}{4} (1 - \epsilon) \|\vtheta- \vx \|_1^2,
    \]
    where \Cref{eq:ent_to_kl} follows from the inequality $| \ent{\vec p} - \ent{\vec q} | \leq (\log d) \sqrt{2 \kl{\vec p}{\vec q}}$ for discrete probability distributions $\vec p$ and $\vec q$ with support size $d$ (\Cref{lemma:entropy_diff}), and the fact that $\omega - \frac{\omega^2}{\beta} \geq \frac{\omega}{2}$ for all choices of $\omega \in [1 - \epsilon, 1 + \epsilon]$ and $\beta \geq 20$. The last line follows from Pinsker's inequality, i.e., $\| \vec p - \vec q \|_1^2 \leq 2 \kl{\vec p}{\vec q}$ for discrete random variables $\vec p$ and $\vec q$ (\Cref{lemma:pinsker}).
\end{proof}

\subsubsection{Positive Regret} \label{sec:pos_regret}
To analyze \ours equivalently as shown in \Cref{sec:equivalent_viewpoints}, we start the analysis by a closer look at \Cref{algo:ours_yview}. To analyze the $\reg^T$, we first study the nonnegative regret,
\begin{align*}
    \tildereg\^T \defeq \max_{\vy^* \in [0,1]\Delta^d } \sum_{t = 1}^{T} \langle  {\ut}\^{t}, \vy^* - \vy\^{t} \rangle.
\end{align*}
\begin{restateproposition}{prop:reg+}
    For any time horizon $T \in \mathbb{N}$, we have that $\tildereg\^T = \max\{0, \reg\^T\}$. As a result, $\tildereg\^T \geq 0$ and $\tildereg\^T \geq \reg\^T$.
\end{restateproposition}
\begin{proof}\allowdisplaybreaks
    By definition of the reward signal $ {\ut}\^t = \nut\^t -\langle \nut\^t, \vx\^t\rangle \vec1_d$ and induced action $\vx\^t = \frac{\vy\^t}{\langle \vy\^t, \vec 1 \rangle}$, for the regret we infer
    \[
        \tildereg\^T & = \max_{\vy^* \in [0,1]\Delta^d } \sum_{t = 1}^{T} \langle  {\vec\ut}\^{t}, \vy^* - \vy\^{t} \rangle                                                                                                                           \\
        & = \max_{\vy^* \in [0,1]\Delta^d } \sum_{t = 1}^{T} \langle \nut\^t -\langle \nut\^t, \frac{\vy\^t}{\langle \vy\^t, \vec 1 \rangle} \rangle \vec1_d, \vy^* - \vy\^{t} \rangle                                                   \\
        & =  \max_{\vy^* \in [0,1]\Delta^d } \sum_{t = 1}^{T} \langle \nut\^t, \vy^* \rangle - \mleft\langle \langle \nut\^t, \frac{\vy\^t}{\langle \vy\^t, \vec 1 \rangle} \rangle \vec1_d, \vy^* \mright\rangle \numberthis{eq:reg+_1} \\
        & \geq \max_{\vy^* \in \Delta^d } \sum_{t = 1}^{T} \langle \nut\^t, \vy^* \rangle - \langle \nut\^t, \vx\^{t} \rangle . \langle \vec1_d, \vy^* \rangle                                                                           \\
        & \geq \max_{\vy^* \in \Delta^d } \sum_{t = 1}^{T} \langle \nut\^t, \vy^* \rangle - \langle \nut\^t, \vx\^{t} \rangle                                                                                                            \\
        & = \reg\^T,
    \]
    where \cref{eq:reg+_1} follows because of orthogonality $\vec \ut\^{t} \perp \vy\^{t}$. On the other hand, it is clear that $  \tildereg\^T \geq 0$ by choosing $0$ as the comparator.
\end{proof}

This proposition is important as it implies that any RVU bounds on $\tildereg\^T$ directly translate into nonnegative RVU bounds on $\reg^T$.

\subsubsection{Proofs for RVU Bounds (\Cref{sec:proof_sketch})} \label{sec:proof_rvu}

We state and prove the following standard lemma from Optimistic \FTRL analysis.

\begin{restatelemma}{lemma:basi_OFTRL}
    For any $ {\vy} \in \Omega$, the sequences $\{ {\vy}\^{t}\}_{t = 1}^{T}$ generated by \Cref{eq:Fdef}, it holds that
    \[
        \sum_{t = 1}^{T} \mleft \langle  {\vy} -  {\vy}\^{t},  {\ut}\^{t} \mright \rangle & \leq \psi( {\vy}) - \psi( {\vy}\^{1}) + \sum_{t = 1}^{T} \mleft \langle  {\vz}\^{t+1} -  {\vy}\^{t},  {\ut}\^{t} -   {\ut}\^{t - 1}\mright \rangle \\
        & \qquad - \sum_{t = 1}^{T} \mleft( D_{\psi}( {\vy}\^{t} \,\big\|\,  {\vz}\^{t}) + D_{\psi}( {\vz}\^{t+1} \,\big\|\,  {\vy}\^{t}) \mright).
    \]
\end{restatelemma}
\begin{proof}
    By \Cref{lemma:opt_bregman}, and optimality of $ {\vz}\^{t}$,
    \[
        G_t( {\vz}\^{t}) & \leq G_t( {\vy}\^{t}) - D_{\psi} ( {\vy}\^{t} \,\big\|\,  {\vz}\^{t})                                                  \\
        & \leq F_t( {\vy}\^{t}) + \langle  {\vy}\^{t} ,  {\ut}\^{t - 1} \rangle - D_{\psi} ( {\vy}\^{t} \,\big\|\,  {\vz}\^{t}).
    \]
    Similarly by optimality of $ {\vy}\^{t}$,
    \[
        F_t( {\vy}\^{t}) & \leq F_t( {\vz}\^{t+1}) - D_{\psi} ( {\vz}\^{t+1} \,\big\|\,  {\vy}\^{t})                                                                   \\
        & \leq G_{t+1}( {\vz}\^{t+1}) + \langle  {\vz}\^{t+1},  {\ut}\^{t} -  {\ut}\^{t-1} \rangle - D_{\psi} ( {\vz}\^{t+1} \,\big\|\,  {\vy}\^{t}).
    \]
    By merging the inequalities and aggregating over all $t$, we derive
    \[
        G_1( {\vz}\^{1}) & \leq G_{T+1}( {\vz}\^{T + 1}) + \sum_{t = 1}^{T} (\langle  {\vy}\^{t},  {\ut}\^{t} \rangle + \langle  {\vz}\^{t+1} -  {\vy}\^{t},  {\ut}\^{t} -  {\ut}\^{t-1}\rangle) \\
        & \qquad \quad - \sum_{t=1}^{T} ( D_{\psi} ( {\vy}\^{t} \,\big\|\,  {\vz}\^{t}) + D_{\psi} ( {\vz}\^{t+1} \,\big\|\,  {\vy}\^{t}).
    \]
    Plugging in $G_{T+1}( {\vz}\^{T + 1}) \leq - \langle  {\vy},  {\Ut}\^{T+1} \rangle + \psi(\vy)$ and $G_{1}( {\vz}\^{1}) = \psi( {\vy}\^{1})$, entails the proof,
    \[
        \sum_{t = 1}^{T} \mleft \langle  {\vy} -  {\vy}\^{t},  {\ut}\^{t} \mright \rangle & \leq \psi( {\vy}) - \psi( {\vy}\^{1}) + \sum_{t = 1}^{T} \mleft \langle  {\vz}\^{t+1} -  {\vy}\^{t},  {\ut}\^{t} -   {\ut}\^{t - 1}\mright \rangle \\
        & \qquad - \sum_{t = 1}^{T} \mleft( D_{\psi}( {\vy}\^{t} \,\big\|\,  {\vz}\^{t}) + D_{\psi}( {\vz}\^{t+1} \,\big\|\,  {\vy}\^{t}) \mright).
    \]
\end{proof}

\begin{lemma} \label{lemma:opt_bregman}
    Given any convex function $F: \Omega \rightarrow \bbR$ defined on the compact set $\Omega$, the minimizer $ {\vz}^* = \argmin_{ {\vz} \in \Omega} F( {\vz}) $ satisfies
    \[
        F( {\vz}^*) \leq F( {\vz}) - D_F( {\vz} \|  {\vz}^*) \qquad \forall \vz \in \Omega,
    \]
    where $D_F$ is the Bregman divergence induced by function $F$.
\end{lemma}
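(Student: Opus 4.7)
The plan is to invoke the first-order optimality condition for constrained convex minimization and then unpack the definition of the Bregman divergence. Since $\vz^\ast$ minimizes the convex function $F$ over the convex set $\Omega$, the standard variational inequality gives
\[
\langle \nabla F(\vz^\ast),\, \vz - \vz^\ast \rangle \geq 0 \qquad \forall\, \vz \in \Omega.
\]
This captures the intuition that, at an interior stationary point, $\nabla F(\vz^\ast) = \vec 0$, while at a boundary minimizer the gradient points into the normal cone of $\Omega$ at $\vz^\ast$, so moving toward any feasible $\vz$ cannot decrease $F$ to first order.

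Given this inequality, I would simply rewrite the Bregman divergence as
\[
D_F(\vz \,\|\, \vz^\ast) = F(\vz) - F(\vz^\ast) - \langle \nabla F(\vz^\ast),\, \vz - \vz^\ast \rangle,
\]
so the desired inequality $F(\vz^\ast) \leq F(\vz) - D_F(\vz \,\|\, \vz^\ast)$ is algebraically equivalent to $\langle \nabla F(\vz^\ast),\, \vz - \vz^\ast \rangle \geq 0$, which is exactly what we just established. Reassembling these steps yields the claim.

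The only subtle point is the implicit assumption of differentiability (so $\nabla F(\vz^\ast)$ and hence $D_F$ are well-defined); in the non-smooth case one would replace the gradient by a selection of the subdifferential and invoke the subdifferential first-order optimality condition $0 \in \partial F(\vz^\ast) + N_\Omega(\vz^\ast)$. Since \ours is applied with $F = F_t$ or $F = G_t$, both of which inherit the smoothness of $\psi$ on the relative interior of $\Omega = (0,1]\Delta^d$ by Theorem~\ref{theorem:regularizer_convex}, no such technicality arises in our application. No major obstacle is expected: the lemma is essentially a restatement of the first-order optimality condition through the lens of Bregman divergence.
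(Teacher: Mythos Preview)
Your proposal is correct and takes essentially the same approach as the paper: both invoke the first-order optimality condition $\langle \nabla F(\vz^\ast), \vz - \vz^\ast \rangle \geq 0$ and combine it with the definition of the Bregman divergence to obtain the inequality. The paper's version is a one-line application of this idea, while you add some helpful commentary on the differentiability assumption, but the argument is the same.
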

\begin{proof}
    By definition,
    \[
        F( {\vz}^*) = F( {\vz}) - \langle \nabla F( {\vz}^*),  {\vz} -  {\vz}^* \rangle - D_F( {\vz} \|  {\vz}^*) \leq F( {\vz}) - D_F( {\vz} \|  {\vz}^*),
    \]
    which follows by the first order optimality condition of $ {\vz}^*$.
\end{proof}

\begin{lemma}\label{lemma:bregman_to_l1}
    If $\eta \leq \frac{1}{50}$ and $\beta$ is large enough ($\beta \geq 70$), then
    \[
        \sum_{t = 1}^T \mleft( D_\psi(  {\vy}\^{t} \,\big\|\,  {\vz}\^{t} ) + D_\psi(  {\vz}\^{t+1}\,\big\|\,  {\vy}\^{t} )\mright) \geq \sum_{t = 1}^{T} \frac{1}{2 \eta} (\| \vy\^{t} - \vz\^{t} \|_{1}^2 + \|  {\vz}\^{t + 1} -  {\vy }\^{t} \|_{1}^2).
    \]    
\end{lemma}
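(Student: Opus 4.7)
My plan is to prove this lemma as a direct, termwise consequence of \Cref{prop:el1_lifted}, which already delivers the pointwise lower bound $D_\psi(\vec u \,\big\|\, \vec w) \geq \frac{1}{2\eta}\|\vec u - \vec w\|_1^2$ for every pair $\vec u, \vec w \in (0,1]\Delta^d$. By construction in \Cref{eq:Fdef} and \Cref{eq:Gdef}, all the iterates $\vy\^t$, $\vz\^t$, and $\vz\^{t+1}$ lie in $\Omega = (0,1]\Delta^d$, so the proposition applies to each summand without any additional work.

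Concretely, I would first instantiate \Cref{prop:el1_lifted} with $(\vec u, \vec w) = (\vy\^t, \vz\^t)$ to obtain $D_\psi(\vy\^t \,\big\|\, \vz\^t) \geq \frac{1}{2\eta}\|\vy\^t - \vz\^t\|_1^2$. Then I would apply it again with $(\vec u, \vec w) = (\vz\^{t+1}, \vy\^t)$ to get $D_\psi(\vz\^{t+1} \,\big\|\, \vy\^t) \geq \frac{1}{2\eta}\|\vz\^{t+1} - \vy\^t\|_1^2$. Adding these two inequalities and then summing over $t = 1, \ldots, T$ yields precisely the claimed bound.

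As for the hypotheses $\eta \leq 1/50$ and $\beta \geq 70$, these are not needed in isolation for this particular lemma: they are ambient assumptions carried through from the surrounding RVU analysis. The only implicit requirement here is that $\psi$ satisfy the spectral lower bound of \Cref{theorem:regularizer_convex}, which holds whenever $\alpha \geq 2 + 2\log d + 2\log^2 d$ (equivalently $\beta \geq 2$), and this is comfortably satisfied. I therefore anticipate no real obstacle; all the heavy lifting was done upstream, in deriving the $\ell_1$-strong convexity of $\psi$ via the Hessian bound of \Cref{theorem:regularizer_convex} followed by the Cauchy--Schwarz step inside the proof of \Cref{prop:el1_lifted}.
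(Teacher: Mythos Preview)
Your proposal is correct and essentially identical to the paper's proof: the paper also applies \Cref{prop:el1_lifted} termwise to each of $D_\psi(\vy\^{t}\,\big\|\,\vz\^{t})$ and $D_\psi(\vz\^{t+1}\,\big\|\,\vy\^{t})$, then sums over $t\in[T]$. Your observation that the stated hypotheses on $\eta$ and $\beta$ are ambient rather than strictly required here is also accurate.
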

\begin{proof}
    By multiple usage of \Cref{prop:el1_lifted},
    \[
        D_\psi(  {\vy}\^{t} \,\big\|\,  {\vz}\^{t} ) + D_\psi(  {\vz}\^{t+1}\,\big\|\,  {\vy}\^{t} ) \geq \frac{1}{2 \eta} \|  {\vy}\^{t} - {\vz}\^{t} \|_1^2 + \frac{1}{2 \eta} \|  {\vz}\^{t+1} - {\vy}\^{t} \|_1^2,
    \]
    and summing over $t \in [T]$ concludes the proof.
\end{proof}

\begin{lemma}\label{lemma:beta_terms}
    If $\eta \leq \frac{1}{50}$ and $\beta$ is large enough ($\beta \geq 70$), then
    \[
        \sum_{t = 1}^T \mleft( D_\psi(  {\vy}\^{t} \,\big\|\,  {\vz}\^{t} ) + D_\psi(  {\vz}\^{t+1}\,\big\|\,  {\vy}\^{t} )\mright) \geq \sum_{t = 1}^{T-1} \frac{1}{10 \eta} (\| \vx\^{t+1} - \vtheta\^{t+1} \|_{1}^2 + \|  {\vtheta}\^{t + 1} -  {\vx }\^{t} \|_{1}^2).
    \]
\end{lemma}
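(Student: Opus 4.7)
The plan is to apply Proposition \ref{prop:el1_simplex} to each Bregman divergence in the sum and then reindex. By Theorem \ref{theorem:equivalence_viewpoints}, one has $\vy\^t = \summ{\vy\^t}\,\vx\^t$, where $\summ{\vy\^t}$ is the solution to the dynamic learning rate control problem \eqref{eq:opt_problem_lambda} with regret vector $\Ut\^t + \ut\^{t-1}$ (the optimistic one), and $\vz\^t = \summ{\vz\^t}\,\vtheta\^t$, where $\summ{\vz\^t}$ is the analogous solution for the (non-optimistic) regret vector $\Ut\^t$. To invoke Proposition \ref{prop:el1_simplex} on $D_\psi(\vy\^t \,\|\, \vz\^t)$ and $D_\psi(\vz\^{t+1} \,\|\, \vy\^t)$, what I need to establish is multiplicative stability of the mass ratios $\omega^{(1)}_t \defeq \summ{\vy\^t}/\summ{\vz\^t}$ and $\omega^{(2)}_t \defeq \summ{\vz\^{t+1}}/\summ{\vy\^t}$.

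The defining regret vectors for these two comparisons differ by $\ut\^{t-1}$ and by $\ut\^t - \ut\^{t-1}$, respectively. Since $\|\nut\^t\|_\infty \leq 1$ and $\ut\^t = \nut\^t - \langle \nut\^t, \vx\^t\rangle\,\vec 1_d$, these $\ell_\infty$ differences are bounded by $2$ and $4$. Theorem \ref{theorem:stability_lambda}, combined with the extension noted in its proof (any bounded regret difference yields multiplicative stability provided $\beta$ is taken large enough), then places both $\omega^{(1)}_t$ and $\omega^{(2)}_t$ inside an interval $[1-\epsilon,1+\epsilon]$ for some $\epsilon \in (0, 2/5)$ under the lemma's hypothesis $\beta \geq 70$.

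With this stability in hand, Proposition \ref{prop:el1_simplex}, applied once by identifying its $(\vz,\vy,\vtheta,\vx)$ with $(\vy\^t,\vz\^t,\vx\^t,\vtheta\^t)$ and once with $(\vz\^{t+1},\vy\^t,\vtheta\^{t+1},\vx\^t)$, gives
\[
\eta\, D_\psi(\vy\^t \,\|\, \vz\^t) \geq \tfrac{1-\epsilon}{4}\,\|\vx\^t - \vtheta\^t\|_1^2, \qquad \eta\, D_\psi(\vz\^{t+1} \,\|\, \vy\^t) \geq \tfrac{1-\epsilon}{4}\,\|\vtheta\^{t+1} - \vx\^t\|_1^2.
\]
Summing these over $t = 1,\dots,T$, reindexing $t \to t+1$ in the first family, and dropping the nonnegative $t=0$ and $t=T$ boundary terms produces the desired right-hand side with constant $(1-\epsilon)/4$ in place of $1/10$. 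Since $\epsilon < 2/5 < 3/5$, one has $(1-\epsilon)/4 \geq 1/10$, which completes the argument after dividing through by $\eta$.

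The main obstacle lies in the second ratio $\omega^{(2)}_t$: its underlying regret difference can be as large as $4$, whereas Theorem \ref{theorem:stability_lambda} is stated for differences at most $2$. This obstacle is technical rather than conceptual; it is handled by re-running the case analysis in the proof of Theorem \ref{theorem:stability_lambda} with the difference constant doubled. The relevant multiplicative factors from Lemma \ref{lemma:stability}, of the form $(\beta+3)/(\beta-5)$ and $(c+\beta\log^2 d)/(\beta\log^2 d)$ where $c$ is the $\ell_\infty$-difference bound, degrade only mildly when $c$ is doubled from $2$ to $4$, and remain strictly inside the admissible window $\epsilon < 2/5$ uniformly in $d \geq 2$ for $\beta \geq 70$.
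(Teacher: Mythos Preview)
Your proposal is correct and follows essentially the same route as the paper: invoke the multiplicative stability of the learning rates (Theorem~\ref{theorem:stability_lambda}) to place the mass ratios in $[1-\epsilon,1+\epsilon]$, apply Proposition~\ref{prop:el1_simplex} to each Bregman term, and then reindex and drop nonnegative boundary terms. You are in fact more careful than the paper on one point: you correctly flag that the pair $(\vz\^{t+1},\vy\^{t})$ involves a regret difference $\ut\^{t}-\ut\^{t-1}$ of $\ell_\infty$-norm up to~$4$ rather than~$2$, and you handle this by re-running the stability analysis with the larger constant, whereas the paper simply invokes Theorem~\ref{theorem:stability_lambda} without commenting on this.
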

\begin{proof}
    By \Cref{theorem:stability_lambda} in \Cref{sec:learning_rate_control_problem}, we have multiplicative stability in the learning rate as the solution to the Dynamic Learning Rate Control Problem, i.e., $\omega \defeq \frac{\summ{\vz}}{\summ{\vy}} \in [1 - \epsilon, 1 + \epsilon]$ for $\epsilon = \frac{2}{5}$. Consequently, by \Cref{prop:el1_simplex}, we infer that
    \[
        \eta D_{\psi}(\vz, \vy) \geq \frac{1}{4} (1 - \epsilon) \|\vtheta - \vx \|_1^2,
    \]
    
    Next, by setting $\vz \defeq \vz\^{t + 1}$ and $\vy \defeq \vy\^{t}$, we obtain
    \[
        D_{\psi}(\vz\^{t + 1}, \vy\^{t}) \geq \frac{3}{20 \eta} \| \vtheta\^{t + 1} - \vx\^{t} \|_{1}^2 > \frac{1}{10 \eta} \| \vtheta\^{t + 1} - \vx\^{t} \|_{1}^2.
    \]
    
    Similarly,
    \[
        D_{\psi}(\vy^{t+1}, \vz^{t+1}) \geq \frac{3}{20 \eta} \| \vx\^{t+1} - \vtheta\^{t+1} \|_{1}^2 > \frac{1}{10 \eta} \| \vx\^{t+1} - \vtheta\^{t+1} \|_{1}^2.
    \]
    
    To conclude,
    \[
        \sum_{t = 1}^T \mleft( D_\psi( \vy\^{t} \,\big\|\, \vz\^{t} ) + D_\psi( \vz\^{t+1} \,\big\|\, \vy\^{t} )\mright) & \geq \sum_{t = 1}^{T-1} \mleft( D_\psi( \vy\^{t+1} \,\big\|\, \vz\^{t+1} ) + D_\psi( \vz\^{t+1} \,\big\|\, \vy\^{t} )\mright) \\
        & \geq \sum_{t = 1}^{T-1} \frac{1}{10 \eta} (\| \vx\^{t+1} - \vtheta\^{t+1} \|_{1}^2 + \| \vtheta\^{t + 1} - \vx\^{t} \|_{1}^2).
    \]
\end{proof}

\begin{lemma}\label{lemma:u_correction}
    Assuming that $\| \nut\^{t} \|_\infty \leq 1$ is satisfied for all $t \in [T]$, we have
    \[
        \| \ut\^{t} - \ut\^{t - 1} \|_{\infty}^2 \leq 6 \|\nut\^t - \nut\^{t - 1}\|_{\infty}^2 + 4 \| \vx\^t - \vx\^{t-1} \|_1^2.
    \]
\end{lemma}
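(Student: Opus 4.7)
}

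The plan is to expand $\ut\^t - \ut\^{t-1}$ using the definition $\ut\^t = \nut\^t - \langle \nut\^t, \vx\^t\rangle \vec{1}_d$ from the algorithm description, and then control the scalar correction term via a telescoping decomposition combined with H\"older's inequality. The argument is purely algebraic; no use of the learning-rate control machinery is needed, only the facts that $\vx\^t \in \Delta^d$ (so $\|\vx\^t\|_1 = 1$) and $\|\nut\^t\|_\infty \leq 1$.

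First, I would write
\[
\ut\^t - \ut\^{t-1} = (\nut\^t - \nut\^{t-1}) - \bigl(\langle \nut\^t, \vx\^t\rangle - \langle \nut\^{t-1}, \vx\^{t-1}\rangle\bigr)\vec{1}_d,
\]
take $\ell_\infty$ norms, apply the triangle inequality, square, and use $(a+b)^2 \leq 2a^2 + 2b^2$ to obtain
\[
\|\ut\^t - \ut\^{t-1}\|_\infty^2 \leq 2\|\nut\^t - \nut\^{t-1}\|_\infty^2 + 2\bigl|\langle \nut\^t, \vx\^t\rangle - \langle \nut\^{t-1}, \vx\^{t-1}\rangle\bigr|^2.
\]

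Next, I would decompose the scalar difference as
\[
\langle \nut\^t, \vx\^t\rangle - \langle \nut\^{t-1}, \vx\^{t-1}\rangle = \langle \nut\^t - \nut\^{t-1}, \vx\^t\rangle + \langle \nut\^{t-1}, \vx\^t - \vx\^{t-1}\rangle,
\]
and bound each piece by H\"older's inequality. The first piece is at most $\|\nut\^t - \nut\^{t-1}\|_\infty \cdot \|\vx\^t\|_1 = \|\nut\^t - \nut\^{t-1}\|_\infty$, while the second is at most $\|\nut\^{t-1}\|_\infty \cdot \|\vx\^t - \vx\^{t-1}\|_1 \leq \|\vx\^t - \vx\^{t-1}\|_1$ by the boundedness hypothesis. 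Another application of $(a+b)^2 \leq 2a^2 + 2b^2$ then gives
\[
\bigl|\langle \nut\^t, \vx\^t\rangle - \langle \nut\^{t-1}, \vx\^{t-1}\rangle\bigr|^2 \leq 2\|\nut\^t - \nut\^{t-1}\|_\infty^2 + 2\|\vx\^t - \vx\^{t-1}\|_1^2.
\]
Substituting into the earlier bound yields the claimed $6\|\nut\^t - \nut\^{t-1}\|_\infty^2 + 4\|\vx\^t - \vx\^{t-1}\|_1^2$.

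There is no real obstacle here; the only subtlety is being mindful with the constants, since using $(a+b)^2 \leq 2a^2+2b^2$ twice (rather than, say, the sharper AM-GM with tuned weights) is precisely what produces the stated $(6,4)$ split. The constants are not tight, but they are the natural output of the two-step $\ell_\infty$/$\ell_1$ duality argument and are all that is needed downstream in the RVU analysis.
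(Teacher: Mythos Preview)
Your proposal is correct and essentially identical to the paper's proof: both expand $\ut\^t - \ut\^{t-1}$ via the definition, apply the triangle inequality and $(a+b)^2 \le 2a^2+2b^2$, then split the scalar difference $\langle \nut\^t, \vx\^t\rangle - \langle \nut\^{t-1}, \vx\^{t-1}\rangle$ into two pieces bounded by H\"older. The only cosmetic difference is which telescoping you choose for that scalar---the paper writes it as $\langle \nut\^t, \vx\^t - \vx\^{t-1}\rangle + \langle \nut\^t - \nut\^{t-1}, \vx\^{t-1}\rangle$ rather than your $\langle \nut\^t - \nut\^{t-1}, \vx\^t\rangle + \langle \nut\^{t-1}, \vx\^t - \vx\^{t-1}\rangle$---but since $\|\vx\^t\|_1 = \|\vx\^{t-1}\|_1 = 1$ and both $\nut\^t, \nut\^{t-1}$ are bounded, the resulting constants are the same.
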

\begin{proof}
    \[
        \hspace{-0.2cm} \| {\ut}\^{t} -   {\ut}\^{t - 1} \|_{\infty}^2 & = \| ({\nut}\^{t} - \langle \nut\^t, \vx\^t \rangle \vec{1} ) -   ({\nut}\^{t-1} - \langle \nut\^{t-1}, \vx\^{t-1} \rangle \vec{1} ) \|_{\infty}^2 \\
        & \leq \mleft(\|\nut\^t - \nut\^{t - 1}\|_{\infty} + \| \langle \nut\^t, \vx\^t \rangle \vec{1} - \langle \nut\^{t-1}, \vx\^{t-1} \rangle \vec{1} \|_{\infty} \mright)^2 \numberthis{eq:u_correction1} \\
        & =  \mleft(\|\nut\^t - \nut\^{t - 1}\|_{\infty} + \big| \langle \nut\^t, \vx\^t \rangle  - \langle \nut\^{t-1}, \vx\^{t-1} \rangle \big|\mright)^2 \\
        & \leq 2 \|\nut\^t - \nut\^{t - 1}\|_{\infty}^2 + 2 \big| \langle \nut\^t, \vx\^t \rangle  - \langle \nut\^{t-1}, \vx\^{t-1} \rangle \big|^2 \numberthis{eq:u_correction2} \\
        & \leq  2 \|\nut\^t - \nut\^{t - 1}\|_{\infty}^2 + 2 \big| \big(\langle \nut\^t, \vx\^t \rangle  - \langle \nut\^t, \vx\^{t-1} \rangle\big) + \big(\langle \nut\^t, \vx\^{t-1} \rangle - \langle \nut\^{t-1}, \vx\^{t-1} \rangle\big) \big|^2 \\
        & \leq  2 \|\nut\^t - \nut\^{t - 1}\|_{\infty}^2 + 4 \big|\langle \nut\^t, \vx\^t - \vx\^{t-1}\rangle\big|^2 + 4 \big| \langle \nut\^t - \nut\^{t-1}, \vx\^{t-1} \rangle \big|^2 \numberthis{eq:u_correction3} \\
        & \leq  2 \|\nut\^t - \nut\^{t - 1}\|_{\infty}^2 + 4 \| \vx\^t - \vx\^{t-1} \|_1^2 + 4 \| \nut\^t - \nut\^{t-1} \|_{\infty}^2 \numberthis{eq:u_correction4} \\
        & = 6 \|\nut\^t - \nut\^{t - 1}\|_{\infty}^2 + 4 \| \vx\^t - \vx\^{t-1} \|_1^2,
    \]
    where \Cref{eq:u_correction1} uses the triangle inequality, \Cref{eq:u_correction2,eq:u_correction3} apply Young’s inequality, and \Cref{eq:u_correction4} utilizes Hölder’s inequality.
\end{proof}

\begin{restate}{theorem:rvu}[RVU bound of \ours]
    Consider the cumulative regret $\tildereg\^T$ accrued by the internal \OFTRL algorithm up to time $T$. Assuming that $\|  {\nut}\^{t} \|_\infty \leq 1$ is satisfied for all $t \in [T]$, it follows that for any time $T \in \bbN$ and any learning rate $\eta \leq \frac{1}{50}$ and $\beta$ high enough ($\beta \geq 70$),
    \[
        \tildereg\^T \leq 3 + \frac{\alpha \log T + \log d}{\eta} + 6 \eta \sum_{t=1}^{T-1} \| \nut\^{t} - \nut\^{t-1} \|_{\infty}^2 - \frac{1}{24\eta} \sum_{t=1}^{T-1} \| \vx\^{t+1} - \vx\^{t}  \|_{1}^2.
    \]
\end{restate}
\begin{proof}
    For any choice of comparator $ {\vy} \in \Omega$, let $ {\vy}^\prime = \frac{T - 1}{T} {\vy} + \frac{1}{T}  {\vy}\^{1} \in \Omega$. Recall that $ {\vy}\^{1} = \argmin_{\vy \in \Omega} F_1(\vy) = \argmin_{\vy \in \Omega} \psi(\vy)$. By straightforward calculations,
    \[
        \sum_{t = 1}^T  \mleft\langle 
        {\vy} -  {\vy}\^{t},  {\ut}\^{t} \mright\rangle & = \sum_{t = 1}^T \langle  {\vy} -  {\vy}^\prime,  {\ut}\^{t} \rangle + \sum_{t = 1}^T \langle  {\vy}^\prime -  {\vy}\^{t},  {\ut}\^{t} \rangle           \\
        & = \frac{1}{T} \sum_{t = 1}^T \langle  {\vy} -  {\vy}\^{1},  {\ut}\^{t} \rangle + \sum_{t = 1}^T \langle  {\vy}^\prime -  {\vy}\^{t},  {\ut}\^{t} \rangle \\
        & \leq 2 + \sum_{t = 1}^T \langle  {\vy}^\prime -  {\vy}\^{t},  {\ut}\^{t} \rangle,
    \]
    where the last line follows because of Hölder's inequality and $\|  {\ut}\^{t} \|_\infty \leq 1$.
    In turn, we need to upperbound the $\sum_{t = 1}^T \langle  {\vy}^\prime -  {\vy}\^{t},  {\ut}\^{t} \rangle$ term. By \Cref{lemma:basi_OFTRL},
    \[
        \sum_{t = 1}^{T} \mleft \langle  {\vy}^\prime -  {\vy}\^{t},  {\ut}\^{t} \mright \rangle & \leq \underbrace{\psi( {\vy}^\prime) - \psi( {\vy}\^{1})}_{(\textup{I})} + \underbrace{\sum_{t = 1}^{T} \mleft \langle  {\vz}\^{t+1} -  {\vy}\^{t},  {\ut}\^{t} -   {\ut}\^{t - 1}\mright \rangle}_{(\textup{II})} \\
        & \qquad \underbrace{- \sum_{t = 1}^{T} \mleft( D_{\psi}( {\vy}\^{t} \,\big\|\,  {\vz}\^{t}) + D_{\psi}( {\vz}\^{t+1} \,\big\|\,  {\vy}\^{t}) \mright)}_{(\textup{III})}.
    \]
    For the term (I), after some calculations,
    \[
        (\textup{I}) & =  - \frac{1}{\eta} \alpha \log \Big(\sum_{\ind=1}^d  {\vy}^\prime [\ind]\Big) + \frac{1}{\eta} \frac{1}{\sum_{\ind=1}^d  {\vy}^\prime[\ind]} \sum_{\ind=1}^d  {\vy}^\prime[\ind] \log  {\vy}^\prime[\ind]                                       \\
        & \qquad + \frac{1}{\eta} \alpha \log \Big(\sum_{\ind=1}^d  {\vy}\^{1}[\ind]\Big) - \frac{1}{\eta} \frac{1}{\sum_{\ind=1}^d  {\vy}\^{1}[\ind]} \sum_{\ind=1}^d  {\vy}\^{1}[\ind] \log  {\vy}\^{1}[\ind]                                            \\
        & \leq - \frac{1}{\eta} \alpha \log \Bigg(\frac{\sum_{\ind=1}^d  {\vy}^\prime [\ind]}{\sum_{\ind=1}^d  {\vy}\^{1}[\ind]}\Bigg) + \frac{1}{\eta} \frac{1}{\sum_{\ind=1}^d  {{\vy}^\prime}[\ind]} \sum_{\ind=1}^d  {{\vy}^\prime}[\ind] \log  {{\vy}^\prime}[\ind] \\
        & \leq \frac{1}{\eta} (\alpha \log T + \log d)
    \]
    By Hölder's and Young's inequalities, we entail that term (II) is upper bounded by
    \[
        (\textup{II}) & \leq \sum_{t = 1}^{T} \mleft \langle  {\vz}\^{t+1} -  {\vy}\^{t},  {\ut}\^{t} -   {\ut}\^{t - 1}\mright \rangle \\
        & \leq \sum_{t = 1}^{T} \| {\vz}\^{t+1} -  {\vy}\^{t}\|_{1} \cdot \| {\ut}\^{t} -   {\ut}\^{t - 1} \|_{\infty}  \\
        & \leq \sum_{t = 1}^{T} \mleft( \frac{1}{4 \eta} \| {\vz}\^{t+1} -  {\vy}\^{t}\|_{1}^2 + \eta \| {\ut}\^{t} -   {\ut}\^{t - 1} \|_{\infty}^2 \mright) \\
        & \leq \sum_{t = 1}^{T} \mleft(\frac{1}{4 \eta} \| {\vz}\^{t+1} -  {\vy}\^{t}\|_{1}^2 + 6 \eta \|\nut\^t - \nut\^{t - 1}\|_{\infty}^2 + 4 \eta \| \vx\^t - \vx\^{t-1} \|_1^2 \mright),
    \]
    where we used \Cref{lemma:u_correction}. In turn, for term (III),
    \[
        (\textup{III}) & = - \frac{1}{2} \sum_{t = 1}^{T} \mleft( D_{\psi}( {\vy}\^{t} \,\big\|\,  {\vz}\^{t}) + D_{\psi}( {\vz}\^{t+1} \,\big\|\,  {\vy}\^{t}) \mright) - \frac{1}{2} \sum_{t = 1}^{T} \mleft( D_{\psi}( {\vy}\^{t} \,\big\|\,  {\vz}\^{t}) + D_{\psi}( {\vz}\^{t+1} \,\big\|\,  {\vy}\^{t}) \mright) \\
        & \leq  - \sum_{t = 1}^{T} \frac{1}{4 \eta} (\| \vy\^{t} - \vz\^{t} \|_{1}^2 + \|  {\vz}\^{t + 1} -  {\vy }\^{t} \|_{1}^2) - \sum_{t = 1}^{T-1} \frac{1}{20 \eta} (\| \vx\^{t+1} - \vtheta\^{t+1} \|_{1}^2 + \|  {\vtheta}\^{t + 1} -  {\vx }\^{t} \|_{1}^2) \numberthis{eq:termIII} \\
        & \leq  - \sum_{t = 1}^{T} \frac{1}{4 \eta} (\| \vy\^{t} - \vz\^{t} \|_{1}^2 + \|  {\vz}\^{t + 1} -  {\vy }\^{t} \|_{1}^2) - \sum_{t = 1}^{T-1} \frac{1}{20 \eta} \| \vx\^{t+1} - \vx\^{t} \|_{1}^2,
    \]
    where \Cref{eq:termIII} is obtained by applying \Cref{lemma:beta_terms,lemma:bregman_to_l1} and the last line is yielded by triangle inequality. Assembling the complete picture, 
    \[
        \textup{(II)} + \textup{(III)} & \leq \sum_{t=1}^T 6 \eta \| {\nut}\^{t} -  {\nut}\^{t - 1} \|_{\infty}^2 + \sum_{t=1}^T 4 \eta \| \vx\^t - \vx\^{t-1} \|_1^2 - \sum_{t = 1}^{T-1} \frac{1}{20 \eta} \| \vx\^{t+1} - \vx\^{t} \|_{1}^2 \\
        & \leq 24 \eta + \sum_{t=1}^{T-1} 6 \eta \| {\nut}\^{t} -  {\nut}\^{t - 1} \|_{\infty}^2 + 16 \eta - \sum_{t = 1}^{T-1} \big(\frac{1}{20 \eta} - 4 \eta \big) \| \vx\^{t+1} - \vx\^{t} \|_{1}^2 \\
        & \leq 1 + \sum_{t=1}^{T-1}  6 \eta  \| {\nut}\^{t} -  {\nut}\^{t - 1} \|_{\infty}^2 - \sum_{t = 1}^{T-1} \frac{1}{24 \eta} \| \vx\^{t+1} - \vx\^{t} \|_{1}^2.
    \]
    and this completes the proof.
\end{proof}

\subsubsection{Proofs for Main Results} \label{sec:main_results_proofs}

\begin{restate}{theorem:bound_on_path_length}[Bound on total path length]
    Under \Cref{assumption:bound}, if all the players follow \ours algorithm with learning rate $\eta \leq \min\{\frac{1}{50}, \frac{1}{12 \sqrt{2} L n} \}$, then
    \[
    \sum_{i = 1}^{n} \sum_{t = 1}^{T - 1} \|\vx\^{t+1}_i - \vx\^{t}_i \|_1^2 \leq 144 n \eta + 48 n (\alpha \log T + \log d).
    \]
\end{restate}
\begin{proof}
    \Cref{assumption:bound} implies that,
    \[
    \| \nut\^{t + 1}_i - \nut\^{t}_i \|_{\infty}^2 & \leq L^2 \mleft(\sum_{i = 1}^{n} \| \vx\^{t+1}_i - \vx\^{t}_i \|_1 \mright)^2 \\
    & \leq L^2 n \sum_{i = 1}^{n} \| \vx\^{t+1}_i - \vx\^{t}_i \|_1^2,
    \]
    where the last line is obtained by Jensen's inequality. Next, we combine this result with the RUV bound on $\tildereg\^T$ for the $i$th player (\Cref{theorem:rvu}),
    \[
    \tildereg\^T_i & \leq 3 + \frac{\alpha \log T + \log d}{\eta} + 6 \eta \sum_{t=1}^{T-1} \| \nut\^{t + 1}_i - \nut\^{t}_i \|_{\infty}^2 - \frac{1}{24\eta} \sum_{t=1}^{T-1} \| \vx\^{t+1}_i - \vx\^{t}_i  \|_{1}^2 \\
    & \leq 3 + \frac{\alpha \log T + \log d}{\eta} + (6 L^2 n)  \eta \sum_{j = 1}^{n} \sum_{t=1}^{T-1} \| \vx\^{t+1}_j - \vx\^{t}_j \|_1^2 - \frac{1}{24\eta} \sum_{t=1}^{T-1} \| \vx\^{t+1}_i - \vx\^{t}_i  \|_{1}^2.
    \]
    Summing over all the players $i \in [n]$,
    \[
    \sum_{i = 1}^{n}  \tildereg\^T_i & \leq 3 n +  n \frac{\alpha \log T + \log d}{\eta} + \mleft( 6 L^2 n^2 \eta - \frac{1}{24\eta} \mright) \sum_{j = 1}^{n} \sum_{t=1}^{T-1} \| \vx\^{t+1}_j - \vx\^{t}_j \|_1^2 \\
    & \leq 3 n +  n \frac{\alpha \log T + \log d}{\eta} - \frac{1}{48\eta} \sum_{j = 1}^{n} \sum_{t=1}^{T-1} \| \vx\^{t+1}_j - \vx\^{t}_j \|_1^2,
    \]
    since $\eta^2 \leq \frac{1}{288 L^2 n^2}$. Now, by recalling that $\tildereg\^T_i \geq 0$, we get,
    \[
    & 0 \leq 3 n +  n \frac{\alpha \log T + \log d}{\eta} - \frac{1}{48\eta} \sum_{j = 1}^{n} \sum_{t=1}^{T-1} \| \vx\^{t+1}_j - \vx\^{t}_j \|_1^2,
    \]
    implying
    \[
        \sum_{j = 1}^{n} \sum_{t=1}^{T-1} \| \vx\^{t+1}_j - \vx\^{t}_j \|_1^2 \leq 144 n \eta + 48 n (\alpha \log T + \log d).
    \]
\end{proof}

\begin{restate}{theorem:regret_bound}[Regret bound of \ours]
    Under \Cref{assumption:bound}, if all the players $i \in [n]$ follows \ours with learning rate $\eta = \min\{\frac{1}{50}, \frac{1}{12 \sqrt{2} L n} \}$, then the regret of each player $i \in [n]$ is bounded as,
    \[
    \reg_i\^T \leq 6 + \max\{50 + 12 \sqrt{2} L n, 24 \sqrt{2} L n\} (\alpha \log T + \log d),
    \]
    and the algorithm for each player $i \in [n]$ is adaptive to adversarial utilities, i.e., the regret that each player incurs is $\reg_i\^T = \Tilde{O}(\sqrt{T \log d})$.
\end{restate}
\begin{proof}
    Similar to the proof of \Cref{theorem:bound_on_path_length},
    \[
    \| \nut\^{t + 1}_i - \nut\^{t}_i \|_{\infty}^2 & \leq L^2 \mleft(\sum_{i = 1}^{n} \| \vx\^{t+1}_i - \vx\^{t}_i \|_1 \mright)^2 \\
    & \leq L^2 n \sum_{i = 1}^{n} \| \vx\^{t+1}_i - \vx\^{t}_i \|_1^2.
    \]
    Summing over $t$ from $1$ to $T-1$,
    \[
    \sum_{t = 1}^{T-1} \| \nut\^{t + 1}_i - \nut\^{t}_i \|_{\infty}^2 & \leq  L^2 n \sum_{t = 1}^{T-1} \sum_{i = 1}^{n} \| \vx\^{t+1}_i - \vx\^{t}_i \|_1^2, \\
    & \leq L^2 n \mleft( 144 n \eta + 48 n (\alpha \log T + \log d) \mright) \\
    & = 144 L^2 n^2 \eta + 48 L^2 n^2 (\alpha \log T + \log d) \numberthis{eq:check},
    \]
    where we leveraged \Cref{theorem:bound_on_path_length}. By \Cref{prop:reg+} and \Cref{theorem:rvu} we infer that,
    \[
    \reg_i\^T & \leq \tildereg_i\^T \\
    & \leq 3 + \frac{\alpha \log T + \log d}{\eta} + 6 \eta \sum_{t=1}^{T-1} \| \nut\^{t} - \nut\^{t-1} \|_{\infty}^2 - \frac{1}{24\eta} \sum_{t=1}^{T-1} \| \vx\^{t+1} - \vx\^{t}  \|_{1}^2 \\
    & \leq 3 + \frac{\alpha \log T + \log d}{\eta} + 6 \eta \sum_{t=1}^{T-1} \| \nut\^{t} - \nut\^{t-1} \|_{\infty}^2 \\
    & \leq 3 + \frac{\alpha \log T + \log d}{\eta} + 864 L^2 n^2 \eta^2 + 288 L^2 n^2 \eta (\alpha \log T + \log d) \numberthis{eq:regret_bound1} \\
    & \leq 6 + \frac{\alpha \log T + \log d}{\eta} + 12 \sqrt{2} L n (\alpha \log T + \log d) \\
    & \leq 6 + \max\{50, 12 \sqrt{2} L n\} (\alpha \log T + \log d) + 12 \sqrt{2} L n (\alpha \log T + \log d) \\
    & \leq 6 + \max\{50 + 12 \sqrt{2} L n, 24 \sqrt{2} L n\} (\alpha \log T + \log d),
    \]
    where \Cref{eq:regret_bound1} is due to \Cref{theorem:bound_on_path_length}, and the last lines are because $\eta = \min\{\frac{1}{50}, \frac{1}{\sqrt{288} L n} \}$.
    
    To prove the adversarial bound for each player $i \in [n]$, player $i$ simply check if there exists a time $t \in [T]$,  such that the  
    \[
    \sum_{\tau = 1}^{t-1} \| \nut\^{\tau + 1}_i - \nut\^{\tau}_i \|_{\infty}^2 & >  144 L^2 n^2 \eta + 48 L^2 n^2 (\alpha \log t + \log d),
    \]
    and if noticed that, start to switch to any no-regret learning algorithm, e.g., \Hedge \citep{Cesa-Bianchi06:Prediction} and get $O(\sqrt{T \log d})$ regret. The argument is based on the fact that if all the players follow the \ours dynamics, then \Cref{eq:check} should be satisfied.
\end{proof}

\section{Conclusion}

We introduced an uncoupled online learning algorithm that achieves near-constant regret of $O(n \log^2 d \log T)$ in multi-player general-sum games. This significantly improves upon the $O(d \log T)$ regret achieved by Log-Regularized Lifted Optimistic FTRL, exponentially reducing the dependence on the number of actions $d$~\citep{farina2022near}. Furthermore, our algorithm reduces the dependence on the number of iterations $T$ from $O(\log^4 T)$ in the Optimistic Hedge algorithm to $O(\log T)$, improving upon the regret bound of $O(n \log d \log^4 T)$~\citep{daskalakis2021near}. At the heart of these improvements lies a dynamic, nonmonotonic pacing of the learning rate. Specifically, players slow down their learning when their regret becomes too negative—that is, when they are significantly outperforming all fixed actions.

While our algorithm achieves near-constant regret guarantees, it remains an interesting open question whether constant regret is achievable for regularized learning in general games. Another natural direction for future research is to explore how our ideas can be applied more broadly across regularized learning algorithms. Our adaptive learning rate framework may be fruitfully combined with other FTRL-based or OMD-based methods—beyond Optimistic Multiplicative Weight Updates—to push the boundaries of performance and potentially provide a unified perspective on accelerated no-regret learning in games.

Additionally, dynamic learning rate ideas could prove valuable in minimizing stronger notions of regret, such as swap regret, within game-theoretic settings. Beyond regret minimization, a particularly compelling challenge lies in understanding the \emph{day-to-day} dynamics of learning with adaptive pacing in structured games—offering a finer-grained view of convergence behavior and opening the door to new theoretical insights and practical strategies.

More broadly, this work contributes to a shift in multi-agent learning: rather than relying on prespecified schedules for learning rates (e.g., fixed or monotonically decreasing steps such as $1/\sqrt{t}$), we advocate for dynamically adaptive learning rates that respond to real-time performance. Although step-size tuning is widely recognized as a critical component of single-agent learning—particularly in neural network training (see, e.g.,~\citet{bengio2012practical})—such considerations have received far less attention in multi-agent and game-theoretic settings.

We hope that our work stimulates further discussion and research at this intersection. In particular, we believe that developing game-aware adaptive schemes opens up a rich and open-ended research direction—one that bridges online learning, control theory, dynamical systems, and game theory, and may ultimately lead to fundamentally new learning dynamics tailored to strategic multi-agent environments.

\section{Acknowledgments}

The authors appreciate Patrick Jaillet for his insightful comments and valuable suggestions. A.S. was partially supported by the National Research Foundation Singapore and DSO National Laboratories under the AI Singapore Programme AISG Award No: AISG2-RP-2020-018, and by the Office of Naval Research (ONR) grant N00014-24-1-2470. G.F. acknowledges the support of NSF Award CCF-244306.

\printbibliography

\newpage 

\appendix

\end{document}